\def\doubleunderline#1{\underline{\underline{#1}}}
\renewcommand{\ALG@beginalgorithmic}{\footnotesize}
\newcommand{\highlightgreen}[1]{%
\colorbox{LimeGreen!50}{$\displaystyle#1$}}
\newcommand{\vast}{\bBigg@{4}}
\newcommand{\Vast}{\bBigg@{7}}
\journal{Journal of Computational Physics}
\begin{document}

\begin{frontmatter}

\title{A conservative diffuse-interface method for compressible\\ two-phase flows}


\author{Suhas S. Jain\corref{cor1}}
\ead{sjsuresh@stanford.edu}
\cortext[cor1]{Corresponding author}
\author{Ali Mani}
\ead{alimani@stanford.edu}
\author{Parviz Moin}
\ead{moin@stanford.edu}
\address{Center for Turbulence Research, Stanford University, California, USA 94305}

\begin{abstract}
In this article, we propose a novel conservative diffuse-interface method for the simulation of immiscible compressible two-phase flows. The proposed method discretely conserves the mass of each phase, momentum and total energy of the system. We use the baseline five-equation model and propose interface-regularization (diffusion\textendash sharpening) terms in such a way that the resulting model maintains the conservative property of the underlying baseline model; and lets us use a central-difference scheme for the discretization of all the operators in the model, which leads to a non-dissipative implementation that is crucial for the simulation of turbulent flows and acoustics. Furthermore, the provable strengths of the proposed model are: (a) the model maintains the boundedness property of the volume fraction field, which is a physical realizability requirement for the simulation of two-phase flows, (b) the proposed model is such that the transport of volume fraction field inherently satisfies the total-variation-diminishing property without having to add any flux limiters that destroy the non-dissipative nature of the scheme, (c) the proposed interface-regularization terms in the model do not spuriously contribute to the kinetic energy of the system and therefore do not affect the non-linear stability of the numerical simulation, and (d) the model is consistent with the second law of thermodynamics. Finally, we present numerical simulations using the model and assess (a) the accuracy of evolution of the interface shape, (b) implementation of surface tension effects, (c) propagation of acoustics and their interaction with material interfaces, (d) the accuracy and robustness of the numerical scheme for simulation of complex high-Reynolds-number flows, and (e) performance and scalability of the method. 
\end{abstract}

\begin{keyword}
phase-field method \sep compressible flows \sep two-phase flows \sep conservative schemes \sep non-dissipative schemes \sep acoustics


\end{keyword}

\end{frontmatter}



\section{Introduction}



Compressible two-phase flows are ubiquitous in nature and are of engineering interest. The applications of compressible two-phase flows span a wide range of areas including bubble acoustics, liquid fuel interaction with gaseous oxidizer in high-pressure environments including supercritical flow regimes, shock-bubble interactions, cavitation, etc. The focus of the article is to present a novel diffuse-interface method for the simulation of compressible two-phase flows, where the interface between two compressible fluid media is resolved on an Eulerian grid. The application of this method for the study of bubble acoustics will be briefly discussed below.   

One of the primary applications of two-phase flows with compressible phases is the study of underwater bubble acoustics. Prediction of bubble dynamics in turbulent seawater is of practical importance for the engineering analysis of naval systems. In ships, the air bubbles entrained by boundary layers and stern waves form an elongated wake that lasts for several kilometers downstream \citep{trevorrow1994acoustical,fu2007,stanic2009attenuation}. Though the bubbles are tiny, with diameters of order $1\ \mathrm{mm}$ or less, they exhibit strong acoustic responses. Hence, the bubbly wake can be detected acoustically, which reveals the presence and position of the ship. The predictive modeling of bubble distributions in wakes, along with their acoustic response, has remained elusive and mostly confined to Reynolds-averaged Navier-Stokes (RANS) analyses because of the multiscale nature of the problem and the computational challenges associated with scalability and performance \citep{carrica1999polydisperse,culver2007measuring}. Hence, the current study is focused on developing a conservative numerical method that enables accurate treatment of the interaction of acoustics with gas\textendash liquid interfaces (single and multiple bubbles) in compressible turbulent flow environments. This aids in investigating the current limitations and in developing subgrid-scale models based on the Rayleigh-Plesset or Keller-Miksis equations used in RANS and large-eddy simulations (LES). 



In compressible flows, thermodynamics plays an important role and adds more difficulty to an already complex problem of two-phase flows, by imposing an additional requirement that the model should maintain thermodynamic consistency at the interface. Moreover, numerical study of turbulent flows and acoustics requires stable, non-dissipative, and conservative numerical methods. The state-of-the-art techniques to simulate compressible two-phase flows lack many of these features. With this motivation, we have developed a diffuse-interface five-equation model for the simulation of two immiscible compressible fluids that has all the above favorable properties.

Compressible two-phase flows have been extensively studied for the last two decades \citep{saurel2018diffuse}, predominantly using diffuse-interface methods. Consider a close-up view of the molecular picture of the interface between two immiscible phases, schematically shown in Figure \ref{fig:molecule} (a), where the denser fluid is shown in green and the lighter fluid is shown in red. If the phases are volume averaged, we obtain the volumetric representation (a continuum picture) of the interface between two fluids that is also shown in Figure \ref{fig:molecule} (a). Typically, the thickness of these interfaces is on the order of  few nanometers. Therefore, for problems that are of engineering interest, the interface between two fluids can be regarded as mostly sharp because of the inherent scale separation between the interface thickness and the characteristic scales of the flow prevalent in the problem. However, a diffuse-interface method is a computational model where the physical sharp interface is artificially made thick\textemdash on the order of grid-cell size\textemdash so that the gradients at the material interface can be resolved on an Eulerian grid, as illustrated in Figure \ref{fig:molecule} (b), similar to the thickened-flame model for premixed combustion \citep{poinsot2005theoretical}. This has huge implications on the choice of numerical methods used to represent the interface, numerical stability and accuracy of the numerical simulation and hence the diffuse-interface methods have been the focus of study for over two decades. On the other hand, the interface represented using a diffuse-interface method can also be interpreted as a time-averaged interface in a turbulent flow. However, this would require additional modeling to account for the subgrid contribution, similar to the widely-used two-fluid modeling approaches \citep{ishii1984two}. 


\begin{figure}
    \centering
    \includegraphics[width=0.65\textwidth]{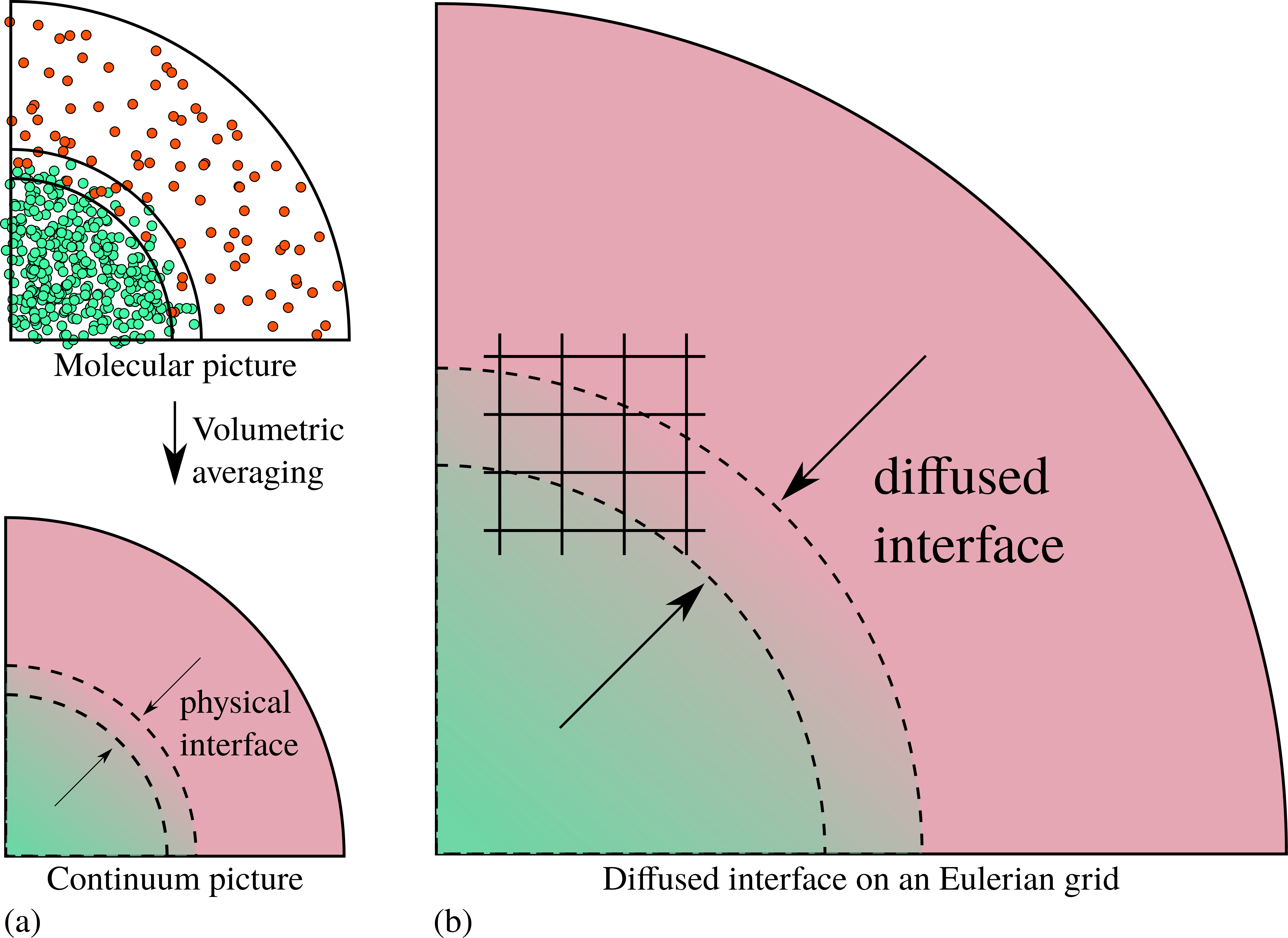}
    \caption{(a) A close-up view of the molecular and continuum representation of the interface between two fluids. The molecules are represented by different colors for two fluids. (b) A diffused interface on an Eulerian grid. The grid is represented by square boxes.}
    \label{fig:molecule}
\end{figure}

Different diffuse-interface models for the simulation of compressible two-phase flows present in the literature can be broadly classified into four major types: (a) The five-equation model \citep{kapila2001two} solves two mass balance equations\textemdash one for each of the phases\textemdash a momentum equation, a total energy equation, and a volume fraction advection equation. This is the model that is most suitable for the simulation of two-phase non-reacting flows with immiscible fluids. More on this model will be described in Section \ref{sec:model}. (b) The six-equation model \citep{yeom2013modified} is similar to the five-equation model but solves two energy equations, one for each of the phases. (c) The seven-equation model \citep{baer1986two} solves two momentum equations and two energy equations and has two separate velocity fields for each of the phases. This is the most general of all the models since it includes non-equilibrium effects such as phase change and mass transfer. (d) The four-equation model \citep{abgrall1996prevent} has no separate mass balance equations for each of the phases; instead, it solves a continuity equation, hence conserving only the total mass and not the individual mass of each phase. The volume fraction advection equation has also been replaced by a transport equation for the polytropic coefficient in this model.

The seven-equation model was first proposed by \citet{baer1986two} for the simulation of detonation-to-deflagration transition in reactive granular materials and was later used by \citet{sainsaulieu1995finite} to simulate two-phase flows using an approximate Roe-type Riemann solver. \citet{abgrall1996prevent} proposed the first four-equation model to simulate two ideal gases using Roe's Riemann solver and derived an interface-equilibrium condition (IEC) to eliminate the spurious pressure oscillations that were observed at the interface. More on the IEC can be found in Sections \ref{sec:model} and \ref{sec:iec}. Further, \citet{saurel1999simple} extended this four-equation model with IEC for the stiffened-gas equation of state (EOS) and also derived the IEC for seven-equation model \citep{saurel1999multiphase}. A more recent use of the four-equation model can be seen in \citet{johnsen2012preventing} and \citet{movahed2013solution}, where a weighted essentially non-oscillatory (WENO) scheme was used at the interfaces and shocks, and a high-order central-difference scheme was used away from these; and was used to simulate Richtmyer$\textit{-}$Meshkov instability.

The five-equation model was first proposed by \citet{kapila2001two} for the simulation of detonation-to-deflagration transition in granular materials and was later used by \citet{allaire2002five} to simulate two-phase flows. They also proposed the isobaric closure law that mimics the IEC for five-equation model and showed that the model can simulate two fluids with arbitrary EOSs. Further, \citet{perigaud2005compressible} extended this model to include capillary and viscous effects. More recently, \citet{shukla2010interface} and \citet{tiwari2013diffuse} proposed interface-regularization (diffusion\textendash sharpening) terms that keep the interface sharp for long-time integrations, thus increasing the accuracy of the simulation. The five-equation model has also been implemented on unstructured grids by \citet{chiapolino2017sharpening}. For various other extensions and modifications to the five-equation model, refer to the articles by \citet{so2012anti,ansari2013numerical,shukla2014nonlinear,coralic2014finite,wong2017high} and \citet{garrick2017finite}; therefore, the five-equation model is by far the most widely used model of all the diffuse-interface models for the simulation of compressible two-phase flows.

Attempts to simulate compressible two-phase flows have also been made using sharp-interface methods; see \citet{jemison2014compressible} for the moment-of-fluid approach, \citet{kannan2018cell} for a geometric volume-of-fluid approach, \citet{huber2015time,bai2017sharp} and \citet{fu2017single} for a level-set method, and \citet{he2017characteristic} for an algebraic volume-of-fluid approach. Although, sharp-interface methods are more accurate in representing the shape of the interface than diffuse-interface methods, they are also more expensive. Moreover, the expensive function evaluation of the sharp-interface methods is localized at the interface, which results in load-balancing and parallel scalability issues. When it comes to compressible flows, diffuse-interface methods have an obvious advantage over sharp-interface methods. The volume of each phase is inherently not conserved in compressible flows; hence, the expensive interface reconstruction and the geometric advection step in sharp-interface methods to achieve discrete volume conservation are less useful. Moreover, to achieve mass conservation of each phase in a sharp-interface volume-tracking method, a cell integrated semi-Lagrangian geometric method needs to be used for advection such that the mass flux is consistent with the volume flux calculated from the piecewise-linear reconstructed interface. Whereas, depending on the choice of the model, a mass balance equation in each phase can be solved in a diffuse-interface method to discretely conserve the mass of each phase. For these reasons, in the current study, we choose to use a diffuse-interface method over a sharp-interface method. For a more detailed comparison between sharp-interface and diffuse-interface methods, see \citet{Mirjalili2017}.

In summary, a five-equation model appears to be a suitable choice of diffuse-interface model for the simulation of compressible two-phase flows with immiscible fluids. Some of the limitations in the current state-of-the-art methods are as follows: (a) The study of acoustics and turbulent flows requires non-dissipative methods, but to the best of our knowledge, there is no previous implementation of compressible two-phase flows that is fully non-dissipative. (b) All the interface-regularization (diffusion\textendash sharpening) terms used along with the five-equation model are in non-conservative form \citep{shukla2010interface,tiwari2013diffuse, garrick2017finite}, and the conservative form of the regularization terms is currently considered to be unstable. 

In this paper, we present a novel diffuse-interface model that (a) can be solved using non-dissipative numerical methods (central-difference schemes) that are crucial for the simulation of turbulence and acoustics, (b) discretely conserves mass of each phase, total momentum, and total energy in the system, (c) maintains mechanical equilibrium and thermodynamic equilibrium across the interface (no spurious behavior in velocity and pressure fields), (d) maintains a steady interface thickness throughout the simulation, (e) maintains boundedness of the volume fraction field, which is a physical realizability requirement for the simulation of two-phase flows, and (f) maintains total-variation-diminishing (TVD) property of the volume fraction field without having to add any flux limiters that destroy the non-dissipative nature of the underlying central-difference scheme. In this paper, we present the model for shock-free compressible regions, but shocks in a high-Mach number regime can potentially be handled with the implementation of the localized artificial bulk viscosity approach \citep{mani2009suitability,kawai2010assessment}.

This paper is divided into 12 sections. Section \ref{sec:model} presents the diffuse-interface formalism and the proposed new model; Section \ref{sec:volume} presents the derivation of the volume-fraction equation and the proof of boundedness and TVD properties; Sections \ref{sec:mass}, \ref{sec:momentum} and \ref{sec:energy} present the derivation of mass, momentum and energy equations, respectively; Section \ref{sec:iec} presents the proof of the IEC condition; Section \ref{sec:finalmodel} presents the final model in its full form including the surface tension and gravity terms; and Sections \ref{sec:implement} and \ref{sec:results} present the numerical implementation and simulation results, respectively, followed by the summary of results and findings in Section \ref{sec:summary}, and concluding remarks in Section \ref{sec:conclude}.

\section{Governing equations and diffuse-interface formalism \label{sec:model}}

We start with the well-known inviscid five-equation model of \citet{allaire2002five}. This form of the model has a volume fraction advection equation [Eq. (\ref{eq:volumefraction})], a mass balance equation for each of the phases $l$ [Eq. (\ref{eq:mass})], a momentum equation [Eq. (\ref{eq:mom})], and a total energy equation [Eq. (\ref{eq:energy})]. 

\begin{equation}
\frac{\partial \phi_1}{\partial t} + \vec{u}\cdot\vec{\nabla}\phi_1 = 0,
\label{eq:volumefraction}
\end{equation}
\begin{equation}
\centering
\frac{\partial \rho_l \phi_l}{\partial t} + \vec{\nabla}\cdot(\rho_l\vec{u}\phi_l) = 0, \hspace{0.5cm} l=1,2,
\label{eq:mass}
\end{equation}
\begin{equation}
\frac{\partial \rho\vec{u}}{\partial t} + \vec{\nabla}\cdot(\rho \vec{u} \otimes \vec{u} 
+ p \mathds{1}) = 0,
\label{eq:mom}
\end{equation}
and
\begin{equation}
\frac{\partial \rho (e+k)}{\partial t} + \vec{\nabla}\cdot(\rho H \vec{u}) = 0,
\label{eq:energy}
\end{equation}
where $\phi_l$ is the volume fraction of phase $l$ that satisfies the condition $\sum_{l=1}^2 \phi_l=1$, $\rho_l$ is the density of phase $l$, $\rho$ is the total density defined as $\rho=\sum_{l=1}^2\rho_l\phi_l$, $\vec{u}$ is the velocity, $p$ is the pressure, $e$ is the specific mixture internal energy, which can be related to the specific internal energy of phase $l$, $e_l$, as $e=\sum_{l=1}^2 \rho_le_l$, $k=\frac{1}{2}u_iu_i$ is the specific kinetic energy, and $H=e+k+p/\rho$ is the specific total enthalpy of the mixture. 

\citet{allaire2002five} showed that when this system is solved along with an isobaric closure law at the interface, one can achieve mechanical and thermodynamic equilibrium (Postulate \ref{post:iec}) at the interface that results in stable numerical solutions and eliminates spurious oscillations at the interface. 

\newtheorem{post}{Postulate}[section]

\begin{post}
If $u^k_i=u_0$ and $p^k_i = p_0$ across the interface, where $k$ is the time-step index and $i$ is the grid index, any model or a numerical scheme that satisfies $u^{k+1}_i=u_0$ and $p^{k+1}_i=p_0$, $\forall i$, is said to satisfy the interface-equilibrium condition (IEC) \citep{abgrall1996prevent}.
\label{post:iec}
\end{post}

It is generally known that, in a classical diffuse-interface method, the interface thickness increases with simulation time due to the use of dissipative numerical schemes that are adopted to stabilize the method, reducing the overall accuracy of the solution for long-time integrations. Hence, \citet{shukla2010interface} and \citet{tiwari2013diffuse} proposed interface-regularization (diffusion\textendash sharpening) terms to counter this thickening of the interface. However, their regularization terms are in non-conservative form, and they argued that their conservative form of the interface-regularization terms results in tangential fluxes, which leads to unphysical interface deformations. 

In the current work, we propose a new set of interface-regularization (diffusion\textendash sharpening) terms that are in conservative form and show that the numerical solution is stable for long-time integrations. We propose a model of the form given in Eqs. (\ref{eq:mod_volumefraction})\textendash(\ref{eq:mod_energy}) along with the viscous terms, where the highlighted terms are the newly introduced interface-regularization terms. Equation (\ref{eq:mod_volumefraction}) represents the modified volume fraction advection equation, Eq. (\ref{eq:mod_mass}) represents the modified mass balance equation for phase $l$, Eq. (\ref{eq:mod_mom}) represents the modified momentum equation, and Eq. (\ref{eq:mod_energy}) represents the modified total energy equation. If a general equation of state (EOS) for phase $l$ is written as $p_l=\alpha_l\rho_le_l + \beta_l$, where $\alpha_l$ and $\beta_l$ are constants, then by invoking the isobaric closure law for pressure in the mixture region ($p = p_1 = p_2$), the generalized mixture EOS can be written as in Eq. (\ref{eq:pressure}).  

\begin{equation}
\frac{\partial \phi_1}{\partial t} + \vec{\nabla}\cdot(\vec{u}\phi_1) = \phi_1(\vec{\nabla}\cdot\vec{u})+\highlightgreen{\vec{\nabla}\cdot\vec{a_1}},
\label{eq:mod_volumefraction}
\end{equation}
\begin{equation}
\frac{\partial m_l}{\partial t} + \vec{\nabla}\cdot(\vec{u} m_l) = \highlightgreen{\vec{\nabla}\cdot \vec{R}_l}, \hspace{0.5cm} l=1,2,
\label{eq:mod_mass}
\end{equation}
\begin{equation}
\frac{\partial \rho\vec{u}}{\partial t} + \vec{\nabla}\cdot(\rho \vec{u} \otimes \vec{u} + p \mathds{1}) = \vec{\nabla}\cdot\doubleunderline\tau + \highlightgreen{\vec{\nabla}\cdot(\vec{f}\otimes\vec{u})},
\label{eq:mod_mom}
\end{equation}
\begin{equation}
\frac{\partial E}{\partial t} + \vec{\nabla}\cdot(\vec{u} E) + \vec{\nabla}\cdot(p\vec{u}) = \vec{\nabla}\cdot(\doubleunderline\tau\cdot\vec{u}) + \vec{\nabla}\cdot(\lambda \nabla T) + \highlightgreen{\vec{\nabla}\cdot(\vec{f}k)} + \highlightgreen{\sum_{l=1}^2 \vec{\nabla}\cdot{(\rho_l h_l \vec{a}_l )}},
\label{eq:mod_energy}
\end{equation}
and
\begin{equation}
p = \frac{\rho e + \left\{\frac{\phi_1 \beta_1}{\alpha_1} + \frac{(1 -\phi_1) \beta_2}{\alpha_2}\right\}} {\Big( \frac{\phi_1}{\alpha_1} + \frac{1- \phi_1}{\alpha_2}\Big)}.
\label{eq:pressure}
\end{equation}
In Eqs. (\ref{eq:mod_volumefraction})\textendash(\ref{eq:mod_energy}), $\vec{a}_1=\vec{a}(\phi_1)=\Gamma\{\epsilon\vec{\nabla}\phi_1 - \phi_1(1 - \phi_1)\vec{n}_1\}$ is the flux of the interface-regularization term for phase $1$, and it satisfies the condition $\vec{a}(\phi_1)=-\vec{a}(\phi_2)$; $\vec{n}_1=\vec{\nabla}\phi_1/|\vec{\nabla}\phi_1|$ is the normal of the interface for phase $1$; and $\Gamma$ and $\epsilon$ are the interface parameters, where $\Gamma$ represents an artificial regularization velocity scale and $\epsilon$ represents an interface thickness scale (see Section \ref{sec:volume} for a discussion on the choice of these parameters). $\vec{R}_l=\rho_{0l}\vec{a}_l$ is the flux of the regularization term in the mass balance equation for phase $l$, where $\rho_{0l}$ is the characteristic density representing phase $l$ (see Section \ref{sec:mass}), $\vec{f}=\sum_{l=1}^2 \vec{R}_l$ is the net mass regularization flux, $m_l=\rho_l\phi_l$ is the mass per unit total volume for phase $l$, and $\rho=\sum_{l=1}^2 m_l$ is the total density of the mixture. In Eq. (\ref{eq:mod_mass}), $m_l$ is written instead of $\rho_l\phi_l$ only to show that $m_l$ is the variable being solved and not $\rho_l$ (see Section \ref{sec:mass}). Invoking Stokes' hypothesis, the Cauchy stress tensor is written as $\doubleunderline\tau = 2\mu\mathbb{D} - 2\mu(\vec{\nabla}\cdot\vec{u})\mathds{1}/3$, where $\mu$ is the dynamic viscosity of the mixture evaluated using the one-fluid mixture rule as $\mu=\sum_{l=1}^2 \phi_l \mu_l$, $\mathbb{D}=\{(\vec{\nabla}\vec{u}) + (\vec{\nabla}\vec{u})^T\}/2$ is the strain-rate tensor, $E=\rho(e+k)$ is the total energy per unit volume, $\lambda$ is the thermal conductivity of the mixture, and $T$ is the temperature. In this study, it is assumed that $\lambda=0$, and therefore, the thermal conduction term is dropped in the rest of the paper. However, one needs to include the thermal conduction effects for simulating boiling flows \citep{saurel2018diffuse}. If each of the phases is assumed to follow a stiffened-gas EOS, then the constants in the EOS can be written as $\alpha=\gamma - 1$ and $\beta = -\gamma \pi$, where $\gamma$ is the polytropic coefficient and $\pi$ is the reference pressure. Values of $\gamma$ and $\pi$ are experimentally determined, and the values used in this work are listed in Table \ref{tab:prop}. Then, the speed of sound $c_l$ for phase $l$ can be written as
\begin{equation}
    c_l=\sqrt{\gamma_l\Big(\frac{p + \pi_l}{\rho_l}\Big)}.
\end{equation}

In Eq. (\ref{eq:mod_energy}), $h_l=e_l+p/\rho_l$ represents the specific enthalpy of the phase $l$ and can be expressed in terms of $\rho_l$ and $p$ using the stiffened-gas EOS as
\begin{equation}
    h_l=\frac{(p + \pi_l)\gamma_l}{\rho_l(\gamma_l - 1)}.
    \label{eq:enthalpy}
\end{equation}
All the newly added terms are in conservative form. Hence, the mass of each phase, momentum, and total energy are discretely conserved in the simulation irrespective of the choice of the numerical scheme. Moreover, we choose to use a second-order central-difference scheme for all the discretizations in this study since low-order central-difference schemes are known to have some advantages for the simulation of turbulent flows \citep{moin2016suitability} due to their (a) non-dissipative nature, (b) low aliasing error, (c) ease of boundary treatment, (d) low cost, and (e) improved stability. The non-dissipative nature of these schemes is also crucial for the resolved simulation of acoustics.  

Further, a systematic derivation of the newly introduced regularization terms, along with the associated mathematical proofs, is described in the subsequent sections.

\section{Volume fraction advection equation \label{sec:volume}}

Denoting the volume fraction of phase $1$ as $\phi$, then the volume fraction advection equation in Eq. (\ref{eq:mod_volumefraction}) can be written as

\begin{equation}
\frac{\partial \phi}{\partial t} + \vec{\nabla}\cdot(\vec{u}\phi) = \phi(\vec{\nabla}\cdot\vec{u})+\vec{\nabla}\cdot\left[\Gamma\left\{\epsilon\vec{\nabla}\phi - \phi(1 - \phi)\vec{n}\right\}\right].
\label{eq:phi}
\end{equation}
This equation is obtained by combining Eq. (\ref{eq:volumefraction}) and the reinitialization step of the conservative level-set method by \citet{olsson2005conservative} and \citet{olsson2007conservative}, and is also an extension of the incompressible version of the conservative diffuse-interface method introduced by \citet{chiu2011conservative} and \citet{Mirjalili2020}. One can show that Eq. (\ref{eq:phi}) also governs the advection of the volume fraction for phase $2$; i.e., $\phi_2=1 - \phi$ also satisfies Eq. (\ref{eq:phi}). Hence, both phases $1$ and $2$ are consistently transported.  

\subsection{Proof of boundedness of $\phi$ \label{sec:bound}}

Since we choose to use a central-difference scheme to discretize all the equations in our model because of its well-known desirable properties, described in Section \ref{sec:model}, we could potentially encounter overshoots and undershoots in the $\phi$ field due to dispersion errors. Hence, one needs to pick the values of the free parameters $\Gamma$ and $\epsilon$ such that $\phi$ is maintained between $0$ and $1$. 

\citet{Mirjalili2020} showed that there exists a crossover line in the $\Gamma$-$\epsilon$ parameter space above which the boundedness of $\phi$ is guaranteed for an incompressible flow. We extend this analysis to show that the same criterion (Figure \ref{fig:bounded}) is sufficient to maintain the boundedness of the $\phi$ field in a compressible flow, provided that the time-step restriction given in Eq. (\ref{eq:boundtime}) below for a one-dimensional setting, and Eq. (\ref{eq:3Dboundtime}) for a three-dimensional setting, is satisfied (Theorem \ref{theorem:bound}). 

\begin{figure}
    \centering
    \includegraphics[width=0.5\textwidth]{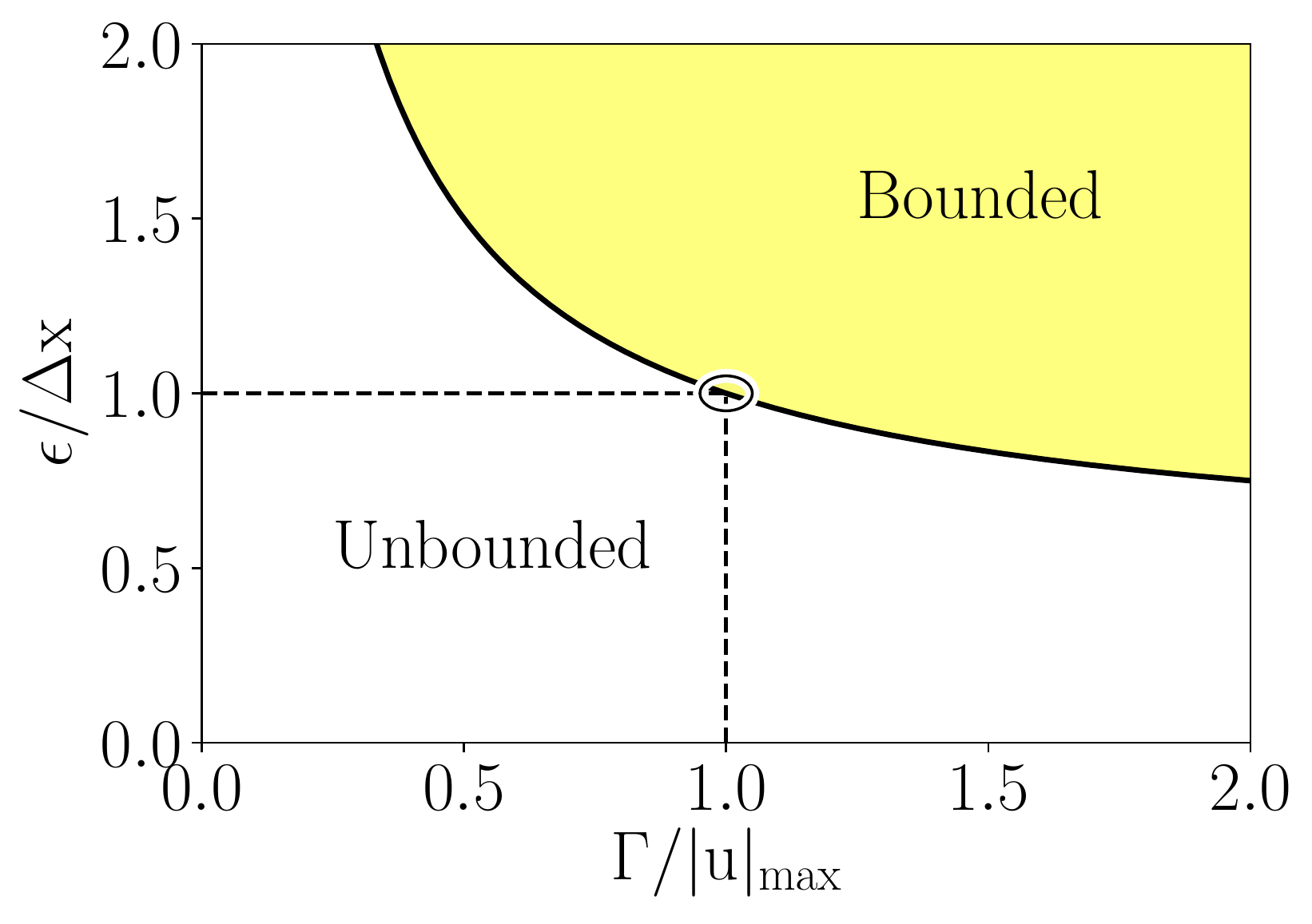}
    \caption{Region of boundedness as given by Eq. (\ref{eq:crossover}). The dashed lines represent the most optimum choice of $\Gamma$ and $\epsilon$ (see Section \ref{sec:parameters}).}
    \label{fig:bounded}
\end{figure}

\newtheorem{theorem}{Theorem}[section]

\begin{theorem}
On a uniform one-dimensional grid, if $0\le\phi^k_i\le1$ is satisfied for $k=0$, then $0\le\phi^k_i\le1$ holds $\forall k$, where $k$ is the time-step index and $i$ is the grid index, provided 
\begin{equation}
    \frac{\epsilon}{\Delta x} \ge \frac{\left(\frac{\vert u\vert_{max}}{\Gamma} + 1\right)}{2}
    \label{eq:crossover}
\end{equation}
and
\begin{equation}
      \Delta t \le \min_i\left[\frac{1}{\max\left\{{\Big(\frac{2\Gamma \epsilon}{\Delta x^2}\Big) - \Big(\frac{u_{i+1}^k - u_{i-1}^k}{\Delta x}\Big), 0}\right\}}\right],
    \label{eq:boundtime}
\end{equation}
are satisfied, where $\Delta x$ is the grid-cell size, $\Delta t$ is the time-step size, and $|u|_{max}$ is the maximum velocity in the domain.

\label{theorem:bound}
\end{theorem}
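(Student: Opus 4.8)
The plan is to prove the claim by induction on the time index $k$, showing that one explicit (forward-Euler) step of the paper's central-difference discretization maps $[0,1]$-valued data to $[0,1]$-valued data. First I would reduce Eq.~(\ref{eq:phi}): in one dimension the advective divergence $\partial_x(u\phi)$ cancels the compressibility source $\phi\,\partial_x u$ at the continuous level, leaving $\partial_t\phi + u\,\partial_x\phi = \partial_x a$ with $a=\Gamma\{\epsilon\partial_x\phi-\phi(1-\phi)n\}$; discretely, however, writing the convective flux in conservative form while evaluating $\phi(\vec{\nabla}\cdot\vec{u})$ pointwise leaves a residual dilatation contribution on the diagonal, and this is exactly what will generate the time-step restriction. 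The central step is then to write $\phi_i^{k+1}=c_{i-1}\phi_{i-1}^k+c_i\phi_i^k+c_{i+1}\phi_{i+1}^k$ as a state-dependent combination of the three stencil values and to prove that, under (\ref{eq:crossover}) and (\ref{eq:boundtime}), this is a \emph{convex} combination (nonnegative weights summing to one); boundedness of $\phi^{k+1}$ then follows at once from boundedness of $\phi^{k}$.

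To compute the weights I would treat the three flux groups separately. Combining the conservative advective flux with the compressibility source so that they telescope keeps constants invariant and contributes off-diagonal weights of magnitude $\tfrac12|u_{i\pm1}|\le\tfrac12|u|_{max}$. The diffusive flux $\Gamma\epsilon\,\partial_x\phi$ gives the standard three-point Laplacian weights $\Gamma\epsilon\Delta t/\Delta x^2$ off-diagonal and $-2\Gamma\epsilon\Delta t/\Delta x^2$ on the diagonal. For the compressive sharpening flux the clean device is the exact factorization $g(a)-g(b)=(a-b)(1-a-b)$ for $g(\phi)=\phi(1-\phi)$: with face values $\bar\phi_{i\pm1/2}$ this turns the flux difference into $\tfrac12(\phi_{i+1}-\phi_{i-1})\,(1-\phi_i-\tfrac12(\phi_{i+1}+\phi_{i-1}))$, i.e. an antisymmetric pair of weights of magnitude at most $\Gamma\Delta t/(2\Delta x)$, since $|1-a-b|\le1$ on $[0,1]$. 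Each group annihilates constants, so $c_{i-1}+c_i+c_{i+1}=1$, and the convexity question reduces to controlling the signs of the individual weights.

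Imposing nonnegativity of the two off-diagonal weights then gives $\Gamma\epsilon/\Delta x \ge \tfrac12|u|_{max}+\tfrac12\Gamma$, which is precisely the crossover condition (\ref{eq:crossover}) — the ratio $|u|_{max}/\Gamma$ coming from advection and the additive $1$ from the $\tfrac12\Gamma$ sharpening bound. Nonnegativity of the diagonal weight $c_i=1-2\Gamma\epsilon\Delta t/\Delta x^2+\Delta t(\vec{\nabla}\cdot\vec{u})_i+\cdots$ is where the dilatation enters, and rearranging yields (\ref{eq:boundtime}); the $\max\{\,\cdot\,,0\}$ and $\min_i$ there simply encode that the restriction is active only where the diffusive diagonal dominates the local dilatation. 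The convex-combination conclusion closes the induction. I would extend to three dimensions by the same per-direction bookkeeping and summing the directional diffusive diagonals, which tightens the admissible $\Delta t$ to the form (\ref{eq:3Dboundtime}); and since every SSP Runge--Kutta update is a convex combination of forward-Euler substeps, the bound carries over to the actual time integrator.

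The hard part will be the nonlinearity of the compressive flux together with the normal $\vec{n}=\vec{\nabla}\phi/|\vec{\nabla}\phi|$, i.e. the sign function $n_{i\pm1/2}=\operatorname{sign}(\phi_{i+1}-\phi_i)$. On a monotone stencil both face normals agree and the factorization above makes the sharpening flux telescope into a single multiple of $(\phi_{i+1}-\phi_{i-1})$, so the weight-splitting goes through verbatim. At a local extremum of $\phi$, however, the normal flips between the two faces and the sharpening contribution becomes a \emph{sum} of two same-sign terms rather than a difference, so it no longer reduces to one multiple of $(\phi_{i+1}-\phi_{i-1})$ and the naive bookkeeping can push the coefficient sum above one. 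There one must argue boundedness directly, exploiting that $g(\phi)=\phi(1-\phi)$ vanishes at $\phi=0$ and $\phi=1$: as $\phi_i\to1$ (respectively $0$) the destabilizing compressive flux shuts off, so under the same two conditions the update cannot overshoot the bounds. Handling this extremum case cleanly — rather than the routine monotone-region algebra — is the real content of the theorem.
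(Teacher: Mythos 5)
Your skeleton (explicit Euler step, three-point stencil, sign conditions on the coefficients, induction in $k$) is the same as the paper's, and your derivation of the crossover condition from the off-diagonal weights and of the time-step restriction from the diagonal weight matches the paper's arithmetic. But there is a genuine gap in how you obtain the upper bound $\phi\le1$. You want the update to be a \emph{convex} combination, i.e.\ nonnegative weights summing to one, and you correctly observe that the sum-to-one property fails for the sharpening flux whenever the normal changes sign across the stencil (local extrema of $\phi$). You then defer this case to a direct argument whose stated mechanism --- ``as $\phi_i\to1$ the compressive flux shuts off'' --- does not actually close it: the compressive flux entering the update of $\phi_i$ is evaluated at the neighbouring points (or faces), not at $i$, so $g(\phi_{i\pm1})=\phi_{i\pm1}(1-\phi_{i\pm1})$ need not be small when $\phi_i$ is near $1$. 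At a local maximum with, say, $\phi_{i\pm1}\approx\tfrac12$, both face contributions add with the same sign and the coefficient sum exceeds one; ruling out an overshoot there requires balancing against the diffusive term, which your bookkeeping does not do. Since the extremum case is exactly where overshoots would occur, this is the crux, not a corner case.

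The paper avoids the issue entirely by never asking for sum-to-one. It only requires \emph{positivity} of the three coefficients (using $(1-\phi_{i\pm1})n_{i\pm1}\ge-1$, which holds regardless of the sign pattern of $n$ and hence uniformly at extrema), which yields the one-sided bound $\phi^{k+1}\ge0$. The upper bound then comes for free from the symmetry of the model: $1-\phi$ satisfies the same transport--regularization equation (the normal flips and $\phi(1-\phi)$ is invariant, so $\vec a(1-\phi)=-\vec a(\phi)$), hence the identical positivity argument applied to $1-\phi$ gives $\phi^{k+1}\le1$. If you replace your convex-combination requirement by this two-sided application of a positivity lemma, your proof goes through under exactly the stated conditions and the problematic extremum case disappears.
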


\begin{proof}
Consider the discretization of Eq. (\ref{eq:phi}) on a one-dimensional uniform grid
\begin{equation}
\begin{aligned}
\phi^{k+1}_i = \phi^k_i + \Delta t \left\{-\left(\frac{u^k_{i+1}\phi^k_{i+1} - u^k_{i-1}\phi^k_{i-1}}{2\Delta x} \right) + \phi^k_i\Big(\frac{u^k_{i+1} - u^k_{i-1}}{2\Delta x} \Big) \right\} \\
+ \Delta t \left[\Gamma\epsilon\left(\frac{\phi_{i+1}^k -2\phi_{i}^k + \phi_{i-1}^k}{\Delta x^2}\right) - \Gamma\left\{\frac{(1 - \phi_{i+1}^k) n_{i+1}^k \phi_{i+1}^k - (1 - \phi_{i-1}^k) n_{i-1}^k \phi_{i-1}^k}{2\Delta x}\right\} \right],
\end{aligned}
\end{equation}
where $k$ represents the time-step index and $i$ the grid index. This can be rearranged as
\begin{equation}
\phi_i^{k+1} = \tilde{C}_{i-1}^k \phi_{i-1}^k + \tilde{C}_{i}^k \phi_{i}^k + \tilde{C}_{i+1}^k \phi_{i+1}^k,
\end{equation}
where $\tilde{C}$'s are coefficients given by
\begin{equation}
    \tilde{C}_{i-1}^k = \frac{\Delta t u^k_{i-1}}{2\Delta x} + \frac{\Delta t \Gamma \epsilon}{\Delta x^2} + \frac{\Delta t \Gamma}{2\Delta x}(1 - \phi_{i-1}^k) n_{i-1}^k,
    \label{eq:ci-1}
\end{equation}
\begin{equation}
    \tilde{C}_{i+1}^k = -\frac{\Delta t u^k_{i+1}}{2\Delta x} + \frac{\Delta t \Gamma \epsilon}{\Delta x^2} - \frac{\Delta t \Gamma}{2\Delta x}(1 - \phi_{i+1}^k) n_{i+1}^k,
    \label{eq:ci+1}
\end{equation}
and
\begin{equation}
    \tilde{C}^k_i=1+\frac{\Delta t}{2\Delta x} (u^k_{i+1} - u^k_{i-1}) - \frac{2 \Delta t \Gamma \epsilon}{\Delta x^2}.
    \label{eq:ci}
\end{equation}
\newtheorem{lemma}{Lemma}[theorem]
\begin{lemma}
A scheme is said to maintain positivity [also called the ``boundedness" criterion in \citet{patankar1980numerical, versteeg2007introduction}] if $\tilde{C}$'s are all positive \citep{laney1998computational}.
\label{lemma:bound}
\end{lemma}

For $k=0$, it is given that $0\le\phi^k_i\le1$ holds, which implies that $(1 - \phi_{i-1}^0) n_{i-1}^0 \ge -1$.\\ Then $\tilde{C}_{i-1}^0 \ge \Delta t u^0_{i-1}/(2\Delta x) + \Delta t \Gamma \epsilon/\Delta x^2 - \Delta t \Gamma/(2\Delta x) \ge - \Delta t/2\Delta x(|u|^0_{max} + \Gamma) + \Delta t \Gamma \epsilon/\Delta x^2$. Now, invoking the condition in Eq. (\ref{eq:crossover}), we can show that $\tilde{C}_{i-1}^0 \ge 0$ holds. Using similar arguments, we can show that $\tilde{C}_{i+1}^0 \ge 0$ holds. Invoking the condition in Eq. (\ref{eq:boundtime}), we can also show that $\tilde{C}_{i}^0 \ge 0$ holds. Thus, Lemma \ref{lemma:bound} proves that $0\le\phi^1$ is satisfied. Since, $1-\phi$ also satisfies Eq. \eqref{eq:phi}, this implies that $\phi^1\le1$ is also true. Hence, $0\le\phi^1_i\le1$ is satisfied. Now, by repeating the same procedure above, we can show that $0\le\phi^{k+1}_i\le1$ is satisfied, provided that $0\le\phi^{k}_i\le1$ is satisfied. Hence, using mathematical induction, $0\le\phi^k_i\le1$ is satisfied $\forall k\in\mathds{Z}^+$, which concludes the proof.

\end{proof}

If $\phi$ is bounded, then $1-\phi$ is also bounded. Hence, the volume fractions of both phases $1$ and $2$ are bounded. Now, generalizing Theorem \ref{theorem:bound} for three dimensions, the time-step restriction required for the boundedness of $\phi$\textemdash assuming an isotropic mesh\textemdash can be written as
\begin{equation}
    \Delta t \le \min_i\left[\frac{1}{\max \left\{ \left(\frac{6\Gamma \epsilon}{\Delta x^2} \right) - \big(\frac{\delta u_i}{\delta x_i}\big), 0\right\}}\right],
    \label{eq:3Dboundtime}
\end{equation}
where $\delta/\delta x$ is the discrete derivative operator. The first term $(6\Gamma \epsilon/{\Delta x^2})$ represents the diffusive Courant--Friedrichs--Lewy (CFL) condition of the interface, with $\Gamma\epsilon$ representing the diffusivity of the interface regularization, and the second term $({\delta u_i}/{\delta x_i})$ represents the time-step constraint associated with the local dilatation of the flow. If the flow is incompressible, the time-step constraint reduces to 
\begin{equation}
    \Delta t \le \frac{\Delta x^2}{6 \Gamma \epsilon}.
    \label{eq:incompressibletime}
\end{equation}
But, if the flow is expanding, the time-step constraint is less restrictive compared to incompressible (dilatation-free) flow; and if the flow is compressing, the time-step constraint is more restrictive compared to incompressible flow. However, the time-step restriction due to the acoustic CFL condition is usually more restrictive than the condition in Eq. \eqref{eq:3Dboundtime} and hence it does not add any additional time-step restriction. In the proof of Theorem \ref{theorem:bound}, a first-order Euler time-stepping scheme was used to arrive at the restrictions on the time-step size in Eq. \eqref{eq:boundtime} and Eq. \eqref{eq:3Dboundtime}; however, these criteria are sufficient to maintain the boundedness of $\phi$ with most higher-order explicit time-stepping schemes since the diffusive CFL condition of the interface in Eq. \eqref{eq:boundtime} and Eq. \eqref{eq:3Dboundtime} are less restrictive for higher-order time-stepping schemes.







\subsection{Proof of total-variation-diminishing property of $\phi$\label{sec:tvd}}

The boundedness of $\phi$ is very important since it maintains $\phi$ between the physical values of $0$ and $1$ throughout the simulation. However, $\phi$ can still develop oscillations without going unbounded. But we need $\phi$ to be a smooth field that takes a value of $0$ and $1$ in the pure single-phase regions away from the interface and a smooth variation in between in the mixture regions. Hence we seek a stronger non-linear stability condition, the total-variation-diminishing (TVD) property, for $\phi$.     

The total variation of an arbitrary function $f$ is defined as the sum of: two times all the local maxima of $f$; negative two times all the local minima of $f$; and one times the boundary value of $f$, if that is a local maximum, and negative one times of it, if that is a local minimum. Similarly, a numerical approximation of the total variation of $f$ is given by
\begin{equation}
TV=\sum_{i=1}^{N} |f_{i+1} - f_{i}|,
\end{equation}
where $i$ is the grid index, and $N$ is the number of grid points. Below, we show that the criterion in Eq. (\ref{eq:tvd}), in addition to being bounded, is sufficient to maintain the TVD property of the $\phi$ field for compressible flows in a one-dimensional setting (Theorem \ref{theorem:tvd}); and the criterion is given in Eq. (\ref{eq:3Dtvd}) for a three-dimensional setting. We thus want to emphasize that the $\phi$ field satisfies the TVD property without having to add any additional flux limiters that is typically done in the literature to achieve this property \citep{laney1998computational}, which would destroy the non-dissipative property of the numerical method and is detrimental to the simulation of turbulent flows and acoustics.

\begin{theorem}
On a uniform one-dimensional grid, if $0\le\phi^k_i\le1$ is satisfied, then $\phi^k_i$ is said to satisfy the total-variation-diminishing property (TVD), where $k$ is the time-step index and $i$ is the grid index, provided
\begin{equation}
      \left(\frac{2\Gamma \epsilon}{\Delta x^2}\right) \ge \max_i\left\{\left(\frac{u_{i+1}^k - u_{i-1}^k}{2\Delta x}\right)\right\},
    \label{eq:tvd}
\end{equation}
is satisfied.
\label{theorem:tvd}
\end{theorem}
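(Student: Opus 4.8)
The plan is to establish the discrete total-variation estimate $TV(\phi^{k+1})=\sum_i|\phi^{k+1}_{i+1}-\phi^{k+1}_i|\le\sum_i|\phi^{k}_{i+1}-\phi^{k}_i|=TV(\phi^{k})$ by a discrete maximum-principle / Harten-type argument, reusing the one-step update and the three-point coefficients $\tilde{C}^k_{i-1},\tilde{C}^k_i,\tilde{C}^k_{i+1}$ already assembled in the proof of Theorem \ref{theorem:bound} [Eqs. \eqref{eq:ci-1}--\eqref{eq:ci}]. First I would rewrite the update $\phi^{k+1}_i=\tilde{C}^k_{i-1}\phi^k_{i-1}+\tilde{C}^k_i\phi^k_i+\tilde{C}^k_{i+1}\phi^k_{i+1}$ in the incremental form $\phi^{k+1}_i=\phi^k_i+C_{i+1/2}(\phi^k_{i+1}-\phi^k_i)-D_{i-1/2}(\phi^k_i-\phi^k_{i-1})$, grouping the advective, dilatational, regularization-diffusion, and sharpening contributions face by face. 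Harten's lemma then yields $TV(\phi^{k+1})\le TV(\phi^k)$ provided $C_{i+1/2}\ge0$, $D_{i+1/2}\ge0$, and $C_{i+1/2}+D_{i+1/2}\le1$ at every face $i+1/2$; the concluding step is the standard reindexing of $\sum_i|\phi^{k+1}_{i+1}-\phi^{k+1}_i|$, which under these three sign conditions telescopes back to $\sum_i|\phi^{k}_{i+1}-\phi^{k}_i|$.

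The nonnegativity of $C_{i+1/2}$ and $D_{i+1/2}$ is inherited directly from the boundedness analysis: the same balance of regularization diffusion against advection and the sharpening flux that forced $\tilde{C}^k_{i\pm1}\ge0$ in Lemma \ref{lemma:bound} reappears in the face coefficients, so the crossover condition \eqref{eq:crossover}, together with the hypothesis $0\le\phi^k_i\le1$, again gives $C_{i+1/2},D_{i+1/2}\ge0$. The decisive inequality is therefore $C_{i+1/2}+D_{i+1/2}\le1$. I expect that, on summing the two face coefficients, the advective pieces combine into a single local velocity difference and the regularization-diffusion pieces add to $2\Gamma\epsilon/\Delta x^2$, so that this inequality reduces to the requirement that the interface diffusivity $\Gamma\epsilon/\Delta x^2$ dominate the local dilatation $(u^k_{i+1}-u^k_{i-1})/(2\Delta x)$. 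This is exactly the content of criterion \eqref{eq:tvd}, equivalently the statement that the center coefficient satisfies $\tilde{C}^k_i\le1$; combined with the boundedness time-step restriction it places all three coefficients in $[0,1]$ and closes the Harten estimate.

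The hard part will be the nonlinear sharpening flux $-\phi(1-\phi)\vec{n}$, whose discrete central difference $g^k_{i+1}-g^k_{i-1}$, with $g^k_i=\phi^k_i(1-\phi^k_i)n^k_i$, is not manifestly an increment of $\phi$ and so does not obviously fit the incremental form. I would handle it by the mean-value factorization $g^k_{i+1}-g^k_i=\bar{s}^k_{i+1/2}(\phi^k_{i+1}-\phi^k_i)$, where the discrete slope across a face on which $n$ has a fixed sign is $\bar{s}^k_{i+1/2}=(1-\phi^k_i-\phi^k_{i+1})\,n^k_{i+1/2}$; boundedness $0\le\phi^k_i\le1$ then gives $|\bar{s}^k_{i+1/2}|\le1$, which is precisely what is needed to keep the sharpening contribution from destroying the nonnegativity of $C_{i+1/2}$ and $D_{i+1/2}$. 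The key simplification to verify is that this slope enters $C_{i+1/2}$ and $D_{i+1/2}$ with opposite signs and hence \emph{cancels} in the sum $C_{i+1/2}+D_{i+1/2}$, so that the sharpening term introduces no restriction beyond \eqref{eq:tvd}; the subtlety requiring care is the factorization at local extrema of $\phi$, where $n$ changes sign, which I would treat by localizing the estimate to monotone runs where $n^k_{i+1/2}$ is well defined.

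Finally, as in Theorem \ref{theorem:bound}, the hypothesis $0\le\phi^k_i\le1$ is used throughout — to bound $\bar{s}$ and the normal factor — so boundedness is genuinely a prerequisite for the TVD estimate rather than a consequence of it. Since $1-\phi$ obeys the same equation \eqref{eq:phi}, the argument transfers verbatim to phase $2$, and the three-dimensional criterion follows by applying the one-dimensional face-by-face estimate along each coordinate direction on the isotropic mesh and summing.
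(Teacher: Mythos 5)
Your overall strategy --- recasting the three-point update in Harten's incremental form $\phi^{k+1}_i=\phi^k_i+C_{i+1/2}(\phi^k_{i+1}-\phi^k_i)-D_{i-1/2}(\phi^k_i-\phi^k_{i-1})$ and checking $C_{i+1/2}\ge0$, $D_{i+1/2}\ge0$, $C_{i+1/2}+D_{i+1/2}\le1$ --- is more careful than the paper's, which simply asserts (Lemma \ref{lemma:tvd}) that positivity of the three point-value coefficients together with $\tilde{C}^k_i\le1$ implies TVD, and reads off \eqref{eq:tvd} from $\tilde{C}^k_i\le1$ in \eqref{eq:ci}. Your handling of the nonlinear sharpening flux by the mean-value factorization $g_{i+1}-g_i=\bar{s}_{i+1/2}(\phi_{i+1}-\phi_i)$ with $|\bar{s}_{i+1/2}|\le1$ under boundedness is likewise a genuine refinement over the paper, which leaves that term buried in the point-value coefficients.

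However, there is a gap at your decisive step. Carrying out your own face-by-face bookkeeping for the central discretization gives
\begin{equation*}
C_{i+1/2}=\Delta t\left(-\frac{u^k_{i+1}}{2\Delta x}+\frac{\Gamma\epsilon}{\Delta x^2}-\frac{\Gamma\,\bar{s}_{i+1/2}}{2\Delta x}\right),\qquad
D_{i+1/2}=\Delta t\left(\frac{u^k_{i}}{2\Delta x}+\frac{\Gamma\epsilon}{\Delta x^2}+\frac{\Gamma\,\bar{s}_{i+1/2}}{2\Delta x}\right),
\end{equation*}
so the sharpening slopes do cancel in the sum, as you predicted, but what remains is $C_{i+1/2}+D_{i+1/2}=\Delta t\left\{2\Gamma\epsilon/\Delta x^2-(u^k_{i+1}-u^k_i)/(2\Delta x)\right\}$. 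Harten's condition $C_{i+1/2}+D_{i+1/2}\le1$ is therefore an \emph{upper} bound on $\Delta t\,\Gamma\epsilon/\Delta x^2$ --- a time-step restriction of the same family as \eqref{eq:boundtime} --- and not the $\Delta t$-independent \emph{lower} bound on $\Gamma\epsilon$ asserted in \eqref{eq:tvd}; your claim that it ``reduces to the requirement that the interface diffusivity dominate the local dilatation'' has the inequality pointing the wrong way, and for small $\Delta t$ it is satisfied regardless of the dilatation. Criterion \eqref{eq:tvd} is instead precisely $\tilde{C}^k_i\le1$, which in the incremental bookkeeping is (up to the sharpening residual) $C_{i+1/2}+D_{i-1/2}\ge0$ --- in Harten's lemma this is not a hypothesis at all but a consequence of $C,D\ge0$. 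So your plan, executed correctly, proves a TVD statement under the crossover condition \eqref{eq:crossover} plus a time-step bound, but it does not produce \eqref{eq:tvd} as the operative criterion; to obtain the theorem as stated you must, as the paper does, impose $\tilde{C}^k_i\le1$ directly on the center coefficient of the three-point form and justify the sufficiency of that particular set of conditions separately.
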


\begin{proof}
Following the proof of Theorem \ref{theorem:bound}, discretizing Eq. (\ref{eq:phi}) on a one-dimensional uniform grid, we can arrive at the form 
\begin{equation}
\phi_i^{k+1} = \tilde{C}_{i-1}^k \phi_{i-1}^k + \tilde{C}_{i}^k \phi_{i}^k + \tilde{C}_{i+1}^k \phi_{i+1}^k,
\end{equation}
where $k$ is the time-step index, $i$ is the grid index, and $\tilde{C}$'s are the coefficients given in Eqs. \eqref{eq:ci-1}-\eqref{eq:ci}. 
\begin{lemma}
A scheme is said to be TVD if $\tilde{C}$'s are all positive and additionally $\tilde{C}^k_i \le 1$ \citep{harten1983high}.
\label{lemma:tvd}
\end{lemma}
Following the proof of Theorem \ref{theorem:bound}, if $\tilde{C}$'s are all positive, then $\phi^k$ is bounded. Additionally, if $\tilde{C}^k_i\le 1$, invoking Eq. (\ref{eq:ci}), we have $1+ (\Delta t/(2\Delta x)) (u^k_{i+1} - u^k_{i-1}) - 2 \Delta t \Gamma \epsilon /(\Delta x^2) \le 1$. Rearranging this, we arrive at the condition
\begin{equation}
      \left(\frac{2\Gamma \epsilon}{\Delta x^2}\right) \ge \max_i\left\{\left(\frac{u_{i+1}^k - u_{i-1}^k}{2\Delta x}\right)\right\},
\end{equation}
which concludes the proof.
\end{proof}

Now, generalizing Theorem \ref{theorem:tvd} for three dimensions, the condition required for the TVD property of $\phi$\textemdash assuming an isotropic mesh\textemdash can be written as
\begin{equation}
      \left(\frac{6\Gamma \epsilon}{\Delta x^2}\right) \ge \max_i\left\{\left(\frac{\delta u^k_i}{\delta x_i}\right)\right\}.
    \label{eq:3Dtvd}
\end{equation}
If the flow is incompressible, this condition is always trivially satisfied. Therefore, boundedness implies TVD and vice-versa, for an incompressible flow. However, for compressible flows, it depends on the local dilatation of the flow. If the flow is compressing, then the dilatation term is negative, and therefore the condition in Eq. (\ref{eq:3Dtvd}) is again trivially satisfied. But high regions of compression limit the time-step size to maintain the boundedness property (as described in Section \ref{sec:bound}) which is a requirement for TVD. On the other hand, if the flow is expanding, then this brings in an additional constraint on the value of $\Gamma$ for a given $\epsilon$ and $\Delta x$ given by Eq. (\ref{eq:3Dtvd}). However, for all the simulations in this work, the time-step size given by acoustic CFL condition and the constraint on $\Gamma$ given by Eq. (\ref{eq:crossover}) were sufficient to maintain the boundedness and TVD properties for the $\phi$ field (see Section \ref{sec:results} for the values of $\Gamma$, $\epsilon$ and $\Delta t$ used for various simulations in this work). In the proof of Theorem \ref{theorem:tvd}, similar to the Theorem \ref{theorem:bound}, a first-order Euler time-stepping scheme was used to arrive at the conditions in Eq. \eqref{eq:tvd} and Eq. \eqref{eq:3Dtvd}; however, these criteria are sufficient to maintain the TVD property of $\phi$ with most higher-order explicit time-stepping schemes.

Summarizing Theorems \ref{theorem:bound} and \ref{theorem:tvd}, assuming that the constraints for an incompressible flow [Eq. (\ref{eq:crossover}) and Eq. (\ref{eq:incompressibletime})] are already satisfied in a compressible flow, high regions of compression might violate both TVD and boundedness properties if the constraint on time-step size $\Delta t$ [Eq. \eqref{eq:3Dboundtime}] is not satisfied and high regions of expansion might violate the TVD property if the constraint on $\Gamma$ [Eq. \eqref{eq:3Dtvd}] is not satisfied. Typical flow conditions and the consequence when the $\phi$ field violates the TVD and boundedness criteria in a compressible flow are schematically shown in Figure \ref{fig:violate}.

\begin{figure}
    \centering
    \includegraphics[width=0.75\textwidth]{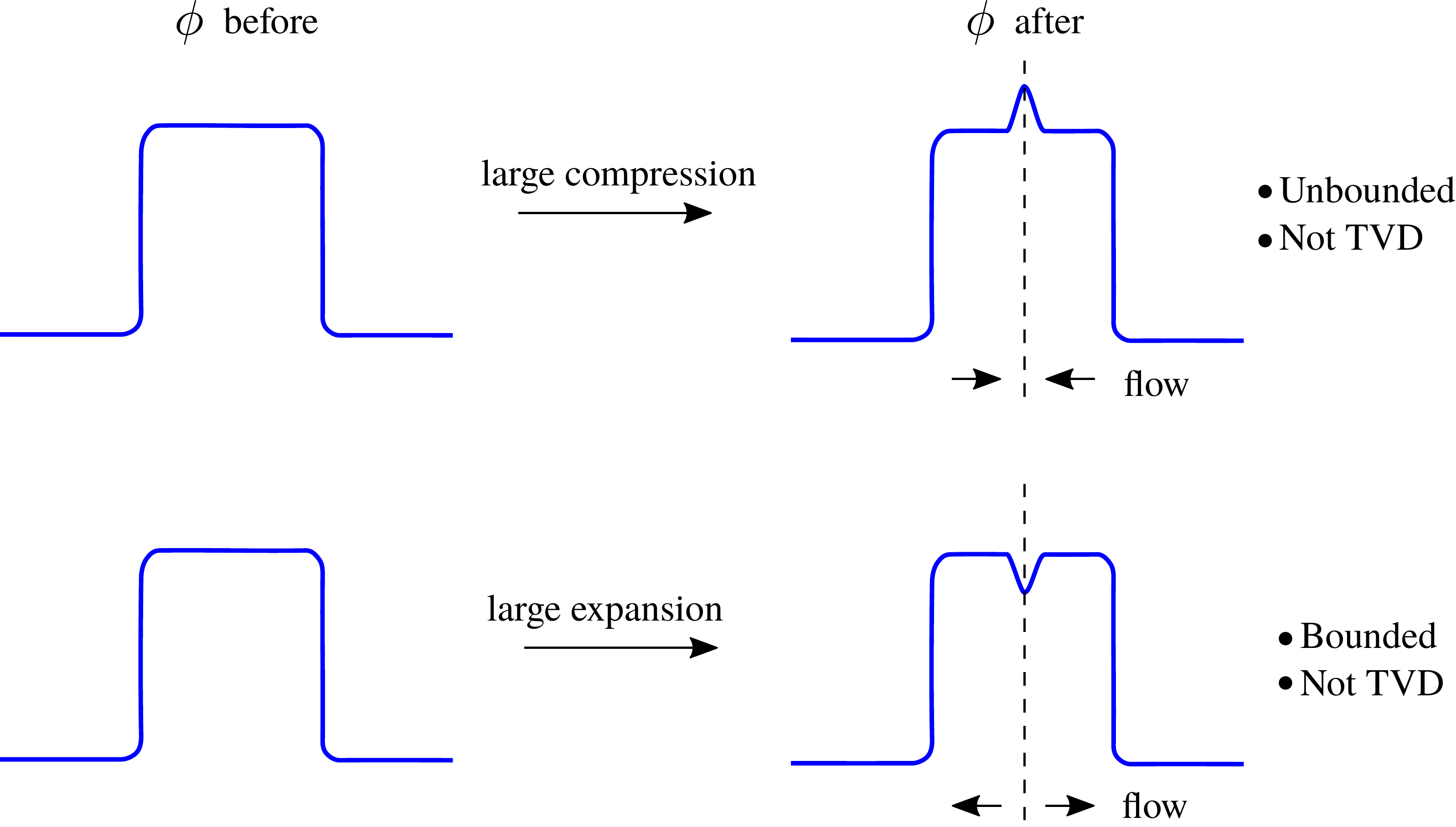}
    \caption{A schematic of a one-dimensional drop in a compressible flow, showing the typical flow conditions that could violate the boundedness and TVD criteria and the consequences of the violation. The solid lines represent the $\phi$ field before and after the violation of the criterion. The small arrows around the dashed line show the flow behavior and the dashed line is the location where the criterion is violated.}
    \label{fig:violate}
\end{figure}

\section{Mass balance equation \label{sec:mass}}

We employ a phenomenological approach to derive the mass balance equation for phase $l$ [Eq. (\ref{eq:mod_mass})]. Similar to Section \ref{sec:bound}, let $\phi=\phi_1$. Then, the mass of phase $1$ per unit total volume is given by $m_1=\rho_1\phi$. Now, starting with the mass balance equation of the form
\begin{equation}
\frac{\partial \rho_1\phi}{\partial t} + \vec{\nabla}\cdot(\rho_1\vec{u}\phi) = \vec{\nabla}\cdot\Big[\rho_1\Gamma\Big\{\epsilon\vec{\nabla}\phi - \phi(1 - \phi)\vec{n}\Big\}\Big],
\label{eq:mass_form1}
\end{equation}
one can see that the equation satisfies two consistency conditions: (a) in the incompressible limit ($\rho_1 \rightarrow \rho_{01}$, $\vec{\nabla}\cdot\vec{u}=0$), it is consistent with the volume fraction advection equation [Eq. \eqref{eq:phi}], where the characteristic density of phase $1$ $\rho_{01}$ is indeed the density of phase $1$ in the incompressible limit; (b) away from the interface ($\phi\rightarrow1$), it consistently reduces to the continuity equation for phase $1$. But, one main disadvantage of this formulation is that it requires explicit computation of $\rho_1$. Typically, $m_1=\rho_1\phi$ is solved in the system of equations, and to obtain $\rho_1$, one uses $\rho_1=m_1/\phi$, which results in inaccurate values of $\rho_1$ at the interface due to round-off errors that stem from division by a small number. To overcome this, we use a form of the equation 
\begin{equation}
\frac{\partial \rho_1\phi}{\partial t} + \vec{\nabla}\cdot(\rho_1\vec{u}\phi) = \vec{\nabla}\cdot\Big[\rho_{01}\Gamma\Big\{\epsilon\vec{\nabla}\phi - \phi(1 - \phi)\vec{n}\Big\}\Big].
\label{eq:mass_form2}
\end{equation}
This form of the equation also satisfies the same consistency conditions in the limit of incompressibility and away from the interface, and is similar to the one proposed in Eq. (\ref{eq:mass_form1}). Hence, we use this form of the mass balance equation since it does not require explicit computation of $\rho_1$. Now, writing Eq. (\ref{eq:mass_form2}) in terms of $m$, we get the mass balance equation for phase $l$ in Eq. \eqref{eq:mod_mass}
\begin{equation}
\frac{\partial m_l}{\partial t} + \vec{\nabla}\cdot(\vec{u}m_l) = \vec{\nabla}\cdot\Big[\rho_{0l}\Gamma\Big\{\epsilon\vec{\nabla}\phi_l - \phi_l(1 - \phi_l)\vec{n}_l\Big\}\Big],
\end{equation}
that is independent of $\rho_l$.

Further, summing up Eq. (\ref{eq:mod_mass}) for phases $1$ and $2$, we can derive the modified version of the continuity equation given by
\begin{equation}
\frac{\partial \rho}{\partial t} + \vec{\nabla}\cdot(\rho\vec{u}) = \vec{\nabla}\cdot\vec{f},
\label{eq:mod_continuity}
\end{equation}
where $\vec{f}=\sum_{l=1}^2 \vec{R}_l=\sum_{l=1}^2 \rho_{0l}\vec{a}_l$ is the net mass-regularization flux. The mass-regularization flux for phase $l$, $\vec{R}_l=\rho_{0l}\vec{a}_l$ in Eq. (\ref{eq:mod_mass}), can be intuitively thought to be a weighted version of the interface-regularization flux $\vec{a}_l$ for phase $l$, where the weight is the characteristic density of the phase, $\rho_{0l}$. This scaling of the flux is employed such that the timescales of regularization of the $\phi$ and $m_l$ fields are similar. It is easy to see that when the densities of two fluids are equal (a single-phase limit), the net mass-regularization flux goes to zero, and therefore, Eq. \eqref{eq:mod_continuity} consistently reduces to the continuity equation for single-phase flows.

The regularization terms in the mass balance equation [Eq. \eqref{eq:mod_mass}] are crucial in maintaining consistency between the mass and volume fraction fields. Figure \ref{fig:regular} shows the effect of regularization terms on all the quantities being solved. Hence, if the volume fraction field is modified due to the regularization terms, reorganization of the mass is required to maintain consistency between the $\rho$ and $\phi$ fields, which is essentially achieved with the use of the regularization terms.



\begin{figure}
    \centering
    \includegraphics[width=0.4\textwidth]{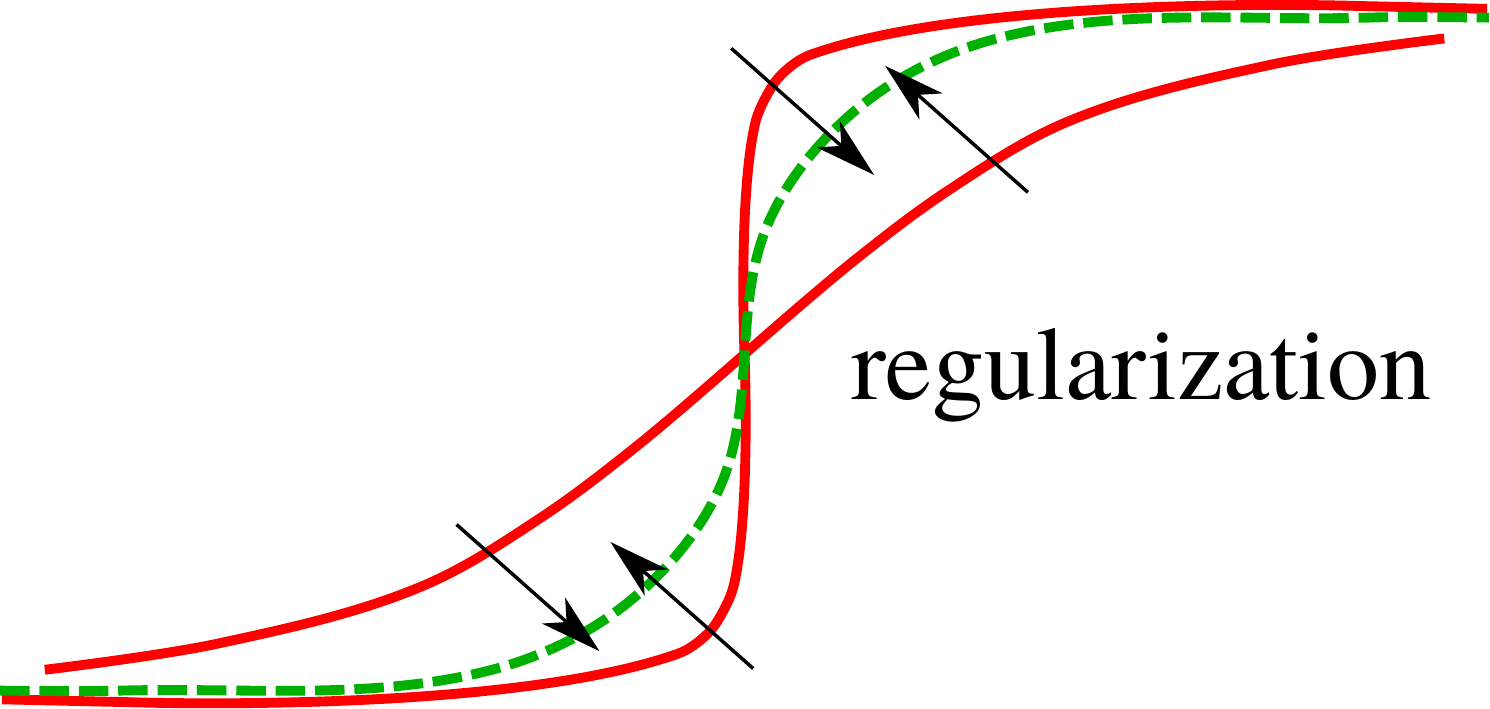}
    \caption{Schematic of effect of regularization terms on all the quantities being solved. The solid lines represent the state of a quantity before regularization and the dashed line represents the equilibrium state of the quantity after regularization.}
    \label{fig:regular}
\end{figure}

\section{Momentum equation \label{sec:momentum}}

Since the momentum of each of the phases is not individually conserved due to exchange of momentum at the interface, it is most efficient to write a single momentum equation for both phases. One can start with the momentum equation of the form 
\begin{equation}
\frac{\partial \rho\vec{u}}{\partial t} + \vec{\nabla}\cdot\{(\rho \vec{u}) \otimes \vec{u} + p \mathds{1}\} = 0,
\label{eq:mom_basic}
\end{equation}
in the inviscid limit. Taking the dot product of this equation with $\vec{u}$, and utilizing the modified continuity equation [Eq. \eqref{eq:mod_continuity}], results in the kinetic energy transport equation of the form
\begin{equation}
    \frac{\partial \rho k}{\partial t} + \vec{\nabla}\cdot(\rho \vec{u}k) + \vec{\nabla}\cdot(\vec{u}p) - p(\vec{\nabla}\cdot\vec{u}) = -k(\vec{\nabla}\cdot\vec{f}),
\end{equation}
where the non-conservative term, $k(\vec{\nabla}\cdot\vec{f})$, represents the spurious contribution to the kinetic energy, that stems from the reorganization of mass across the interface as a result of regularization of the mass and volume fraction fields. Having a spurious non-conservative term in the kinetic energy equation\textemdash even in the continuous form\textemdash is a sign that the solutions of this model could potentially be spurious. This allusion is correct, since the form of the momentum equation in Eq. (\ref{eq:mom_basic}) does not satisfy the interface-equilibrium condition (IEC). 

Now, let's consider the modified version of momentum equation [Eq. \eqref{eq:mod_mom}] in the inviscid limit
\begin{equation}
\frac{\partial \rho\vec{u}}{\partial t} + \vec{\nabla}\cdot\{(\rho \vec{u}) \otimes \vec{u} + p \mathds{1}\} = \vec{\nabla}\cdot(\vec{f}\otimes\vec{u}).
\end{equation}
Taking the dot product of this equation with $\vec{u}$ and utilizing the modified continuity equation [Eq. \eqref{eq:mod_continuity}], results in the kinetic energy transport equation of the form
\begin{equation}
\frac{\partial \rho k}{\partial t} + \vec{\nabla}\cdot(\rho \vec{u}k) + \vec{\nabla}\cdot(\vec{u}p) - p(\vec{\nabla}\cdot\vec{u}) = \vec{\nabla}\cdot(\vec{f}k),
\label{eq:kinetic}
\end{equation}
where there are no non-conservative terms that spuriously contribute to the kinetic energy. Additionally, the form of momentum equation in Eq. (\ref{eq:mod_mom}) also satisfies the IEC (see Section \ref{sec:iec}), thus reinforcing the fact that the solution is not spuriously affected by regularization of the mass and volume fraction fields. In fact, the newly introduced term in the modified version of the momentum equation [Eq. \eqref{eq:mod_mom}] is the regularization term for the momentum, that results in reorganization of the momentum across the interface as shown in Figure \ref{fig:regular} to achieve consistency with the regularized mass and volume fraction fields. For other forms of consistency correction for the momentum equation in the context of a diffuse-interface method, see \citet{tiwari2013diffuse} for compressible flows and \citet{mirjalili2019consistent} for incompressible flows. 

This consistency correction to the momentum is crucial for compressible flows, without which the spurious momentum (or velocity) contribution to kinetic energy might eventually lead to unbounded solutions, especially in a non-dissipative numerical method. However, stable solutions have been obtained in the past without this consistency correction, mostly in the incompressible regime, for low Reynolds numbers and low density ratios. This could be due to: the use of dissipative numerical schemes that stabilize the method by attenuating the kinetic energy, and thus preventing its unbounded growth; and the enforcement of the divergence-free condition for the velocity, which stabilizes the method in the process of projecting the velocity field onto a divergence-free field, because the spurious velocities at the interface due to the inconsistent momentum formulation are predominantly of the dilatational type.


\section{Energy equation: entropy conservation form\label{sec:energy}}

Entropy is not conserved in a diffuse-interface method, even in the inviscid limit, due to the regularization of the interface (irreversible process). Entropy should only be conserved if the interface is already perfectly regular (equilibrium state) and the effects of all the regularization terms are identically zero. Thus, we seek to achieve approximate entropy conservation instead of exact conservation; and derive the conservative form of the regularization terms in the energy equation, with a constraint that it should satisfy the IEC. We first look at the case of exact entropy conservation and show that it does not satisfy the IEC; and then look at the case where the IEC is satisfied and then state that the entropy is not conserved, as is expected for an irreversible process (second law of thermodynamics).  

\newtheorem{lemma1}{Lemma}[section]

\begin{lemma1}
Let $s_l$ be the specific entropy of phase $l$, and $T_l$ be the corresponding temperature. Then, the form of the internal energy equation that satisfies (entropy conservation)
\begin{equation}
    \sum_{l=1}^2 \left(\rho_l\phi_lT_l \frac{Ds_l}{Dt}\right) = 0,
    \label{eq:entropy}
\end{equation}
in the inviscid limit is
\begin{equation}
    \frac{\partial \rho e}{\partial t} + \vec{\nabla}\cdot(\rho \vec{u} e) + \vec{\nabla}\cdot(p\vec{u}) - \vec{u}\cdot\vec{\nabla} p = \sum_{l=1}^2 \left\{h_l \vec{\nabla}\cdot\vec{R}_l\right\}.
    \label{eq:nonconservative_ie}
\end{equation}
\end{lemma1}
\begin{proof}
Let us start with an internal energy equation of the form
\begin{equation}
    \frac{D \rho e}{D t} + \rho h (\vec{\nabla}\cdot\vec{u}) + X = 0,
    \label{eq:noncons_ie}
\end{equation}
where $X$ is the unknown term to be determined. Expressing the mixture internal energy in terms of phase quantities
\begin{equation}
    \frac{D \rho e}{D t} = \sum_{l=1}^2\frac{D (\phi_l\rho_l e_l)}{D t} = \sum_{l=1}^2 \left\{\phi_l\frac{D (\rho_l e_l)}{D t} + \rho_le_l\frac{D \phi_l}{D t}\right\}, 
\end{equation}
and then using Gibbs's relation to express internal energy in terms of entropy, we get
\begin{equation}
    \mathrm{d}(\rho_le_l) = \rho_l\mathrm{d}e_l + e_l\mathrm{d}\rho_l = \rho_lT_l\mathrm{d}s_l + h_l\mathrm{d}\rho_l.
\end{equation}
Using this in Eq. (\ref{eq:noncons_ie}) results in
\begin{equation}
    \sum_{l=1}^2 \left\{\rho_l\phi_lT_l \frac{Ds_l}{Dt} + \phi_lh_l\frac{D\rho_l}{Dt} + \phi_lh_l\rho_l(\vec{\nabla}\cdot\vec{u}) + \rho_le_l\frac{D\phi_l}{Dt}\right\} + X = 0. 
\end{equation}
Now, using Eqs. (\ref{eq:mod_volumefraction})\textendash(\ref{eq:mod_mass}), one obtains
\begin{equation}
    \sum_{l=1}^2 \left\{\rho_l\phi_lT_l \frac{Ds_l}{Dt} + h_l \vec{\nabla}\cdot\vec{R}_l - p_l(\vec{\nabla}\cdot\vec{a}_l)\right\} + X = 0. 
\end{equation}
Hence, if $X=\sum_{l=1}^2 \{p_l(\vec{\nabla}\cdot\vec{a}_l) - h_l \vec{\nabla}\cdot\vec{R}_l \}$, then the condition in Eq. (\ref{eq:entropy}) is satisfied. Now, invoking the isobaric closure law (Section \ref{sec:model}), $\sum_{l=1}^2 \{p_l(\vec{\nabla}\cdot\vec{a}_l)\}=0$, and the proof is complete.
\end{proof}

The internal energy equation of the form in Eq. (\ref{eq:nonconservative_ie}) does not satisfy the IEC, which also alludes to the fact that the entropy is not conserved exactly in a diffuse-interface method with regularization terms. Since we now only seek approximate entropy conservation, we modify the regularization term in Eq. (\ref{eq:nonconservative_ie}) such that it satisfies the IEC; and the conservative form of the regularization term is restored. Thus, we arrive at the final form of the internal energy equation (taking the $h_l$ on the right-hand side of Eq. (\ref{eq:nonconservative_ie}) inside the divergence operator) 
\begin{equation}
    \frac{\partial \rho e}{\partial t} + \vec{\nabla}\cdot(\rho \vec{u} e) + \vec{\nabla}\cdot(p\vec{u}) - \vec{u}\cdot\vec{\nabla} p = \sum_{l=1}^2 \vec{\nabla}\cdot(\rho_l h_l \vec{a}_l).
    \label{eq:conservative_ie}
\end{equation}

In compressible flows, internal energy is not a conserved quantity due to the reversible exchange of compression/expansion work between internal and kinetic energies, but the sum of internal and kinetic energy is conserved. Hence, summing up the internal energy transport equation [Eq. \eqref{eq:conservative_ie}] and the kinetic energy transport equation [Eq. \eqref{eq:kinetic}], we obtain
\begin{equation}
\frac{\partial E}{\partial t} + \vec{\nabla}\cdot(\vec{u} E) + \vec{\nabla}\cdot(p\vec{u}) = \vec{\nabla}\cdot(\vec{f}k) + \sum_{l=1}^2 \vec{\nabla}\cdot{(\rho_l h_l \vec{a}_l )}. 
\end{equation}
Clearly, all the terms in this equation are in conservative form as desired; and since this equation was obtained by summing the forms of internal energy and kinetic energy equations that satisfy the IEC, this form of the total energy equation also satisfies the IEC. With the inclusion of viscous terms, we get the final form of the total energy transport equation in Eq. (\ref{eq:mod_energy}). 








\section{Interface-equilibrium condition \label{sec:iec}}

In incompressible flows, the divergence-free condition constraints the velocity and pressure fields, and hence eliminates the possibility of spurious solutions at the interface (in the absence of surface tension forces). However, such a constraint in compressible flows is absent, and thus care must be taken in the implementation of any numerical scheme in order to avoid spurious solutions at the interface. The IEC provides a consistency condition to check and eliminate the forms of the model and the numerical discretizations that contribute spuriously to the solution. 

\begin{lemma1}
The proposed conservative diffuse-interface model in Eqs. (\ref{eq:mod_volumefraction})-(\ref{eq:mod_energy}) satisfies the IEC defined in Postulate (\ref{post:iec}).
\end{lemma1}

\begin{proof}
\subsection*{Part (a). Mechanical equilibrium: uniform velocity across the interface}

Consider a one-dimensional second-order central discretization of the mass balance equation [Eq. \eqref{eq:mod_mass}] on a uniform grid, assuming $u^k_i=u_0$
\begin{equation}
\left(\rho_l\phi_l\right)^{k+1}_i - \left(\rho_l\phi_l\right)^{k}_i = -\Delta t \left\{ \frac{(\rho_l \phi_l)_{i+1} - (\rho_l \phi_l)_{i-1}}{2\Delta x} \right\}^ku_0 + \Delta t  \left( \frac{R_{l,i+1} - R_{l,i-1}}{2\Delta x} \right)^k,
\label{eq:mass_iec}
\end{equation}
where $k$ is the time-step index, and $i$ is the grid index. Now, consider a one-dimensional second-order central discretization of the momentum equation [Eq. \eqref{eq:mod_mom}] on a uniform grid, assuming $u^k_i=u_0$ and $p^k_i=p_0$
\begin{equation}
(\rho u)^{k+1}_i - \rho^{k}_iu_0 = -\Delta t \left( \frac{\rho_{i+1} - \rho_{i-1}}{2\Delta x} \right)^ku_0^2 + \Delta t  \left( \frac{\sum_{l=1}^2 R_{l,i+1} - \sum_{l=1}^2 R_{l,i-1}}{2\Delta x} \right)^ku_0.
\end{equation}
Subtracting this from the sum of the discrete mass balance equations [Eq. \eqref{eq:mass_iec}] for phases $1$ and $2$ gives $u^{k+1}_i=u_0$.

\subsection*{Part (b). Thermodynamic equilibrium: uniform pressure across the interface}
Consider a one-dimensional second-order central discretization of the internal energy equation [Eq. \eqref{eq:conservative_ie}] in terms of phase quantities, on a uniform grid, assuming $u^k_i=u_0$ and $p^k_i=p_0$, and using Eq. (\ref{eq:enthalpy}) to express $h_l$ in terms of $p$ and $\rho_l$
\begin{equation}
\begin{aligned}
\sum_{l=1}^2 \left(\phi_l\rho_le_l\right)^{k+1}_i - \sum_{l=1}^2 \left(\phi_l\rho_le_l\right)^k_i = -\Delta t \sum_{l=1}^2\left\{ \frac{(\rho_le_l\phi_l)_{i+1}-(\rho_le_l\phi_l)_{i-1}}{2\Delta x} \right\}^k u_0 \\
+\Delta t \left[\sum_{l=1}^2 \left\{\frac{p_0(1+\alpha_l) - \beta_l}{\alpha_l}\right\}\left\{\frac{a_{l, i+1} - a_{l,i-1}}{2\Delta x}\right\} \right]^k,
\end{aligned}
\end{equation}
and expressing $e_l$ in terms of $p_l$ using the EOS results in the discretized equation for pressure
\begin{equation}
\begin{aligned}
\left(\sum_{l=1}^2\frac{\phi_l}{\alpha_l}\right)^{k+1}p^{k+1}_i - \left(\sum_{l=1}^2\frac{\phi_l\beta_l}{\alpha_l}\right)^{k+1} - \left(\sum_{l=1}^2\frac{\phi_l}{\alpha_l} \right)^kp_0 + \left(\sum_{l=1}^2\frac{\phi_l\beta_l}{\alpha_l} \right)^k\\
= -\Delta t \left\{\frac{\left(\sum_{l=1}^2 \frac{\phi_l}{\alpha_l}\right)_{i+1}p_0 - \left(\sum_{l=1}^2 \frac{\phi_l\beta_l}{\alpha_l}\right)_{i+1}-\left(\sum_{l=1}^2 \frac{\phi_l}{\alpha_l} \right)_{i-1}p_0 + \left(\frac{\sum_{l=1}^2 \phi_l\beta_l}{\alpha_l}\right)_{i-1}}{2\Delta x} \right\}^ku_0\\
+\Delta t \left[\sum_{l=1}^2 \left\{\frac{p_0(1+\alpha_l) - \beta_l}{\alpha_l}\right\}\left\{\frac{a_{l, i+1} - a_{l,i-1}}{2\Delta x}\right\} \right]^k.
\label{eq:pressure_iec}
\end{aligned}
\end{equation}
Now, let $L(\phi_l)$ be a one-dimensional second-order central discretization of the volume fraction advection equation for phase $l$ [Eq. \eqref{eq:mod_volumefraction}] on a uniform grid. Assuming $u^k_i=u_0$, and subtracting Eq. (\ref{eq:pressure_iec}) from the equation $\Big(\sum_{l=1}^2 {L(\phi_l)}/{\alpha_l}\Big)p_0 - \Big(\sum_{l=1}^2 {L(\phi_l)\beta_l}/{\alpha_l}\Big)$, results in $p^{k+1}_i=p_0$, which concludes the proof.
\end{proof}


\section{Full proposed model including viscous, surface tension, and gravity forces\label{sec:finalmodel}}

We finally present the full proposed model in Eqs. (\ref{eq:volumef})-(\ref{eq:pressuref}) along with the viscous, surface tension, and gravity terms, where the highlighted terms are the newly introduced interface-regularization (diffusion\textendash sharpening) terms. We model the surface-tension force using the continuum surface force (CSF) method \citep{brackbill1992continuum} as a volumetric body force in the momentum equation; and the surface-tension energy term is included in the total energy equation to consistently account for the exchange of surface energy and the kinetic energy in the flow \citep{perigaud2005compressible}.

\begin{equation}
\frac{\partial \phi}{\partial t} + \vec{\nabla}\cdot(\vec{u}\phi) = \phi(\vec{\nabla}\cdot\vec{u})+\highlightgreen{\vec{\nabla}\cdot\vec{a}}
\label{eq:volumef}    
\end{equation}
\begin{equation}
\frac{\partial m_l}{\partial t} + \vec{\nabla}\cdot(\vec{u} m_l) = \highlightgreen{\vec{\nabla}\cdot \vec{R}_l} \hspace{0.5cm} l=1,2
\end{equation}
\begin{equation}
\frac{\partial \rho\vec{u}}{\partial t} + \vec{\nabla}\cdot(\rho \vec{u} \otimes \vec{u} + p \mathds{1}) = \vec{\nabla}\cdot\doubleunderline\tau + \highlightgreen{\vec{\nabla}\cdot(\vec{f}\otimes\vec{u})} + \sigma \kappa \vec{\nabla} \phi + \rho \vec{g}
\end{equation}
\begin{equation}
\frac{\partial E}{\partial t} + \vec{\nabla}\cdot(\vec{u} E) + \vec{\nabla}\cdot(p\vec{u}) = \vec{\nabla}\cdot(\doubleunderline\tau\cdot\vec{u}) + \highlightgreen{\vec{\nabla}\cdot(\vec{f}k)} + \highlightgreen{\sum_{l=1}^2 \vec{\nabla}\cdot{(\rho_l h_l \vec{a}_l )}} + \sigma \kappa \vec{u}\cdot\vec{\nabla} \phi + \rho \vec{g}\cdot \vec{u}
\end{equation}
\begin{equation}
p = \frac{\rho e + \Big(\frac{\phi \beta_1}{\alpha_1} + \frac{(1 -\phi) \beta_2}{\alpha_2}\Big)} {\Big( \frac{\phi}{\alpha_1} + \frac{1- \phi}{\alpha_2}\Big)}
\label{eq:pressuref}
\end{equation}
In Eqs. (\ref{eq:volumef})-(\ref{eq:pressuref}), $\sigma$ is the surface-tension coefficient, $\kappa=-\vec{\nabla}\cdot\vec{n}$ is the curvature of the interface, and $\vec{g}$ is the gravitational acceleration. For the convenience of the readers, the final model [Eqs. \eqref{eq:volumef}-\eqref{eq:pressuref}] has been rewritten in Appendix A, where: the highlighted modeling terms have been further expanded in terms of primitive variables; and the general mixture EOS has been expressed in terms of the parameters of the individual phase stiffened-gas EOSs.

\section{Numerical implementation \label{sec:implement}}

\subsection{Numerical discretization}

In this work, we choose to use the fourth-order Runge-Kutta (RK4) time-stepping scheme and second-order central-differencing scheme for the discretization of all spatial operators.

A finite-volume collocated discretization strategy has been employed, wherein all the variables are stored at cell centers and the fluxes are evaluated on the cell faces. Thus, it can be extended to arbitrary unstructured grids in a relatively straightforward manner. The non-linear sharpening term, that is present on the right-hand side of the volume fraction advection, is also present in all other equations, which is more evident in the fully expanded form of the governing equations in Appendix A; and the discretization of this term should be consistently used across all the equations to obtain accurate oscillation-free solutions.


\subsection{Performance and scalability \label{sec:performance}}

To verify and validate the proposed model and the numerical method, it has been implemented in the in-house code\textemdash CTR-DIs3D\textemdash that has been optimized for better parallel scalability (see Appendix B for more details on the in-house solver). Apart from the performance improvement through solver optimization, diffuse-interface methods are inherently known to be cost effective and easily parallelizable compared to sharp-interface methods. This is due to the absence of expensive and localized geometric reconstruction of the interface that could potentially result in load-balancing issues. The partial-differential-equation-only nature of the diffuse-interface method results in well-balanced loads throughout the domain; and when combined with low-order central-difference schemes, gives rise to a low-cost, robust, and highly-scalable method. 

To evaluate the parallel-scaling efficiency of the in-house CTR-DIs3D solver, and therefore the diffuse-interface method, a strong-scaling test and a weak-scaling test have been performed on the Mira supercomputer at Argonne National Laboratory (ANL) (see, \citet{Jain2018} for the scaling of the two-dimensional solver). The results from the strong-scaling test are shown in Figure \ref{fig:strong-scale}, where the actual speedup and the actual time per time step are compared against the ideal speedup and the ideal time per time step, respectively. The results from the weak-scaling test are shown in Figure \ref{fig:weak-scale}, where the ideal time and the actual time are plotted against the number of cores. 

From the results, it is evident that weak scaling is almost ideal from $1$ to $10^3$ cores, beyond which the efficiency drops to roughly $80\%$ for $25\ \mathrm{K}$ cores. The results from the strong-scaling test show an ideal behavior for large grid sizes per core, and a drop in the scaling efficiency for grid sizes less than $12.5\ \mathrm{K}$ per core, due to a higher communication overhead compared to the time of computation for smaller grid sizes. This could be due to a very small total computational time per time step, which is a result of highly optimized single-core performance of the solver and a low cost numerical method; therefore resulting in a higher communication-to-calculation time ratio and a non-ideal parallel scalability.

\begin{figure}
    \centering
    \includegraphics[width=\textwidth]{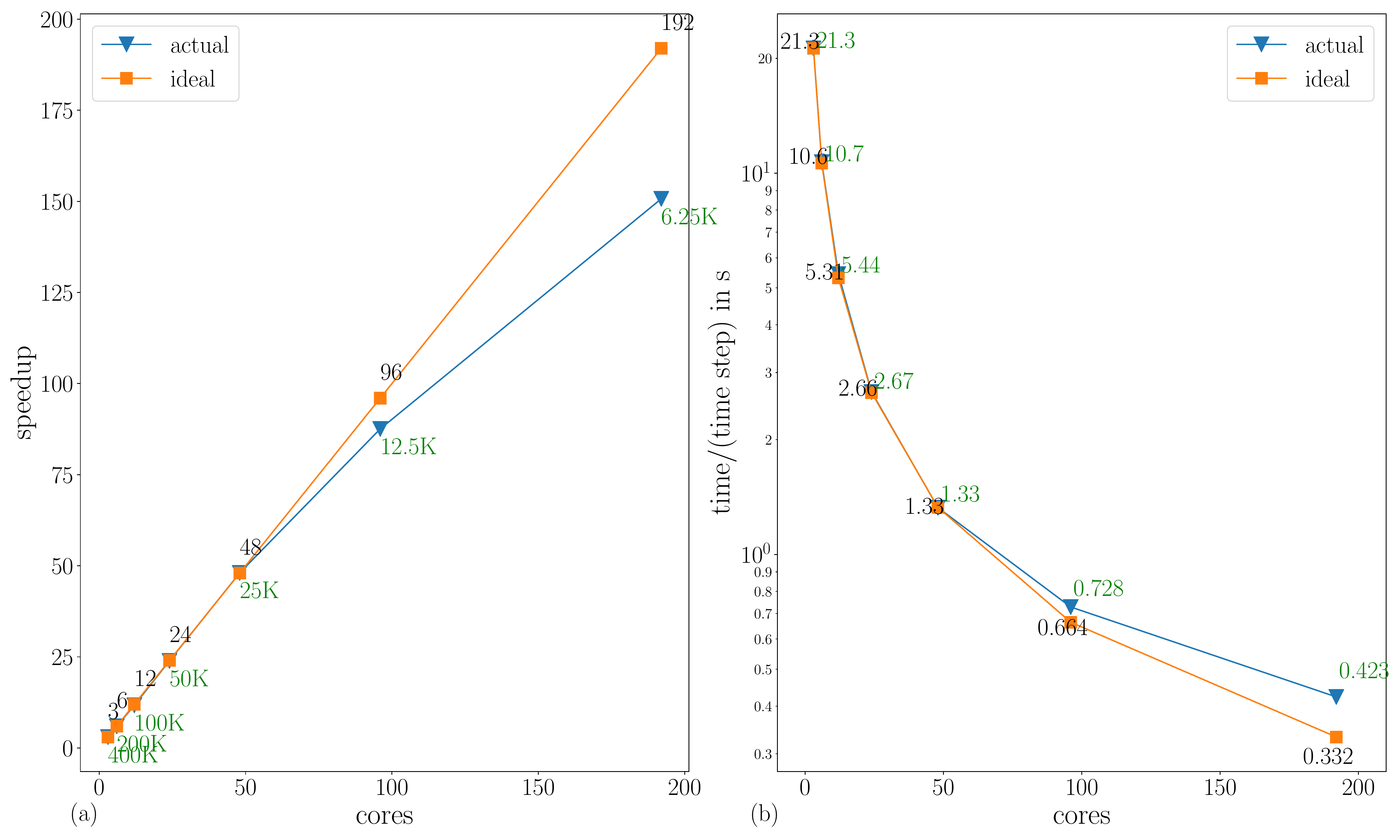}
    \caption{Strong scaling of the CTR-DIs3D solver on the Mira supercomputer at Argonne National Laboratory. (a) The ideal speedup and the actual speedup achieved are plotted against the number of cores. The numerical values in black, above the curve, are the number of cores; and the numerical values in green, below the curve, represent the number of grid points per core. (b) The actual time per time step (along with the numerical values in green, above the curve) and the ideal time per time step (along with the numerical values in black, below the curve) are plotted against the number of cores from the same test.}
    \label{fig:strong-scale}
\end{figure}
\begin{figure}
    \centering
    \includegraphics[width=\textwidth]{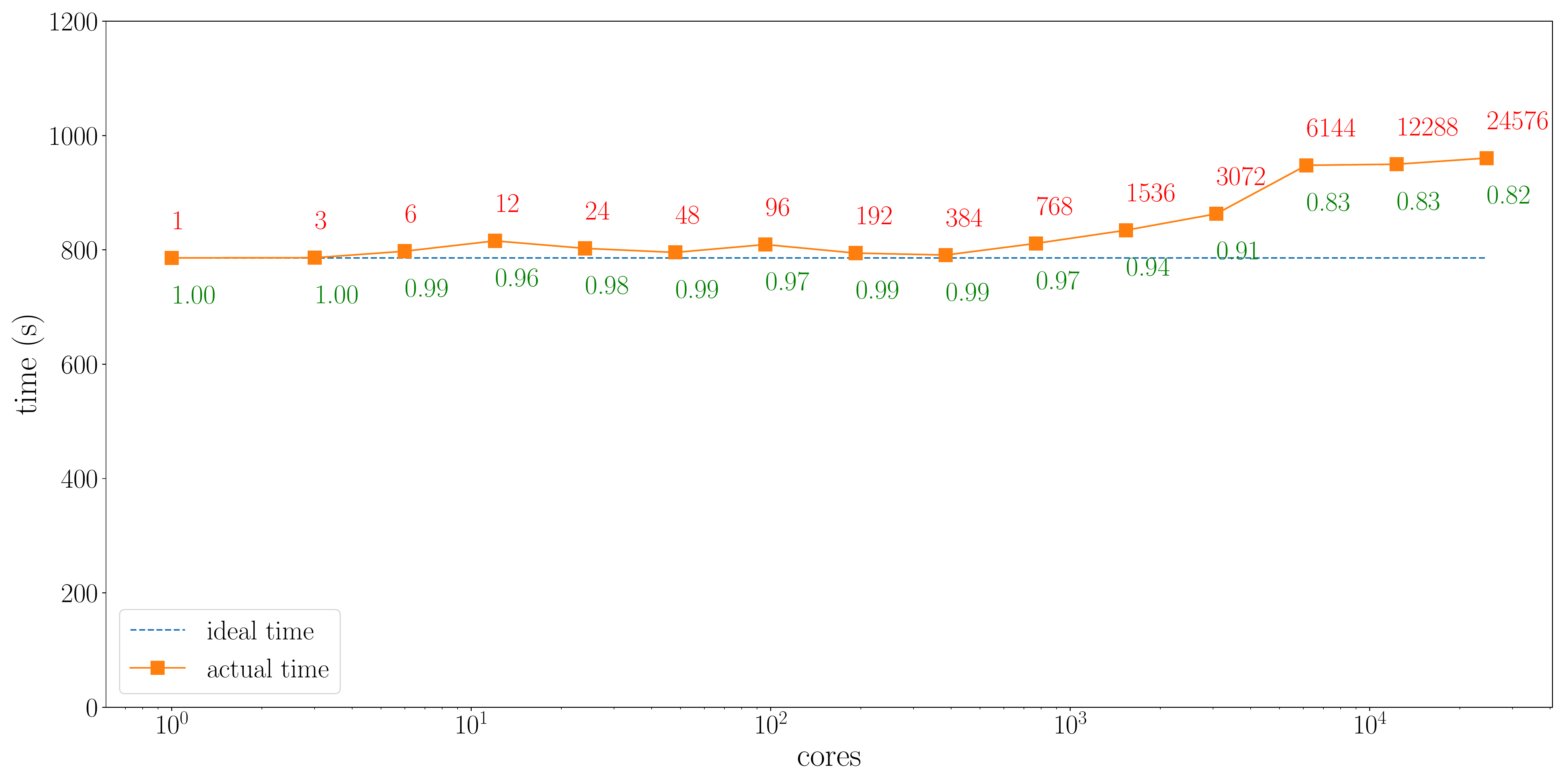}
    \caption{Weak scaling of the CTR-DIs3D solver on the Mira supercomputer at Argonne National Laboratory. (a) Ideal time and the actual time for 50 time steps are plotted against the number of cores. The numerical values in red, above the curve, are the number of cores; and the numerical values in green, below the curve, represent the weak scaling efficiency.}
    \label{fig:weak-scale}
\end{figure}

\section{Results \label{sec:results}}

In this section, several verification tests are presented that are used to assess the newly proposed model, its numerical discretization, and the implementation. The verification tests used in this work can be broadly classified into: (a) interface advection test cases, that test the accuracy of interface-capturing capability of the method, (b) surface tension test case, that tests the accuracy of the implementation of surface tension effects in the model, (c) acoustics test cases, that test the accuracy of propagation of sound and its interaction with material interfaces in the flow, and (d) complex flows, that test the stability and robustness of the numerical scheme, and the accuracy of the method for high-Reynolds-number flows. In all the test cases, properties of the fluids used are that of air, water, and kerosene; unless specified otherwise. The properties of these fluids are listed in Table \ref{tab:prop}. 

\begin{table}[]
\centering
\begin{tabular}{@{}llll@{}}
\toprule
                         & air       & water  & kerosene  \\ \midrule
$\rho\ (\mathrm{kg/m^3})$ & 1.225     & 997   & 820    \\
$\mu\ (\mathrm{N/m^2})$   & 0.0000181 & 0.00089 & 0.00164\\
$\gamma$                 & 1.4       & 4.4     & 4.4 \\
$\pi\ (\mathrm{MPa})$     & 0         & 600  &  326.6  \\ 
$c\ (\mathrm{m/s})$     & 338.1      & 1627.4   & 1324 \\ \bottomrule
\end{tabular}
\caption{Properties of the fluids used in this work.}
\label{tab:prop}
\end{table}

\subsection{Initial conditions}

One approach to set the initial values of volume fraction, $\phi$, is to start with a value of $1$ in one phase, and $0$ in the other; and then reinitialize the $\phi$ field using the equation
\begin{equation}
\frac{\partial \phi}{\partial \tau} = \vec{\nabla}\cdot\left\{\epsilon\vec{\nabla}\phi - \phi(1 - \phi)\vec{n}\right\},
\label{eq:reinitialize}
\end{equation}
where $\tau$ is a pseudo time, such that the $\phi$ field relaxes to the equilibrium solution\textemdash a hyperbolic tangent function\textemdash to the above equation. Alternatively, one could use an initial analytical profile to specify $\phi$ as
\begin{equation}
\phi = \frac{1}{2} \Big\{1 + \tanh{\big(\frac{x-x_0}{2\epsilon}\big)}\Big\}.
\label{equ:phi_sol}
\end{equation}
which is a one-dimensional equilibrium solution to Eq. (\ref{eq:reinitialize}), where $x_0$ is the desired location of the interface. 

In all the test cases in the present work, the initial profile of $\phi$ is analytically specified using Eq. (\ref{equ:phi_sol}) with an initial value of $\epsilon$ as $\epsilon_0=\Delta x$, unless specified otherwise. Once $\phi$ is initialized, the densities are initialized as $\rho_1=\rho_{01}\phi$, and $\rho_2=\rho_{02}(1 - \phi)$; and the viscosity as $\mu=\mu_1\phi+\mu_2(1 - \phi)$. The velocity field, $\vec{u}$, and the pressure, $p$, are initialized as desired for the problem of interest. To initialize the total energy, $E$, the internal energy, $\rho e$, is first computed using the mixture EOS in Eq. (\ref{eq:pressuref}), given the pressure, the volume fraction, and the parameters of the fluids, as
\begin{equation}
    \rho e = {p\Big( \frac{\phi}{\alpha_1} + \frac{1- \phi}{\alpha_2}\Big)} - {\Big(\frac{\phi \beta_1}{\alpha_1} + \frac{(1 -\phi) \beta_2}{\alpha_2}\Big)}, 
\end{equation}
and then the total energy is computed by summing up the kinetic and internal energy contributions ($E = \rho\vec{u}\cdot\vec{u}/2 + \rho e$).


\subsection{Interface advection tests\label{sec:interface_advection}}

This section contains some standard test cases and newly proposed test cases, to assess the accuracy of the shape of the interface, computed using the proposed conservative diffuse-interface method. The three test cases presented in this section are: (a) drop in a shear flow, which is a standard test introduced by \citet{BELL1989257} and \citet{rider1998reconstructing}, and has been extensively used in the literature \citep{tryggvason2011direct} to assess the accuracy of the interface in an incompressible shearing flow; (b) drop in a compressible shear flow is a new test case that we propose, to evaluate our model in terms of the accuracy of the shape of the interface in a compressible shearing flow; and (c) star in a spiralling flow is also a new test case that we propose, to evaluate our model in terms of the accuracy in resolving sharp interfacial features in a compressible rotating flow. Additionally, test cases (b) and (c) also help in assessing the volume conservation properties of the method; and this is an important metric since the volumes of individual phases are not inherently conserved in a compressible flow.

\subsubsection{Drop in a shear flow\label{sec:drop_in_shear_flow}}

Consider a two-dimensional computational square domain of dimensions $[0,1]\times[0,1]$. A circular drop of radius, $R=0.15$, is initially centered at $(0.5,0.75)$. Since the quantity being assessed is the accuracy of the temporal evolution of the interface shape, which is computed by solving the volume fraction advection equation, the hydrodynamics can be decoupled from this test case. This is achieved by not solving the momentum and energy balance equations, and by directly imposing the velocity field in the domain at every time step as
\begin{equation}
\begin{aligned}
    u = -\sin^2\left(\pi x\right)\sin(2\pi y)\cos\left(\frac{\pi t}{T}\right),\\ 
    v = -\sin(2\pi x)\sin^2(\pi y)\cos\left(\frac{\pi t}{T}\right), 
\end{aligned}
\end{equation}
where $T=4$ is the time period of the flow; $t$ is the time coordinate; $x$ and $y$ are the spatial coordinates; and, $u$ and $v$ are the velocity components along $x$ and $y$ directions, respectively. This test case is designed in such a way that the drop undergoes a shearing deformation for half a period until $t=T/2=2$, and then the flow field is reversed due to the $\cos(\pi t/T)$ term, such that the initial drop shape should be recovered at the final time, $t=T=4$. Since the velocity field is chosen to be incompressible, the drop undergoes deformation without changing its volume. The magnitude of the velocity field, $\left\Vert\vec{u}\right\Vert_2$, is plotted in Figure \ref{fig:velocity_drop_in_vortex} along with the streamlines at the initial time, $t=0$. 

\begin{figure}
\centering
\includegraphics[width=0.5\textwidth]{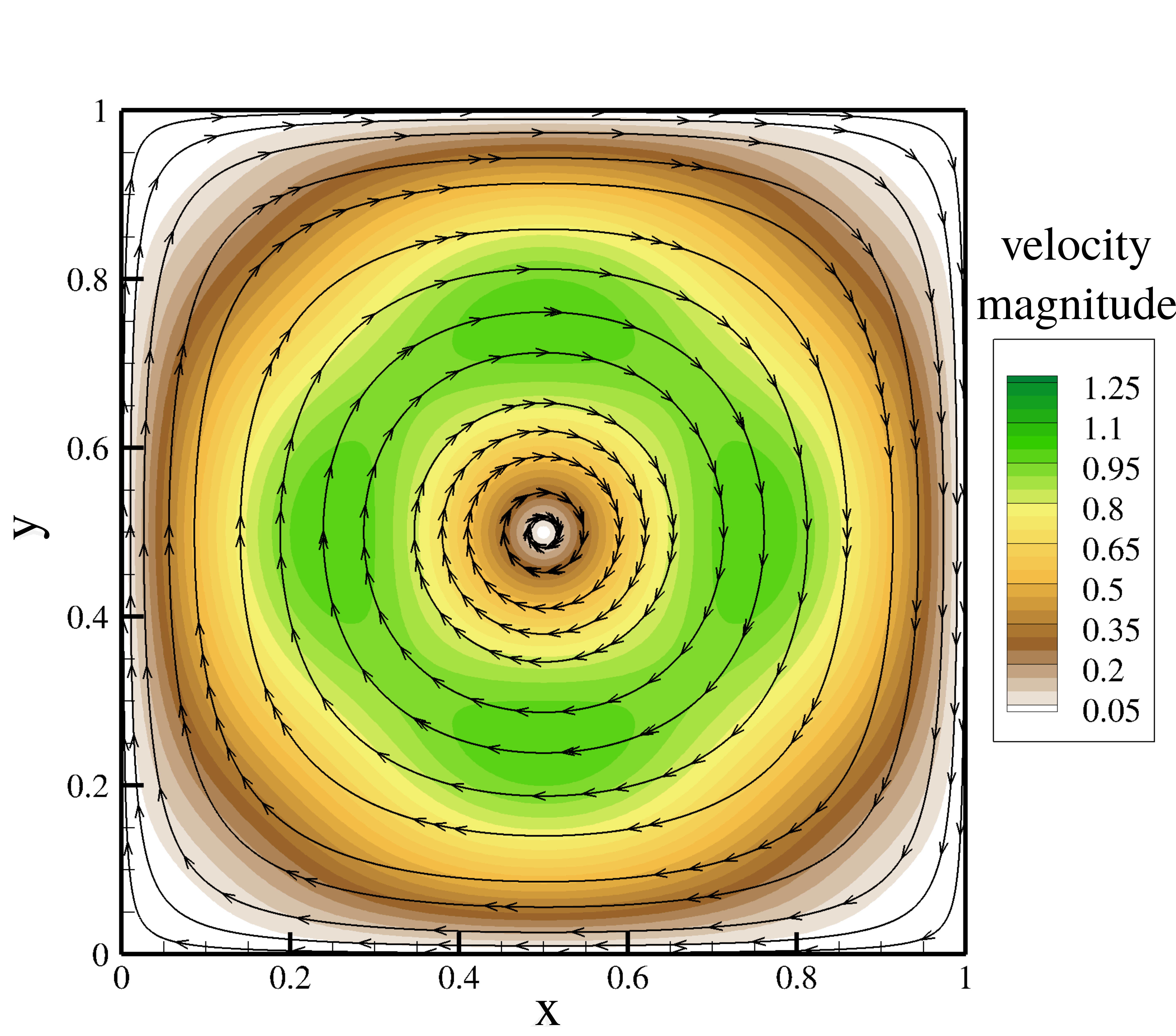}
\caption{The imposed velocity field in the domain for the drop-in-a-shear-flow case at the initial time, $t=0$. The color field represents the magnitude of the velocity field; and the lines represent the streamlines of the velocity field, with the arrows showing the direction of the flow.}
\label{fig:velocity_drop_in_vortex}
\end{figure}

\begin{figure}
    \centering
    \includegraphics[width=\textwidth]{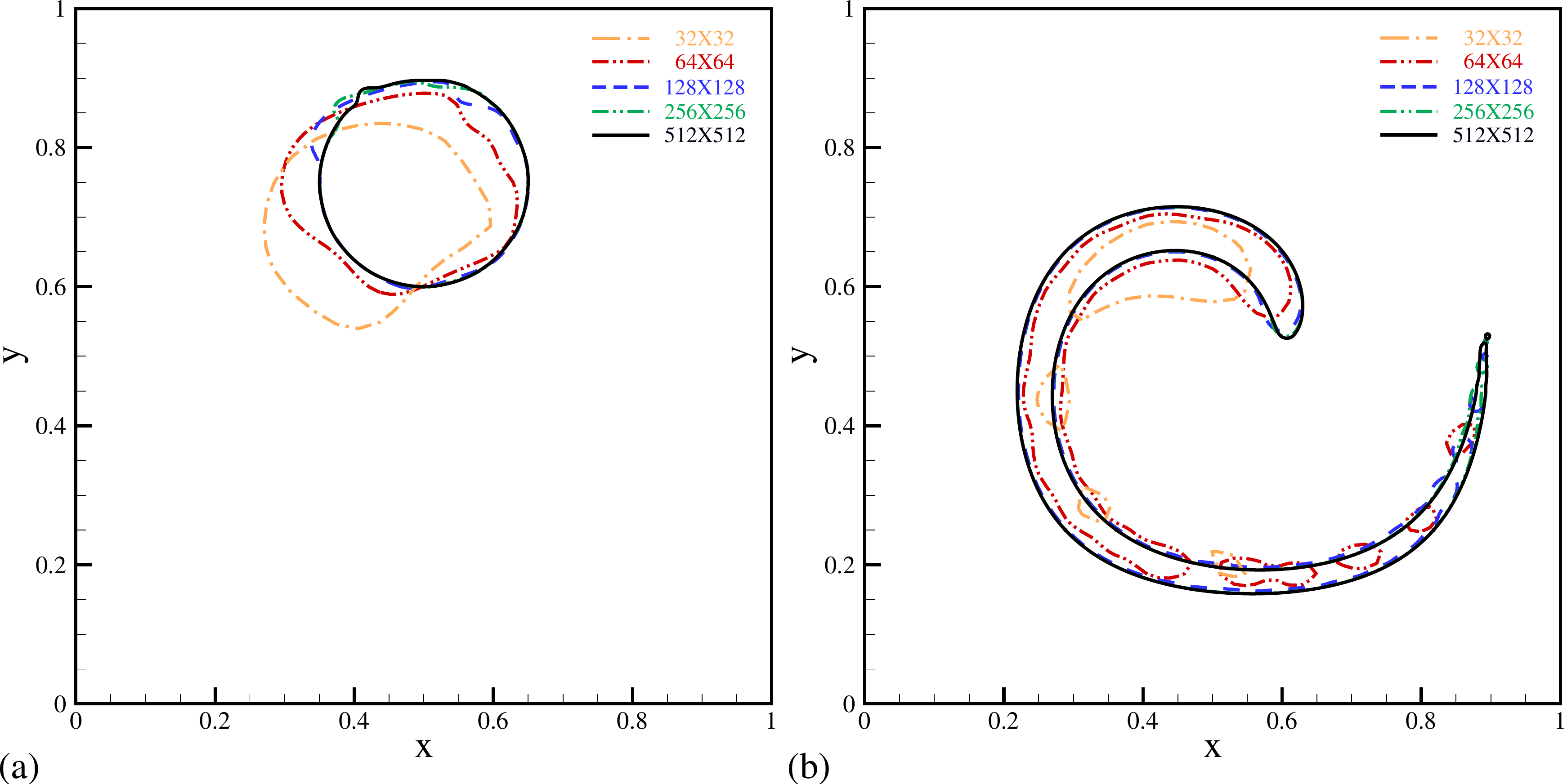}
    \caption{The computed drop shape after (a) half a period at $t=T/2=2$, and (b) a full period at $t=T=4$, on five different grids: $32^2$, $64^2$, $128^2$, $256^2$, and $512^2$. The interface is defined by the $\phi=0.5$ contour.}
    \label{fig:drop_in_vortex_result}
\end{figure}

The domain was discretized using $N_x\times N_y$ grid points; and five different grids, $32^2$, $64^2$, $128^2$, $256^2$, and, $512^2$, were chosen to study the convergence of the error in the shape of the drop. The values of $\epsilon=\Delta x$ and $\Gamma=|u|_{max}$ were used in the simulation. These values were chosen such that they satisfy the boundedness criterion in Eq. \eqref{eq:crossover}, and the TVD criterion in Eq. \eqref{eq:3Dtvd} (see Section \ref{sec:parameters}). Figure \ref{fig:drop_in_vortex_result} shows the resultant shape of the drop obtained at $t=T/2$ and $t=T$ on five different grids. With an increase in the number of grid points, a clear convergence in the drop shape can be seen. Since at the final time, the drop is supposed to return to its original shape at the initial time, the ``exact" final shape of the drop is known to be a circle, and hence the error in the ``actual" shape of the drop obtained can be computed. We compute the error as
\begin{equation}
NS_{error} = \frac{\left\Vert\phi_{f} - \phi_{in}\right\Vert_1}{N_x\times N_y}, 
\label{eq:ns_error}
\end{equation}
where $NS_{error}$ is the cell-normalized shape error, $\phi_f$ is the final volume fraction field, and $\phi_{in}$ is the initial volume fraction field. The total error, $\left\Vert \phi_f - \phi_{in} \right\Vert_1$, is normalized by the number of grid points, $N_x \times N_y$, so that the resultant error, $NS_{error}$, is independent of the number of grid points, and can be compared across simulations performed on different grids. Alternatively, $\phi=0.5$ contour can be chosen to be the location of the interface, and the error associated with the interface shape can be calculated. However, the volume enclosed (area in two dimensions) by the $\phi=0.5$ contour is not a conserved quantity in a diffuse-interface method; and hence the error defined in Eq. (\ref{eq:ns_error}) is often preferred over the latter. The computed shape error, $NS_{error}$, on five different grids are listed in Table \ref{tab:drop_in_shear_flow} along with the order of convergence, $SE_{order}$. The shape error decreases with an increase in the number of grid points, with an order of convergence roughly between $1$ and $2$. The shape error is also plotted against the grid size in Figure \ref{fig:error}; and the slopes of the individual line segments represent the local order of convergence that is listed in Table \ref{tab:drop_in_shear_flow}.

\begin{table}
\centering
\begin{tabular}{@{}lll@{}}
\toprule
Grid & $NS_{error}$ & $SE_{order}$ \\ \midrule
$32\times32$ & 0.05344 & \\
$64\times64$ & 0.02174 & 1.2290 \\
$128\times128$ & 0.004724 & 2.3010 \\
$256\times256$ & 0.001946 & 1.2139 \\ 
$512\times512$ & 0.0006397 & 1.5210 \\ \bottomrule
\end{tabular}
\caption{Grid convergence of the interface shape error for the drop-in-a-shear-flow case.}
\label{tab:drop_in_shear_flow}
\end{table}

\begin{figure}
    \centering
    \includegraphics[width=0.5\textwidth]{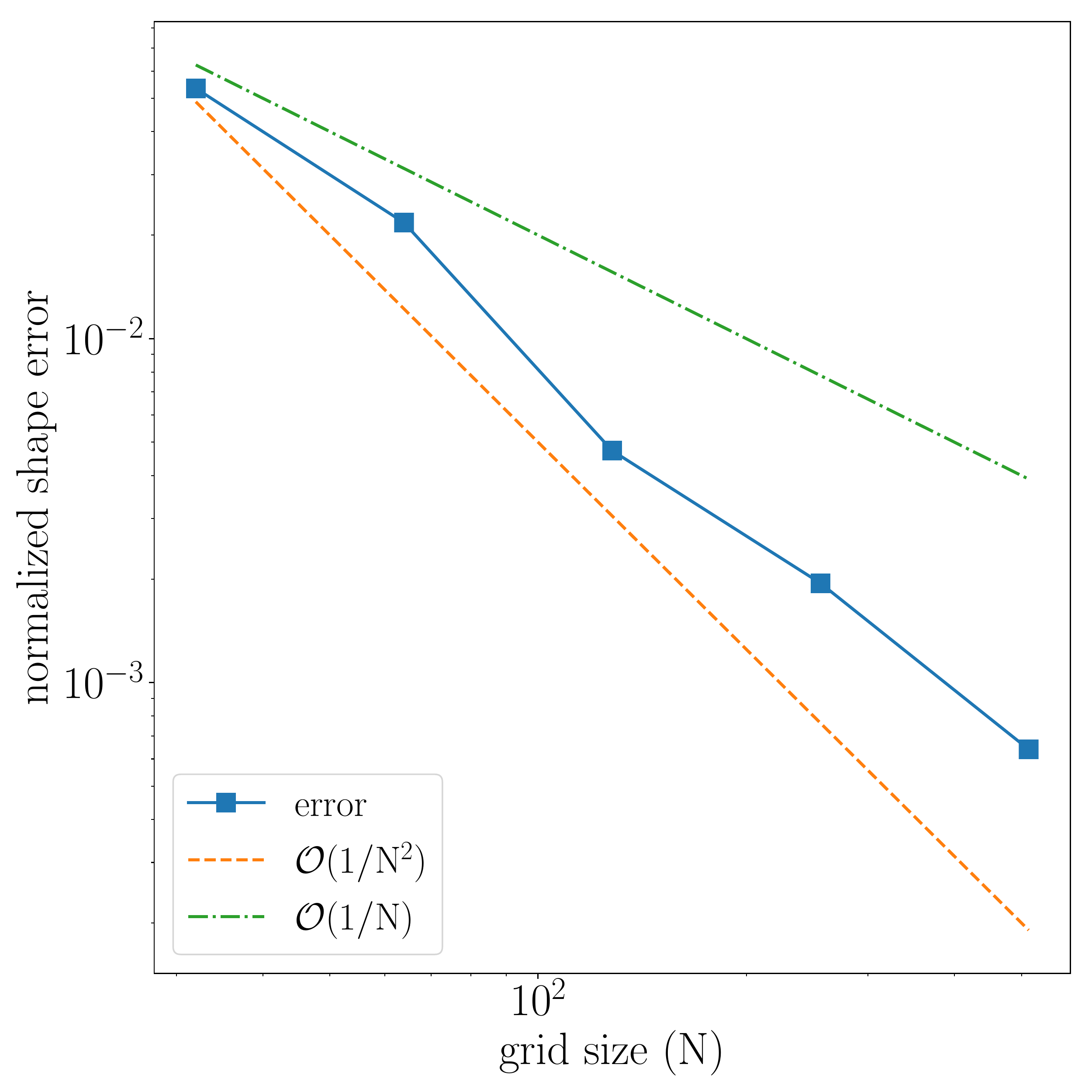}
    \caption{The plot of normalized shape error, $NS_{error}$, against the grid size, $N$, where $N=N_x=N_y$ represents the number of grid points along one of the directions. The dashed and dash-dotted lines represent the reference lines with slopes $1/N^2$ and $1/N$, respectively.}
    \label{fig:error}
\end{figure}

\subsubsection{Drop in a compressible shear flow\label{sec:drop_in_compressible_flow}}

Consider a two-dimensional computational square domain of dimensions $[0,1]\times[0,1]$. A circular drop of radius, $R=0.15$, is initially centered at $(0.5,0.75)$. Since the proposed conservative diffuse-interface method can handle compressibility effects, the imposed velocity field is composed of both solenoidal, $\vec{u}_s$, and dilatational, $\vec{u}_d$, components given by
\begin{equation}
\begin{aligned}
    u_s = -\sin^2(\pi x)\sin(2\pi y)\cos\left(\frac{\pi t}{T}\right),\\ 
    v_s = -\sin(2\pi x)\sin^2(\pi y)\cos\left(\frac{\pi t}{T}\right),\\ 
    u_d = (y - x) \cos\left(\frac{\pi t}{T}\right),\\ 
    v_d = (- x - y + 1) \cos\left(\frac{\pi t}{T}\right),
\end{aligned}
\end{equation}
where $T=2$ is the time period of the flow; $u_s$ and $v_s$ are the solenoidal velocity components along $x$ and $y$ directions, respectively; and $u_d$ and $v_d$ are the dilatational velocity components along $x$ and $y$ directions, respectively. The total imposed velocity, $\vec{u}$, is therefore sum of $\vec{u}_s$ and $\vec{u}_d$ in the domain at every time step. The dilatation is spatially uniform in the domain and is given by
\begin{equation}
    \vec{\nabla}\cdot\vec{u} = -2 \cos\left(\frac{\pi t}{T}\right).
    \label{eq:dil}
\end{equation}
This test case is designed in such a way that the drop undergoes a shearing deformation along with a uniform compression for half a period until $t=T/2=1$, and then the flow field is reversed due to the $\cos(\pi t/T)$ term, such that the initial drop shape and the volume should be recovered at the final time, $t=T=2$. The compression ratio of the drop can be defined as
\begin{equation}
        CR=\frac{V_{in}}{V_h}=3.57,
\end{equation}
where $V_{in}$ is the initial volume of the drop; and $V_h$ is the volume of the drop at $t=T/2$. The magnitude of the velocity field, $\left\Vert\vec{u}\right\Vert_2$, is plotted in Figure \ref{fig:drop_in_compressible_flow} (a) along with the streamlines, at the initial time; and the temporal evolution of the volume of the drop is plotted in Figure \ref{fig:drop_in_compressible_flow} (b).

\begin{figure}
\centering
\includegraphics[width=\textwidth]{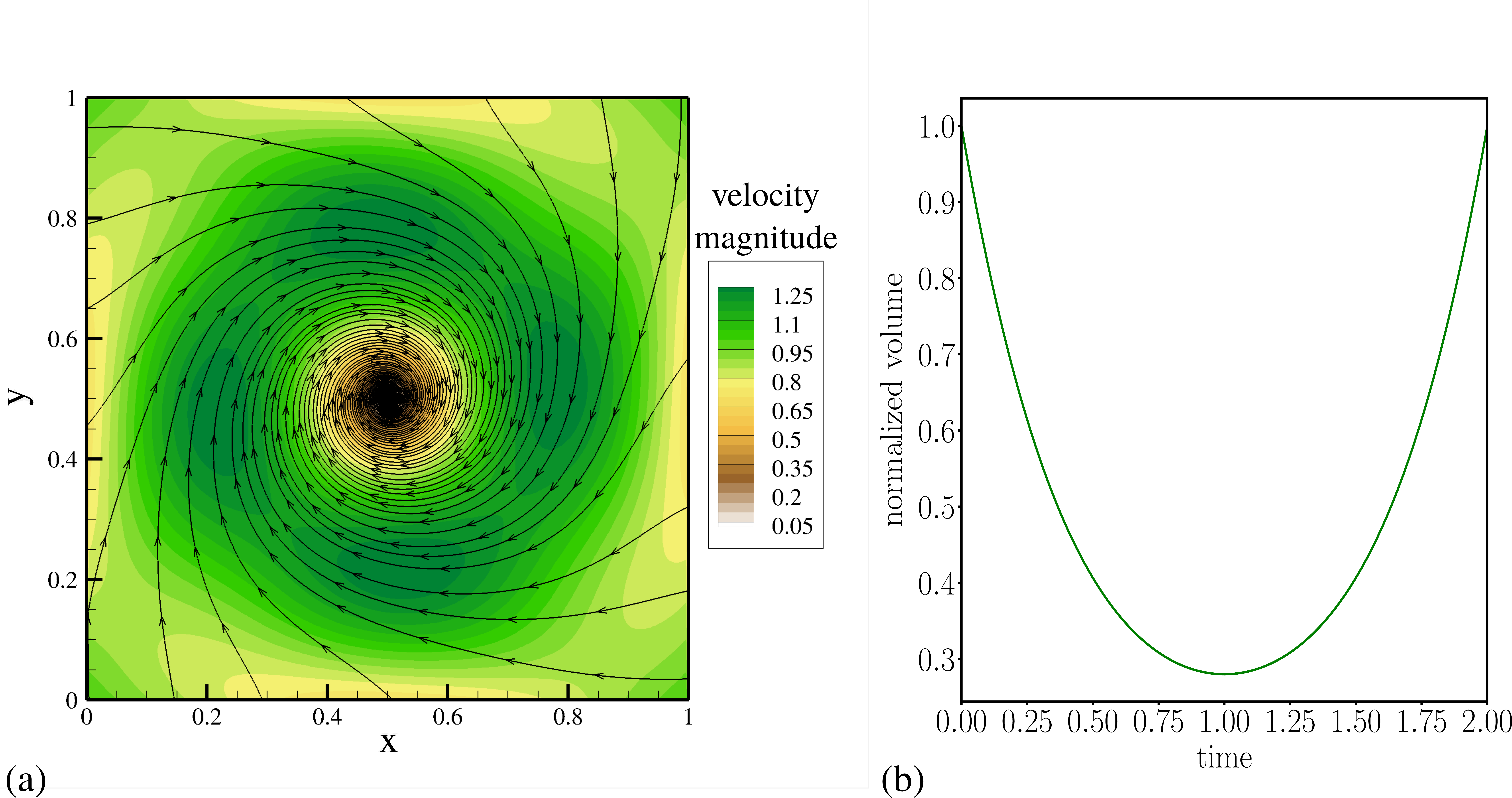}
\caption{(a) The imposed velocity in the domain for the drop-in-a-compressible-shear-flow case at the initial time, $t=0$. The color field represents the magnitude of the velocity field; and the lines represent the streamlines of the velocity, with the arrows showing the direction of the flow. (b) The volume of the drop as a function of time, showing that the initial volume is recovered at the final time, $t=2$.}
\label{fig:drop_in_compressible_flow}
\end{figure}

\begin{figure}
    \centering
    \includegraphics[width=\textwidth]{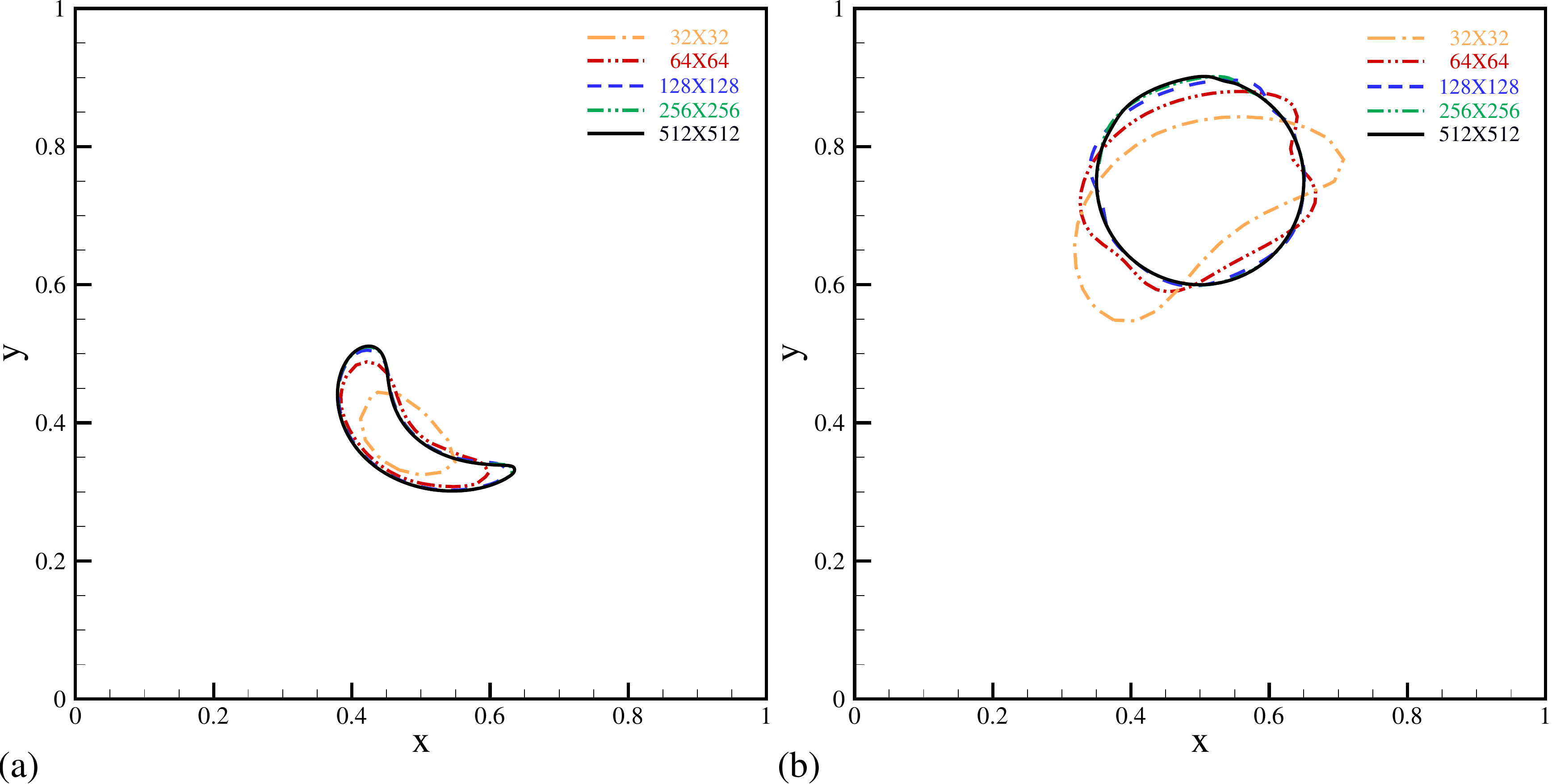}
    \caption{The computed drop shape  after (a) half a period at $t=T/2=1$, and (b) a full period at $t=T=2$, on five different grids: $32^2$, $64^2$, $128^2$, $256^2$, and $512^2$. The interface is defined by the $\phi=0.5$ contour.}
    \label{fig:drop_in_compressible_flow_result}
\end{figure}

The domain was discretized using $N_x\times N_y$ grid points; and five different grids, $32^2$, $64^2$, $128^2$, $256^2$, and $512^2$, were chosen to study the convergence of the error in the shape and in the volume of the drop. The values of $\epsilon=\Delta x$ and $\Gamma=|u|_{max}$ were used in the simulation. Figure \ref{fig:drop_in_compressible_flow_result} shows the resultant shape of the drop obtained at $t=T/2$ and $t=T$ on five different grids. With an increase in the number of grid points, a clear convergence in the drop shape can be seen. Since at the final time the drop is supposed to return to its original shape and volume, the ``exact" final shape and the volume of the drop is known; and hence the error in the ``actual" shape and the volume of the drop can be computed. We compute the shape error as already defined in Eq. (\ref{eq:ns_error}), and the volume error as 
\begin{equation}
V_{error} = V_{f} - V_{in}=\int_\Omega (\phi_f - \phi_{in}) dV=\sum_{i=1}^{i=N_x\times N_y} \left(\frac{\phi_{f,i} - \phi_{in,i}}{N_x\times N_y}\right),
\label{eq:nv_error}
\end{equation}
where $V_f$ is the final volume of the drop; and $\Omega$ is the domain. Similar to the normalized shape error, the total volume error, $V_{error}$, is a quantity that is independent of the number of grid points, and can be compared across simulations performed on different grids. Additionally, we also define the percentage change in volume of the drop as
\begin{equation}
\%V_{error} = \frac{V_{f} - V_{in}}{V_{in}}\times100 = \frac{\sum_{i=1}^{i=N_x\times N_y}\left(\phi_{f,i}-\phi_{in,i}\right)}{\sum_{i=1}^{i=N_x\times N_y}\phi_{in,i}}\times100,
\label{eq:pv_error}
\end{equation}
which is also independent of the number of grid points and can be compared across simulations performed on different grids. The computed shape error, $NS_{error}$, the volume error, $V_{error}$, and the percent volume change, $\%V_{error}$, on five different grids are listed in Table \ref{tab:drop_in_compressible_flow} along with the order of convergence for the shape error, $SE_{order}$. The shape error decreases with an increase in the number of grid points, with an order of convergence approximately equal to $2$, which is better than the incompressible case (Table \ref{tab:drop_in_shear_flow}). The absolute values of $NS_{error}$ are also smaller compared to the incompressible case, which could be due to the reduced volume of the drop that results in reduced shearing deformation. $V_{error}$ and $\%V_{error}$ also decrease with an increase in the number of grid points, but at a much higher rate; and $V_{error}$ can be seen to have reached machine precision for the $256^2$ and $512^2$ grids, even for a compression ratio as high as $CR=3.57$. This shows that the proposed method has good volume conservation properties.

\begin{table}
\centering
\begin{tabular}{@{}lllll@{}}
\toprule
Grid & $V_{error}$ & $\%V_{error}$ & $NS_{error}$ & $SE_{order}$\\ \midrule
$32\times32$ & $2.0010 \times10^{-4}$ & $0.2488$ & 0.04529 & \\
$64\times64$ & $5.2806 \times10^{-6}$ & $0.007213$ & 0.01581 & 1.4317 \\
$128\times128$ & $1.2158 \times10^{-9}$ & $1.7048 \times10^{-6}$ & 0.003924 & 2.0153 \\
$256\times256$ & $-2.1663 \times10^{-14}$ & $-3.0579 \times10^{-11}$ & 0.0009728 & 2.01663 \\ 
$512\times512$ & $1.5266 \times10^{-15}$ & $2.1584 \times10^{-12}$ & 0.0002654 & 1.8329 \\ \bottomrule
\end{tabular}
\caption{Grid convergence of the shape and volume error for the drop-in-a-compressible-shear-flow case.}
\label{tab:drop_in_compressible_flow}
\end{table}



\subsubsection{Star in a spiralling flow\label{sec:star_in_spiralling_flow}}

In this test case, a two-dimensional computational square domain, $[-0.5,0.5]\times[-0.5,0.5]$, is used. A star shaped drop of radius, $R=0.2(1 + {\cos(4\theta)}/{4})$, is initially centered at $(0,0)$. Unlike the test case in Section \ref{sec:drop_in_compressible_flow}, the imposed velocity field is composed of only the dilatational component, $\vec{u}=\vec{u}_d$, given by
\begin{equation}
\begin{aligned}
u_d = (y - x) \cos\left(\frac{\pi t}{T}\right),\\ 
v_d = (- x - y) \cos\left(\frac{\pi t}{T}\right),
\end{aligned}
\end{equation}
where $T=2$ is the time period of the flow. The dilatation is spatially uniform in the domain, and is the same as in the test case in Section \ref{sec:drop_in_compressible_flow} and is given in Eq. (\ref{eq:dil}). 

This test case is designed in such a way that the star undergoes a rotational motion along with a uniform compression for half a period until $t=T/2=1$, and then the flow field is reversed due to the $\cos(\pi t/T)$ term, such that the initial star shape and the volume should be recovered at the final time, $t=T=2$. The compression ratio in this test case is also $3.57$. The magnitude of the velocity field, $\left\Vert\vec{u}\right\Vert_2$, is plotted in Figure \ref{fig:star_in_spiralling_flow} along with the streamlines, at the initial time.


\begin{figure}
\centering
\includegraphics[width=0.5\textwidth]{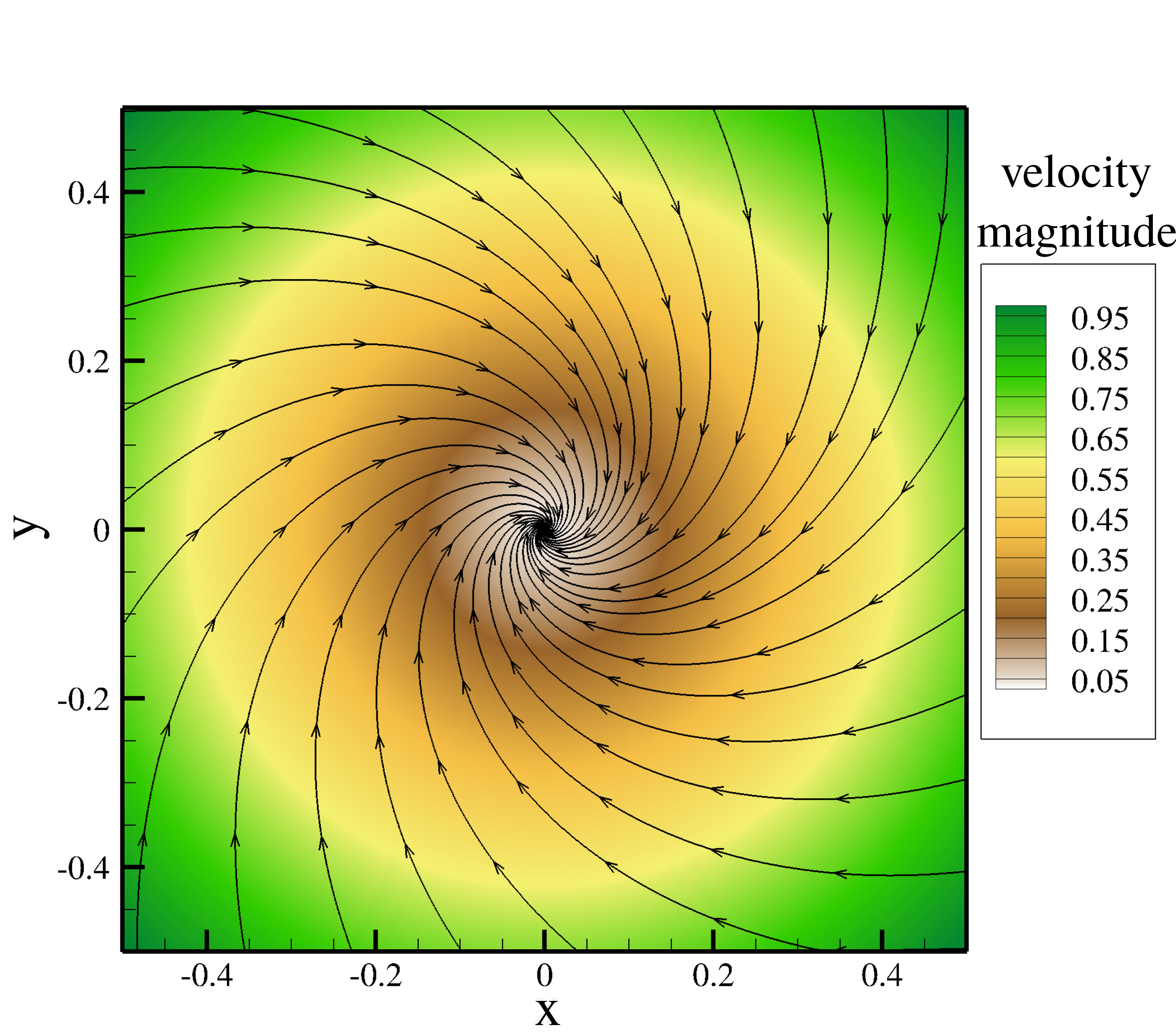}
\caption{The imposed velocity field in the domain for the star-in-a-spiralling-flow case at $t=0$. The color field represents the magnitude of the velocity field and the lines represent the streamlines along with the arrows showing the direction of the flow.}
\label{fig:star_in_spiralling_flow}
\end{figure}

\begin{figure}
    \centering
    \includegraphics[width=\textwidth]{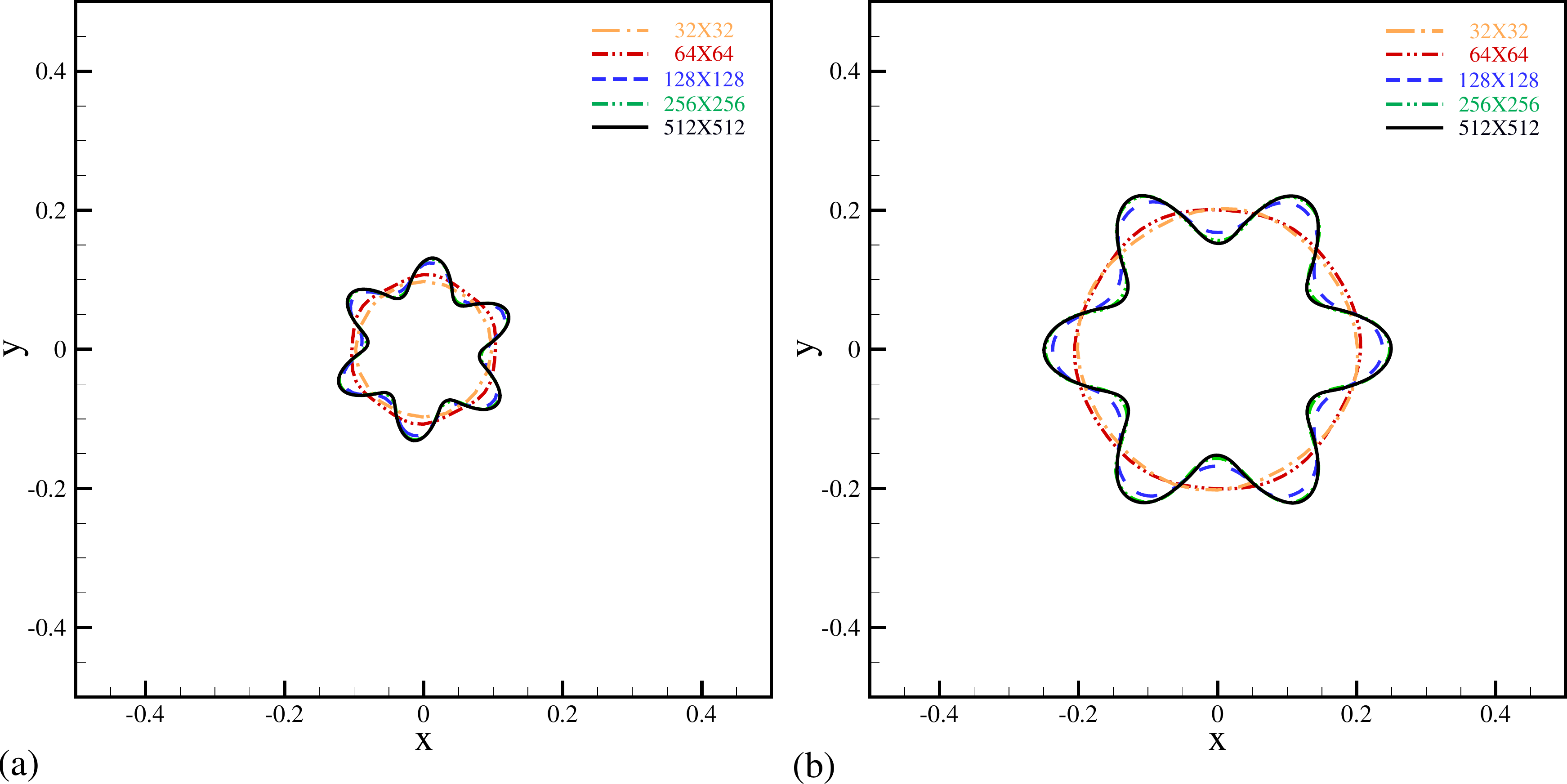}
    \caption{The computed star shaped drop after (a) half a period at $t=T/2=1$, and (b) a full period at $t=T=2$, on five different grids: $32^2$, $64^2$, $128^2$, $256^2$, and $512^2$. The interface is defined by the $\phi=0.5$ contour.}
    \label{fig:star_in_spiralling_flow_result}
\end{figure}

The domain was discretized using $N_x\times N_y$ grid points; and five different grids, $32^2$, $64^2$, $128^2$, $256^2$, and $512^2$, were chosen to study the convergence of the error in the shape and in the volume of the star. The values of $\epsilon=\Delta x$ and $\Gamma=|u|_{max}$ were used in the simulation. Figure \ref{fig:star_in_spiralling_flow_result} shows the resultant shape of the star obtained at $t=T/2$ and $t=T$ on five different grids. With an increase in the number of grid points, a clear convergence in the star shape can be seen. Since at the final time the star is supposed to return to its original shape and volume, the ``exact" final shape and the volume of the star is known; and hence the error in the ``actual" shape and the volume of the star can be computed. We compute the shape error as already defined in Eq. (\ref{eq:ns_error}), the volume error as defined in Eq. (\ref{eq:nv_error}), and the percentage change in volume as defined in Eq. (\ref{eq:pv_error}). The computed shape error, $NS_{error}$, the volume error, $V_{error}$, and the percent volume change, $\%V_{error}$, on five different grids are listed in Table \ref{tab:star_in_spiralling_flow} along with the order of convergence for the shape error, $SE_{order}$. 

The shape error decreases with an increase in the number of grid points, with an order of convergence between $1$ and $2$ for grid sizes $128^2$ and higher. The absolute values of $NS_{error}$ are also higher compared to the drop-in-a-compressible-shear-flow case, which could be due to the presence of sharp features in the star shaped drop. Interestingly, the sub-first-order convergence for the shape error could be due to the total loss of the sharp interface features on the star for the grids $32^2$ and $64^2$, as can be seen in Figure \ref{fig:star_in_spiralling_flow_result}. This shows that there exists a minimum grid size, $\Delta x_c$, that is required to resolve the sharp interface features in the flow; and the $\Delta x_c$ is clearly not met for the grids $32^2$ and $64^2$ when the star is in the fully compressed state at $t=T/2$. A zoomed-in image of the star at $t=T/2$, computed on the $64^2$ grid is shown in Figure \ref{fig:zoom_curv}(a), and on the $128^2$ grid is shown in Figure \ref{fig:zoom_curv}(b) along with the meshes; and the star computed on the $512^2$ grid is also shown as a reference. The sharp curved features of the star have around $2$ to $3$ grid points across them, on the $64^2$ grid, when the star is in the most compressed state, and hence are not resolved. However, doubling the resolution, results in a much better representation of the sharp features on the $128^2$ grid. The volume error, $V_{error}$, and the percent volume change, $\%V_{error}$, also decrease with an increase in the number of grid points, but at a much higher rate. The $V_{error}$ can be seen to have reached machine precision for the grid sizes $128^2$ and higher, and hence faster compared to the drop-in-a-compressible-shear-flow case. Hence, in brief, the proposed method requires at least 3 grid points to resolve sharp interfacial features in the flow. However, the volume conservation property is fairly insensitive to the resolution of the sharp features on the grid, which is a favorable quality of the method.

\begin{figure}
    \centering
    \includegraphics[width=\textwidth]{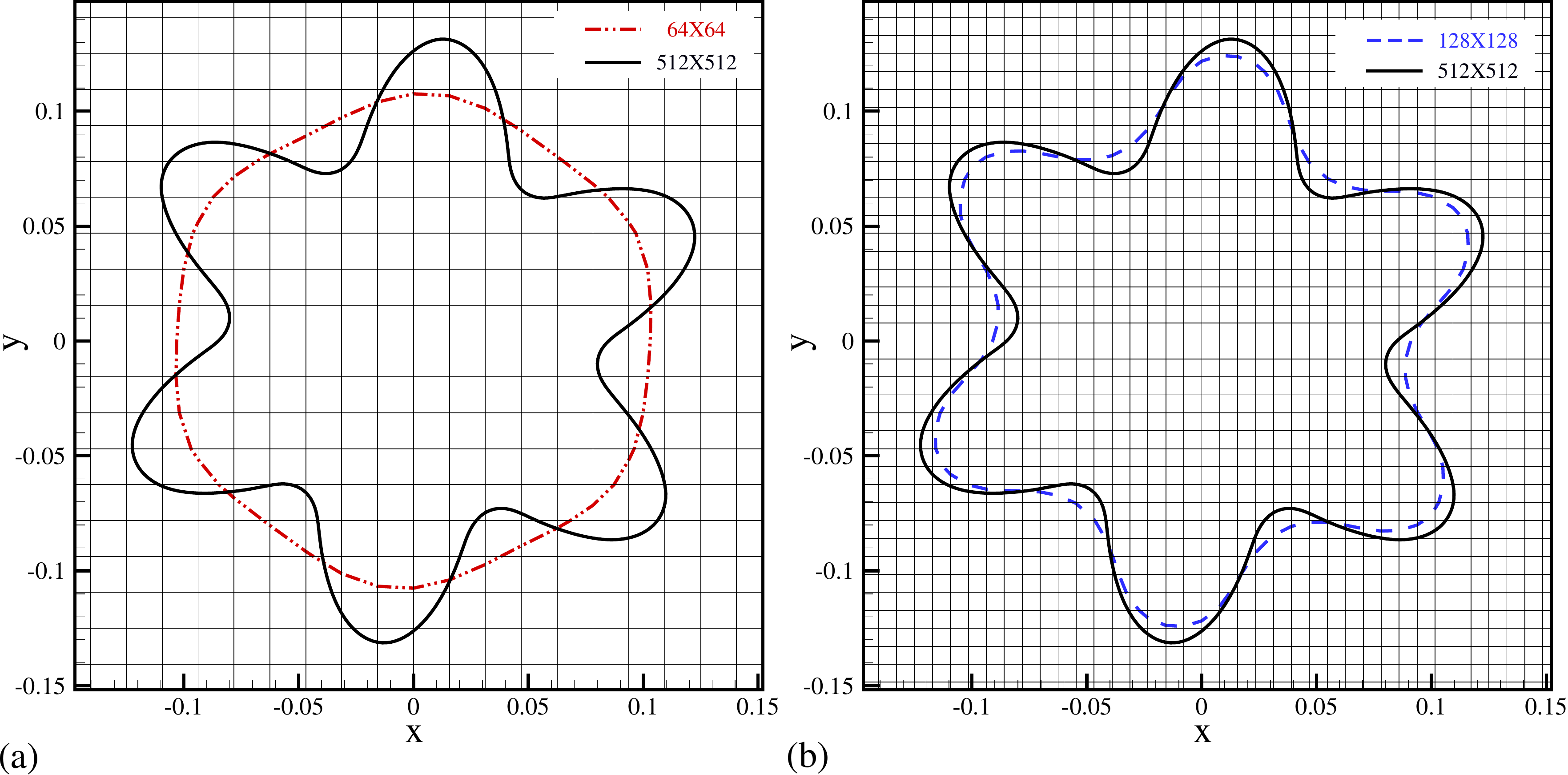}
    \caption{Zoomed-in image of the star in the most compressed state at $t=T/2=1$, computed on (a) the $64^2$ and $512^2$ grids, and on (b) the $128^2$ and $512^2$ grids. The mesh in (a) is shown for the $64^2$ grid case, and the mesh in (b) is shown for the $128^2$ grid case.}
    \label{fig:zoom_curv}
\end{figure}

\begin{table}
\centering
\begin{tabular}{@{}lllll@{}}
\toprule
Grid & $V_{error}$ & $\%V_{error}$ & $NS_{error}$ & $SE_{order}$ \\ \midrule
$32\times32$ & $-5.8102 \times10^{-5}$ & $-0.04158$ & $0.04023$ & \\
$64\times64$ & $-1.0315 \times10^{-8}$ & $-7.8079 \times10^{-6}$ & $0.03803$ & 0.5290 \\
$128\times128$ & $-2.0761 \times10^{-14}$ & $-1.5943 \times10^{-11}$ & $0.01255$ & 1.5155 \\
$256\times256$ & $-3.067 \times10^{-14}$ & $-2.3638 \times10^{-11}$ & $0.004503$ & 1.3929 \\
$512\times512$ & $7.6328 \times10^{-15}$ & $5.8881 \times10^{-12}$ & $0.001836$ & 1.2261 \\ \bottomrule
\end{tabular}
\caption{Grid convergence of the shape and volume error for the star-in-a-spiralling-flow case.}
\label{tab:star_in_spiralling_flow}
\end{table}

\subsubsection{Effect of interface thickness parameter $\epsilon$ \label{sec:parameters}}

As already discussed in Section \ref{sec:tvd}, for all the simulations presented in this work, the time-step size, $\Delta t$, was chosen to satisfy the acoustic CFL condition, and the criterion in Eq. (\ref{eq:crossover}) was used to define $\Gamma$ for a given $\epsilon$. These choices of $\Delta t$ and $\Gamma$ were sufficient to maintain the boundedness and TVD properties, since the additional criteria on $\Delta t$ in Eq. (\ref{eq:3Dboundtime}) and on $\Gamma$ in Eq. (\ref{eq:3Dtvd}) were already satisfied, and did not pose additional constraints. However, one needs to be aware that these criteria could potentially add additional restrictions on $\Delta t$ and $\Gamma$ in more severe flow conditions, such as flows that involve shocks (shock-interface and shock-turbulence interactions). 

The $\Gamma$ parameter represents an artificial regularization velocity scale; and the value of $\Gamma$ obtained from the criterion in Eq. (\ref{eq:crossover}) is such that the interface-regularization terms are the stiffest terms in the volume fraction advection equation. As a result, the interface is maintained as close as possible to the equilibrium shape at all times. 

The $\epsilon$ parameter represents an interface thickness scale; and the thickness of the interface is $\approx2\epsilon$. Therefore, as $\epsilon/\Delta x\rightarrow 0.5$, the numerical solution of the diffuse-interface method reaches the limit of a sharp-interface method, where the interface thickness is $\approx \Delta x$. However, from Eq. (\ref{eq:crossover}), to maintain the boundedness of $\phi$, this requires that $\Gamma/|u|_{max}\rightarrow\infty$, which is not practical, since $\Delta t \rightarrow 0$ as $\Gamma$ approaches $\infty$ according to Eq. (\ref{eq:3Dboundtime}). Therefore, practically, one could reduce $\epsilon$ to an extent that the increase in $\Gamma$ does not add any additional constraints on $\Delta t$ already imposed by the physical CFL limits (acoustic, convective, viscous, and thermal) in the problem. The increase in $\Gamma$ not only adds additional constraints on the time step, but could potentially lead to artificial alignment of the interface along the grid \citep{chiodi2017reformulation}, since an increase in the value of $\Gamma$ is equivalent to performing more reinitialization. Hence the choice of $\epsilon$ and $\Gamma$ is a tradeoff between accuracy and cost.

Since $\epsilon$ is an important parameter that governs the accuracy of the method, we studied the effect of $\epsilon$ on the drop-in-a-shear-flow, drop-in-a-compressible-shear-flow, and star-in-a-spiralling-flow cases. Decreasing the value of $\epsilon$ does not necessarily imply better accuracy, since it requires an increased value of $\Gamma$, which could reduce the accuracy because of stronger reinitialization. The values of $\epsilon=\Delta x$ and $\Gamma=|u|_{max}$ are optimal, and are used for all the test cases presented in this work, unless specified otherwise. However, in this Section we use the values of $\epsilon=0.75\Delta x$ and $\Gamma=2|u|_{max}$, and compare the results against the results from Sections \ref{sec:drop_in_shear_flow}-\ref{sec:star_in_spiralling_flow}. Note here that the value of $\Gamma$ is fixed by the criterion in Eq. \eqref{eq:crossover}, and hence $\epsilon$ is the only free parameter.

Table \ref{tab:sharp} lists the computed shape error, $NS_{error}$, the volume error, $V_{error}$, and the percent volume change, $\%V_{error}$, on five different grids with $\epsilon=0.75\Delta x$ and $\Gamma=2|u|_{max}$. Compared to the simulations with $\epsilon=\Delta x$ and $\Gamma=|u|_{max}$ presented in Tables \ref{tab:drop_in_shear_flow}-\ref{tab:star_in_spiralling_flow}, the sharper interface\textemdash small $\epsilon$\textemdash simulations have similar shape error, but a significantly lower volume error and percent volume change, i.e., the volume error reaches machine precision values for much coarser grids and hence, the accuracy is higher. Hence a decrease in the $\epsilon$, i.e., a sharper interface, results in better volume conservation. However in this test case, the increase in $\Gamma$ was not large enough to adversely impact the shape of the interface, or to add additional constraints on $\Delta t$.



\begin{table}
\centering
\begin{tabular}{@{}lllll@{}}
\toprule
Grid & $V_{error}$ & $\%V_{error}$ & $NS_{error}$ & $SE_{order}$\\ 
\midrule
\multicolumn{5}{c}{Drop in a shear flow}\\
\midrule
$32\times32$ & & & 0.06664 & \\
$64\times64$ & & & 0.02848 & 1.1700 \\
$128\times128$ & & & 0.007160 & 1.9886 \\
$256\times256$ & & & 0.002564 & 1.3961 \\ 
$512\times512$ & & & 0.0005550 & 2.3101 \\ 
\midrule
\multicolumn{5}{c}{Drop in a compressible shear flow}\\
\midrule
$32\times32$ & $-1.9812 \times10^{-4}$ & $-0.2597$ & 0.04561 & \\
$64\times64$ & $3.2340 \times10^{-7}$ & $0.0004485$ & 0.01531 & 1.4897 \\
$128\times128$ & $1.8007 \times10^{-11}$ & $2.5348 \times10^{-8}$ & 0.003738 & 2.0477 \\
$256\times256$ & $-4.5797 \times10^{-16}$ & $-6.4708 \times10^{-13}$ & 0.0009116 & 2.05021 \\ 
$512\times512$ & $-1.1061 \times10^{-14}$ & $1.5643 \times10^{-11}$ & 0.0002558 & 1.7817 \\ 
\midrule
\multicolumn{5}{c}{Star in a spiralling flow}\\
\midrule
$32\times32$ & $-4.4604 \times10^{-7}$ & $-0.0003296$ & $0.03939$ & \\
$64\times64$ & $9.0111 \times10^{-13}$ & $6.8781 \times10^{-10}$ & $0.03967$ & 0.4964 \\
$128\times128$ & $-3.0503 \times10^{-14}$ & $-2.3474 \times10^{-11}$ & $0.01317$ & 1.5064 \\
$256\times256$ & $-3.3584 \times10^{-15}$ & $-2.5898 \times10^{-12}$ & $0.004347$ & 1.5144 \\
$512\times512$ & $-1.8707 \times10^{-14}$ & $-1.4433 \times10^{-11}$ & $0.001964$ & 1.1068 \\ \bottomrule
\end{tabular}
\caption{Comparison of the shape error, the volume error, and the percent change in volume for the drop-in-a-shear-flow, drop-in-a-compressible-shear-flow, and star-in-a-spiralling-flow cases with $\epsilon=0.75\Delta$ and $\Gamma = 2|u|_{max}$.}
\label{tab:sharp}
\end{table}



\subsection{Surface tension test: Oscillating drop \label{sec:surface_test}}

This section contains a standard test case that is used to assess the accuracy of the model in simulating flows dominated by surface tension effects. It was previously used by \citet{perigaud2005compressible,olsson2007conservative,ii2012interface,shukla2014nonlinear}, and \citet{garrick2017finite}. Consider a two-dimensional computational square domain of dimensions, $[-2,2]\times[-2,2]$. An initially ellipse-shaped drop is placed at $(0,0)$, and is at rest. The shape of the drop is given by the equation
\begin{equation}
    \frac{x^2}{1.25^2} + \frac{y^2}{0.8^2} = 1.
\end{equation}
Since the equilibrium shape of the drop is a circle, surface tension forces deform the drop towards its equilibrium shape. The balance of inertia and the surface tension forces results in an oscillating drop that eventually goes to rest when all the energy\textemdash kinetic and surface tension\textemdash is lost due to viscous dissipation.

The properties of the fluid in the drop are $\rho_l=1000$, $\mu_l = 8.9\times10^{-4}$, $\pi_l = 6000$, and $\gamma_l = 4.4$; and for the surrounding fluid are $\rho_g = 1$, $\mu_g=1.81\times10^{-5}$, $\pi_g = 0$, and $\gamma_g = 1.4$. The surface tension coefficient for the interface between the fluids is $\sigma = 1$. The domain was discretized using $N_x\times N_y$ grid points and three different grids, $100^2$, $200^2$, and $400^2$, were chosen to study the convergence following \citet{garrick2017finite}. The values of $\epsilon=\Delta x$ and $\Gamma=\left\vert u\right \vert_{max}$ were used in the simulation. The total time of integration was $T_{tot} = 120$.

Figure \ref{fig:oscillating_drop} shows the computed global kinetic energy ($\int\rho \left\Vert\vec{u}\right\Vert_2 dV$) on three different grids along with the results from \citet{garrick2017finite}. The total energy, $E_o$\textemdash sum of kinetic energy and surface tension energy\textemdash is also shown in Figure \ref{fig:oscillating_drop} as a reference. Here, $E_o$ is a conserved quantity in the absence of viscous dissipation. However, the viscous dissipation is not zero but negligible due to the small $\mu_l$ and $\mu_g$ values. The periods of oscillation on all three grids are identical and are in good agreement with the results from \citet{garrick2017finite}. 

Figure \ref{fig:oscillating_drop} also shows that with the present numerical scheme, the global kinetic energy does not spuriously decay, and is fairly constant throughout the simulation on all grids, indicating the non-dissipative nature of the scheme. Small differences in the global kinetic energy at later times in the simulation ($t>100$) could be due to: (a) the physical viscous dissipation; (b) the non-conservative surface tension model; and (c) the spurious currents. However, the combined effects of these are still quite small, and the results can be considered grid independent. In contrast, the calculations of \citet{garrick2017finite} clearly show significant energy decay, presumably owing to the numerical dissipation in their method. 

\begin{figure}
    \centering
    \includegraphics[width=0.8\textwidth]{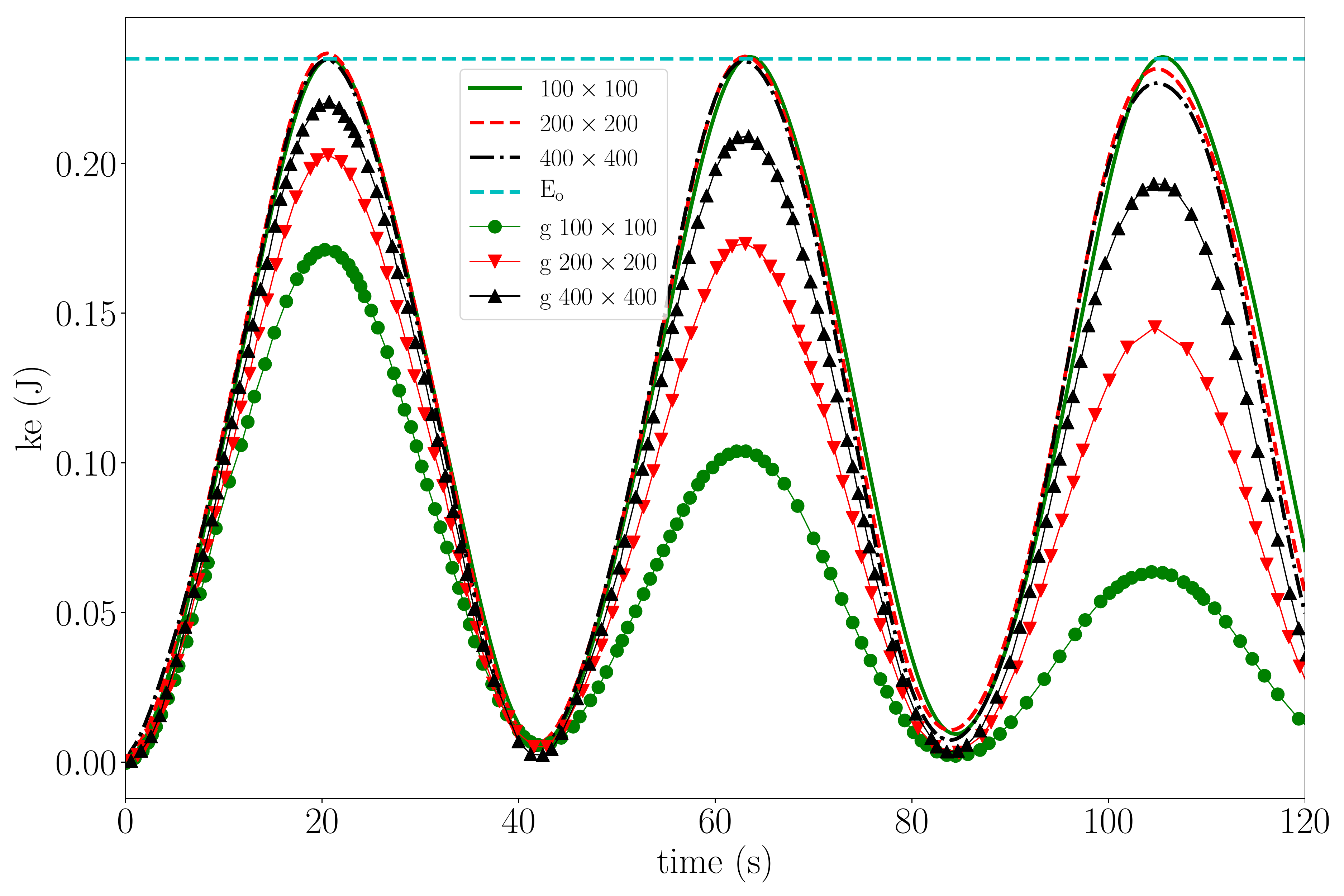}
    \caption{The global kinetic energy on three different grids $100^2$, $200^2$ and $400^2$ along with the results from \citet{garrick2017finite} denoted as ``$g$" in the legend. The dashed cyan line represents the total (kinetic + surface tension) energy $E_o$ (for an inviscid problem).}
    \label{fig:oscillating_drop}
\end{figure}

%
%
%

\subsection{Acoustic test cases \label{sec:acoustic_test}}

In this section, numerical tests are presented to assess the accuracy of the proposed diffuse-interface method for the simulation of propagation of acoustics and its interaction with material interfaces. The two test cases presented in this section are: (a) pressure-driven bubble oscillation, that is used to evaluate the accuracy of the method in handling acoustic-bubble interactions (a similar test case was presented in \citet{huber2015time} for an axisymmetric setup); and (b) the interaction of a plane acoustic wave with a flat interface, that is used to evaluate the accuracy of the method in capturing the reflected and transmitted acoustic wave amplitudes across material interfaces, and their direction of propagation. 

\subsubsection{Pressure-driven bubble oscillation}


For the test case of pressure-driven bubble oscillation, we compare the results against the analytical solution of the Rayleigh-Plesset equation. In three dimensions, the Rayleigh-Plesset equation can be written as \citep{brennen2013cavitation}

\begin{equation}
\frac{P_B(t) - P_{\infty}(t)}{\rho} = R\ddot{R} + \frac{3}{2}\left(\dot{R}\right)^2 + \frac{4\nu\dot{R}}{R} + \frac{2\sigma}{\rho R},
\end{equation}
where $P_B(t)$ is the uniform pressure inside the bubble, $P_\infty(t)$ is the liquid pressure at infinity, $R(t)$ is the radius of the bubble, $\rho$ is the liquid density, $\nu$ is the liquid kinematic viscosity, $\sigma$ is the surface tension, which is taken to be zero in this work, and each dot represents the operation $\mathrm{d}/\mathrm{d}t$. A two-dimensional Rayleigh-Plesset equation does not exist and cannot be derived due to the presence of a logarithmic singularity at infinity. However, a finite-domain analytical solution can still be derived and can be used to verify the numerical solution. Hence, we derive a two-dimensional equivalent of the Rayleigh-Plesset equation for finite-size domains (see Appendix C) as 
\begin{equation}
\frac{P_B(t) - P_S(t)}{\rho} = \ln\left(\frac{S}{R}\right) \left\{\left(\dot{R}\right)^2 + R\ddot{R} \right\} + \left(\frac{R^2 - S^2}{2S^2} \right)\left(\dot{R}\right)^2 + \frac{2\nu \dot{R}}{R} + \frac{\sigma}{\rho R}.
\label{equ:rayleigh_plesset_2D}
\end{equation}
where $P_R$, and $P_S$, are the liquid pressures, on the surface of the bubble at $r=R(t)$, and at a finite distance from the center of the bubble at $r=S$, respectively.

In this test case, an air bubble of diameter $4\ \upmu \mathrm{m}$ is placed at the center of a square domain of size $10\ \upmu\mathrm{m}\times 10\ \upmu\mathrm{m}$ (with coordinates $[-5,5]\ \upmu\mathrm{m}\times[-5,5]\ \upmu\mathrm{m}$), as shown in Figure \ref{fig:schematic_bubble_oscillation}. On all four sides of the domain, a Dirichlet boundary condition of the form $10^5\{1 + 0.1\sin(10\omega_ct)\}$ for the pressure and a Neumann boundary condition for the velocity are imposed, where $\omega_c=10208967.75\ \mathrm{s^{-1}}$ is the characteristic resonance frequency of the bubble \citep{minnaert1933xvi}. The $\phi$ field is initialized with an analytical hyperbolic-tangent function given by $1 - 0.5\Big[1+\tanh\big\{{(\sqrt{x^2 + y^2}-r)}/{(2\epsilon_0)}\big\}\Big]$, where $r$ is the radius of the bubble.

\begin{figure}
    \centering
    \includegraphics[width=0.4\textwidth]{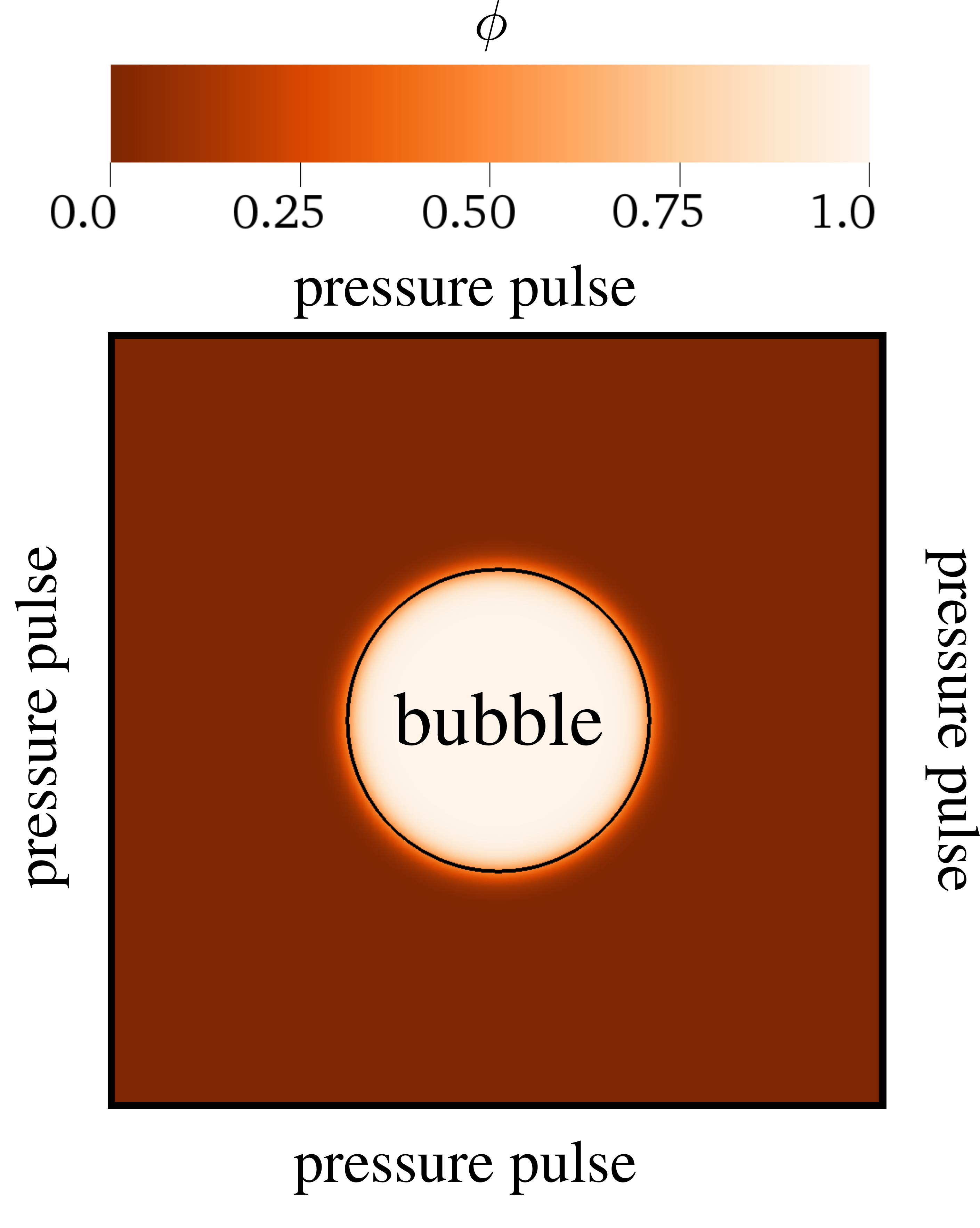}
    \caption{Schematic of the domain used in the case of pressure-driven bubble oscillation.}
    \label{fig:schematic_bubble_oscillation}
\end{figure}

The solution was numerically integrated for a total of $25\ \upmu \mathrm{s}$. Three different grids were chosen, $100^2$, $200^2$, and $400^2$, to study the convergence of the solution. The values of $\epsilon=0.55\Delta x$ and $\Gamma=10|u|_{max}$ were used in the simulation; and the time steps were chosen based on the acoustic CFL condition for the particular grid size. Here, a smaller $\epsilon$ was used to achieve better volumetric conservation as described in Section \ref{sec:parameters}, without having to decrease the time step size due to an increase in the value of $\Gamma$. Results from the various grid sizes are shown in Figure \ref{fig:bubble_oscillation}, and are compared with the semi-analytical solution obtained from numerically integrating the ordinary differential equation in Eq. (\ref{equ:rayleigh_plesset_2D}) along with the ideal-gas law, where the bubble area is computed as $\int \phi dV$ in the simulations. Figure \ref{fig:bubble_oscillation} shows the bubble response at initial times ($0\ \upmu \mathrm{s}$ to $0.6\ \upmu \mathrm{s}$) and at later times ($24\ \upmu \mathrm{s}$ to $24.5\ \upmu \mathrm{s}$). The initial transient response of the bubble in Figure \ref{fig:bubble_oscillation}(a) shows a clear convergence of the numerical solution to the analytical solution with an increase in the number of grid points. Moreover, the solution is very accurate even on the coarsest grid for the bubble response at later times [Figure \ref{fig:bubble_oscillation}(b)]. This test case also shows that the numerical solution is stable for long-time integrations.

\begin{figure}
    \centering
    \includegraphics[width=\textwidth]{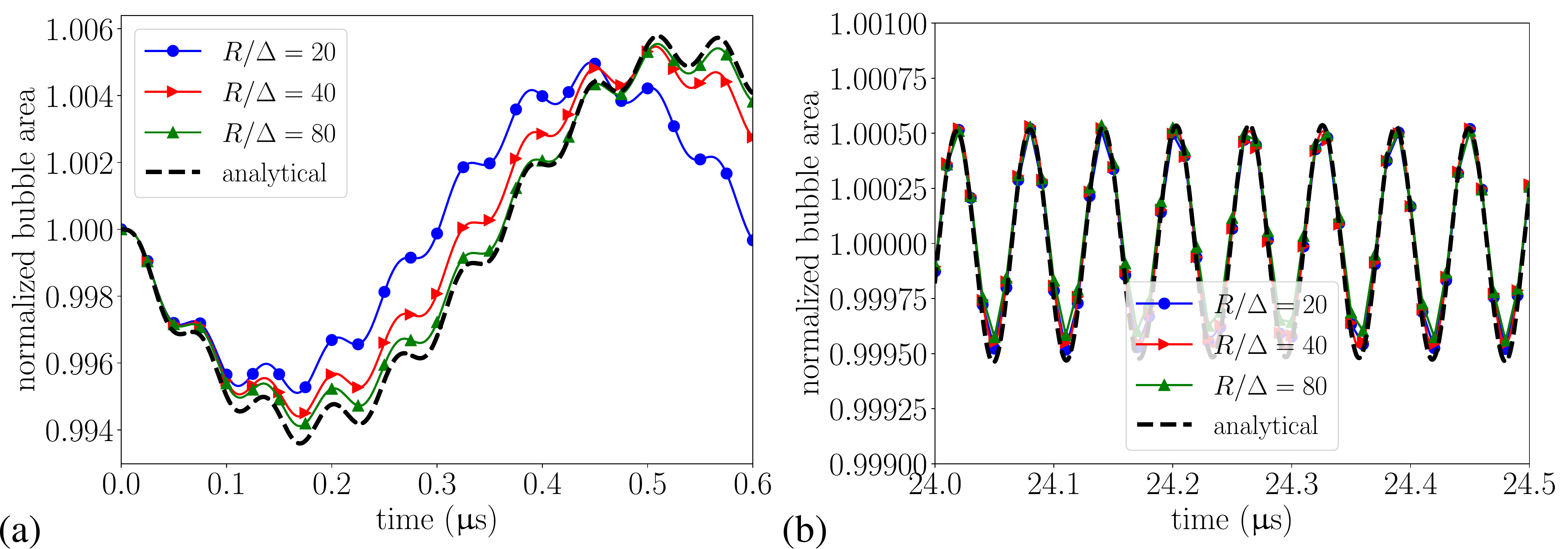}
    \caption{Bubble response at (a) initial and (b) later times.}
    \label{fig:bubble_oscillation}
\end{figure}



%
%
%
%



\subsubsection{Interaction of a plane acoustic wave with a flat air\textendash water interface: normal incidence}

In this test case, a long three-dimensional domain of size  $10\ \upmu \mathrm{m}\times 0.1\ \upmu \mathrm{m}\times0.1\ \upmu \mathrm{m}$ (with coordinates $[0,10]\ \upmu\mathrm{m}\times[0,0.1]\ \upmu\mathrm{m}\times[0,0.1]\ \upmu\mathrm{m}$) is used, with a flat air\textendash water interface located at $x=5\ \upmu \mathrm{m}$, as shown in Figure \ref{fig:schematic_wave_in_water_interface}. The $\phi$ field is initialized with the analytical function $1 - 0.5\Big[1+\tanh\big\{{(x-x_0)}/{(2\epsilon_0)}\big\}\Big]$, where $x_0$ is the location of the interface. The domain is filled with air for $x<5\ \upmu \mathrm{m}$ and water for $x>5\ \upmu \mathrm{m}$. Perflectly reflecting wall boundary conditions are imposed on the domain face at $x=0\ \upmu \mathrm{m}$, and periodic boundary conditions are imposed for the faces at $y=0\ \upmu \mathrm{m}$, $y=0.1\ \upmu \mathrm{m}$, $z=0\ \upmu \mathrm{m}$, and $z=0.1\ \upmu \mathrm{m}$. For the wall at $x=10\ \upmu \mathrm{m}$, a Dirichlet boundary condition of the form $10^5\{1 - 0.5\sin(\omega t)\}$ for the pressure and a Neumann boundary condition for the velocity are imposed for $t<614.5\ \mathrm{ps}$, where $\omega={2\pi c}/{\lambda}$, $\lambda=1\ \upmu \mathrm{m}$ and $c$ is the speed of sound in water. Later, it is switched to a perfectly reflecting wall boundary conditions for $t>614.5\ \mathrm{ps}$ such that a pulse (half wave) is generated at the boundary and its propagation in the domain can be monitored. 

\begin{figure}
    \centering
    \includegraphics[width=0.4\textwidth]{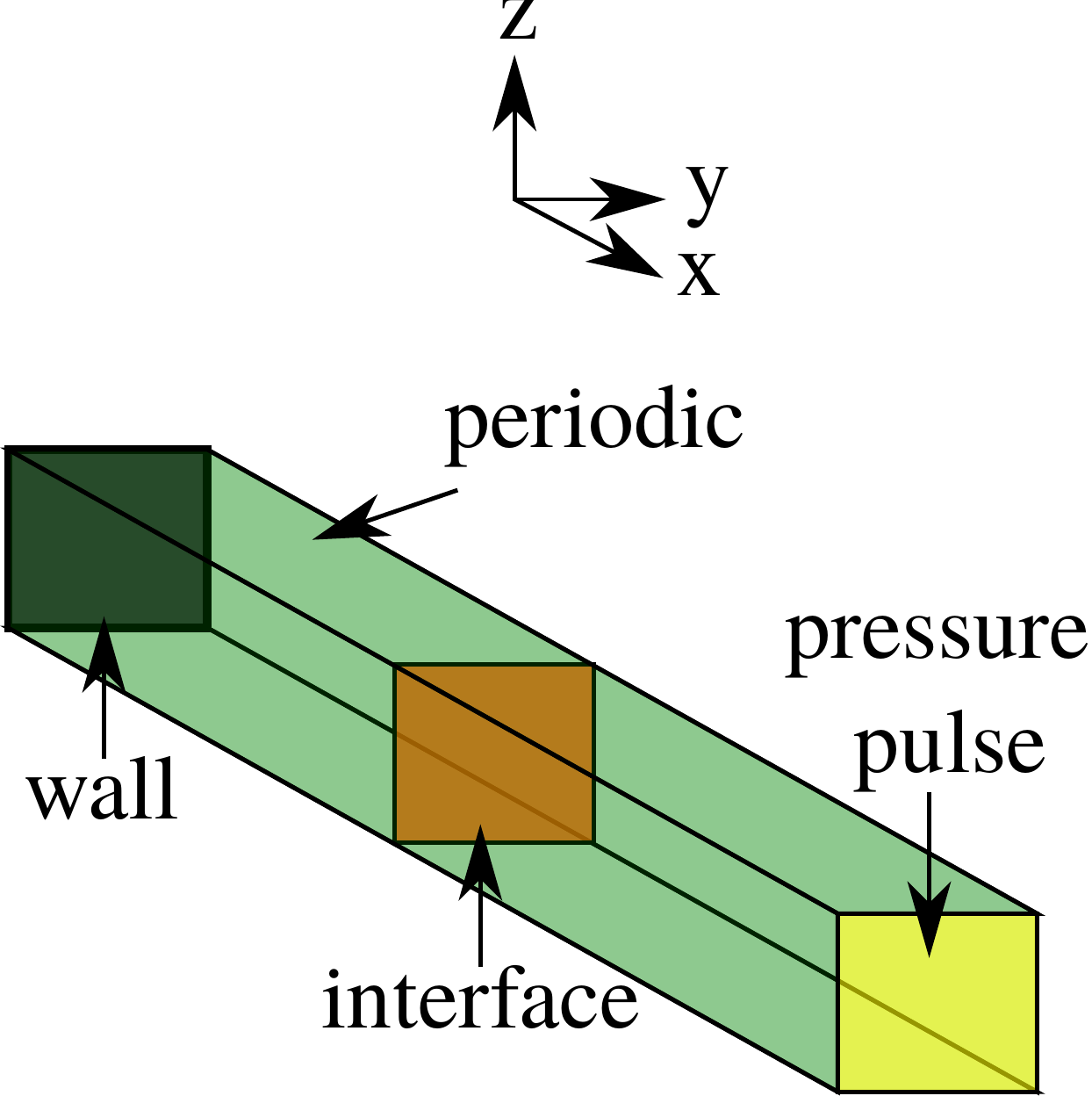}
    \caption{Schematic of the domain used in the case of interaction of a plane acoustic wave with a flat air\textendash water interface (normal incidence).}
    \label{fig:schematic_wave_in_water_interface}
\end{figure}

The solution was numerically integrated for a total of $1\ \upmu \mathrm{s}$. A grid with $1000\times10\times10$ points was used in this simulation along with the time step of $\Delta t = 1\ \mathrm{ps}$. The values of $\epsilon=\Delta x$ and $\Gamma=|u|_{max}$ were used in the simulation. Results from the simulation are shown in Figure \ref{fig:pressure_line_wave_in_water_interface}. The pressure along $x$ is plotted at various times for $y=z=0.05\ \upmu \mathrm{m}$. The acoustic wave interacts with the air\textendash water interface and reflects back, as can be seen from the results at $3\ \mathrm{ns}$, $3.5\ \mathrm{ns}$ and $4\ \mathrm{ns}$. Clearly, nothing gets transmitted across the interface, and the reflected wave amplitude is approximately equal to the incident wave amplitude but the wave is flipped. This behavior of reflection and transmission of an acoustic wave across a flat air\textendash water interface can be predicted using linear acoustic theory. The reflection coefficient is given by $\mathds{R}={(Z_a-Z_w)}/{(Z_a + Z_w)}=-0.999516$, and the transmission coefficent is given by $\mathds{T}={2Z_a}/{(Z_a + Z_w)}=4.8\times10^{-3}$, where $Z_a$ and $Z_w$ are the acoustic impedances of air and water, respectively. $\mathds{R}$ being roughly equal to $-1$ indicates that the reflected wave amplitude is the same as the incident wave amplitude and the wave is flipped. $\mathds{T}$ being roughly equal to $0$ indicates that nothing gets transmitted across the interface. Hence, the numerical solution is in good agreement with the theoretical prediction. Solutions at $6.5\ \mathrm{ns}$ and $12.5\ \mathrm{ns}$ also show the pressure-doubling behavior at the wall, which is again predicted by the theory \citep{blackstock2000fundamentals}. 

\begin{figure}
    \centering
    \includegraphics[width=\textwidth]{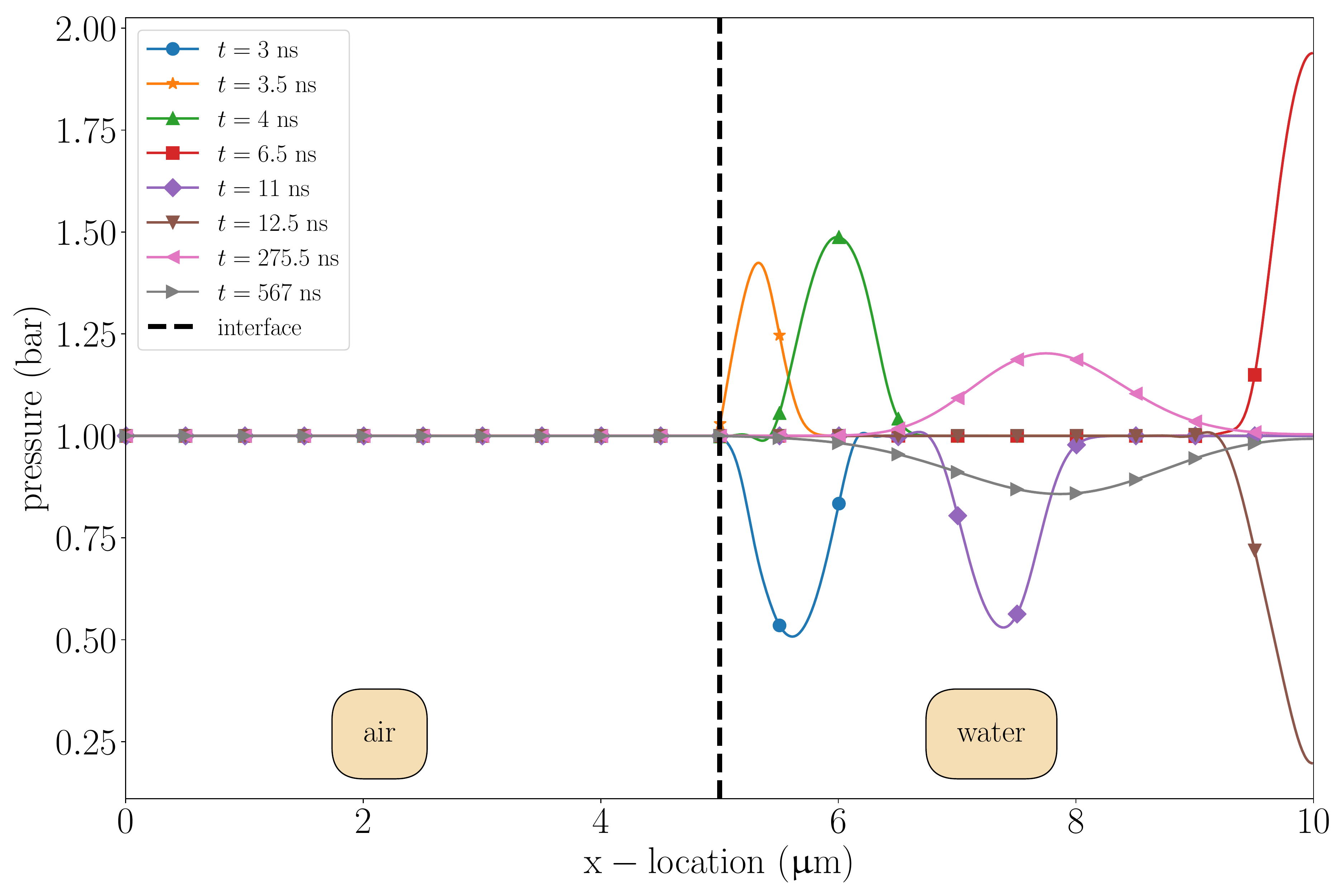}
    \caption{Line plot of pressure along $\mathrm{x}$ at various times. The dashed line represents the interface location.}
    \label{fig:pressure_line_wave_in_water_interface}
\end{figure}


\subsubsection{Interaction of a plane acoustic wave with a flat kerosene\textendash water interface: oblique incidence}

In this test case, a two-dimensional domain of size  $10\ \upmu \mathrm{m}\times 10\ \upmu \mathrm{m}$ (with coordinates $[0,10]\ \upmu \mathrm{m}\times[0,10]\ \upmu \mathrm{m}$) is used, with a flat interface that is aligned along the principal diagonal of the domain, as shown in Figure \ref{fig:schematic_oblique_incidence}. The $\phi$ field is initialized with the analytical function $1 - 0.5\Big[1 + \tanh\big\{{(10^{-5}-x-y)}/{(2\epsilon_0)}\big\}\Big]$. The domain is filled with water below the interface (medium 2) and kerosene above the interface (medium 1). A Dirichlet boundary condition of the form $10^5\{1 - 0.5\sin(\omega t)\}$ for pressure, where $\omega = {2\pi c}/{\lambda}$, $\lambda=2\ \upmu \mathrm{m}$ and $c$ is the speed of sound in kerosene, is imposed on the right domain boundary for $t<755.297\ \mathrm{ps}$ such that a pulse is generated, and propagates into the domain along the boundary normal, so that the incident acoustic wave on the interface makes an angle $\theta_i=\ang{45}$.

\begin{figure}
    \centering
    \includegraphics[width=0.4\textwidth]{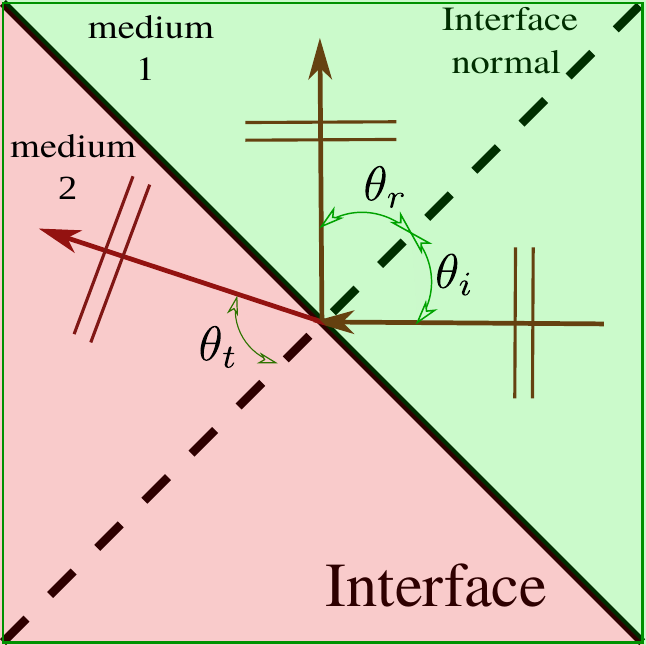}
    \caption{Schematic of the domain used in the case of interaction of a plane acoustic wave with a flat kerosene\textendash water interface (oblique incidence). The solid line is the interface; the dashed line is the interface normal; and the three arrows represent the direction of propagation of the incident, reflected, and transmitted waves that are at an angle, $\theta_i$, $\theta_r$, and $\theta_t$, respectively, with the interface normal.}
    \label{fig:schematic_oblique_incidence}
\end{figure}

A grid with $1000\times1000$ points was used in this simulation along with the time-step size of $\Delta t = 5\ \mathrm{ps}$. The values of $\epsilon=\Delta x$ and $\Gamma=|u|_{max}$ were used; and the results are shown in Figure \ref{fig:result_oblique_indicence}. The pressure field is plotted at time $t=6\ \mathrm{ns}$. The acoustic wave interaction with the water-kerosene interface results in a reflected wave and a transmitted wave. The behavior of reflection and transmission of an oblique acoustic wave across a flat interface can be predicted using linear acoustic theory. The angle of the transmitted wave with the interface, $\theta_t$, is given by the Snell's law of refraction
\begin{equation}
    \frac{\sin{(\theta_i)}}{c_1}  = \frac{\sin{(\theta_t)}}{c_2}
    \label{equ:snell}
\end{equation}
where $c_1$ and $c_2$ are the speeds of sound in medium 1 and 2, respectively, and the angle of reflection, $\theta_r$, is same as the angle of incidence, $\theta_i$ \citep{pierce1990acoustics}. In this problem $c_1=1324\ \mathrm{m/s}$ and $c_2=1627.4\ \mathrm{m/s}$ and since $c_2/c_1 > 1$, $\theta_t$ only exists if $|(c_2/c_1)\mathrm{sin}(\theta_i)|<1$. Therefore, $\exists$ a critical angle $\theta_c$
\begin{equation}
    \theta_c = \arcsin{\left(\frac{c_1}{c_2}\right)} = \ang{54.5}
    \label{equ:critical_angle}
\end{equation}
such that, $\forall$ $\theta_i>\theta_c\, \nexists\ \theta_t$, and the incident wave results in total internal reflection. Hence $\theta_i=\ang{45}$ is chosen in this problem such that there is no total internal reflection. From Eq. (\ref{equ:snell}), $\theta_t=\ang{60.358}$ and $\theta_r=\ang{45}$. In Figure \ref{fig:result_oblique_indicence}, three arrows along the incident, reflected, and transmitted waves are plotted based on the theoretical prediction of $\theta_r$ and $\theta_t$ for the given $\theta_i$. The numerical solution is thus in good agreement with the theoretical prediction.



\begin{figure}
    \centering
    \includegraphics[width=0.65\textwidth]{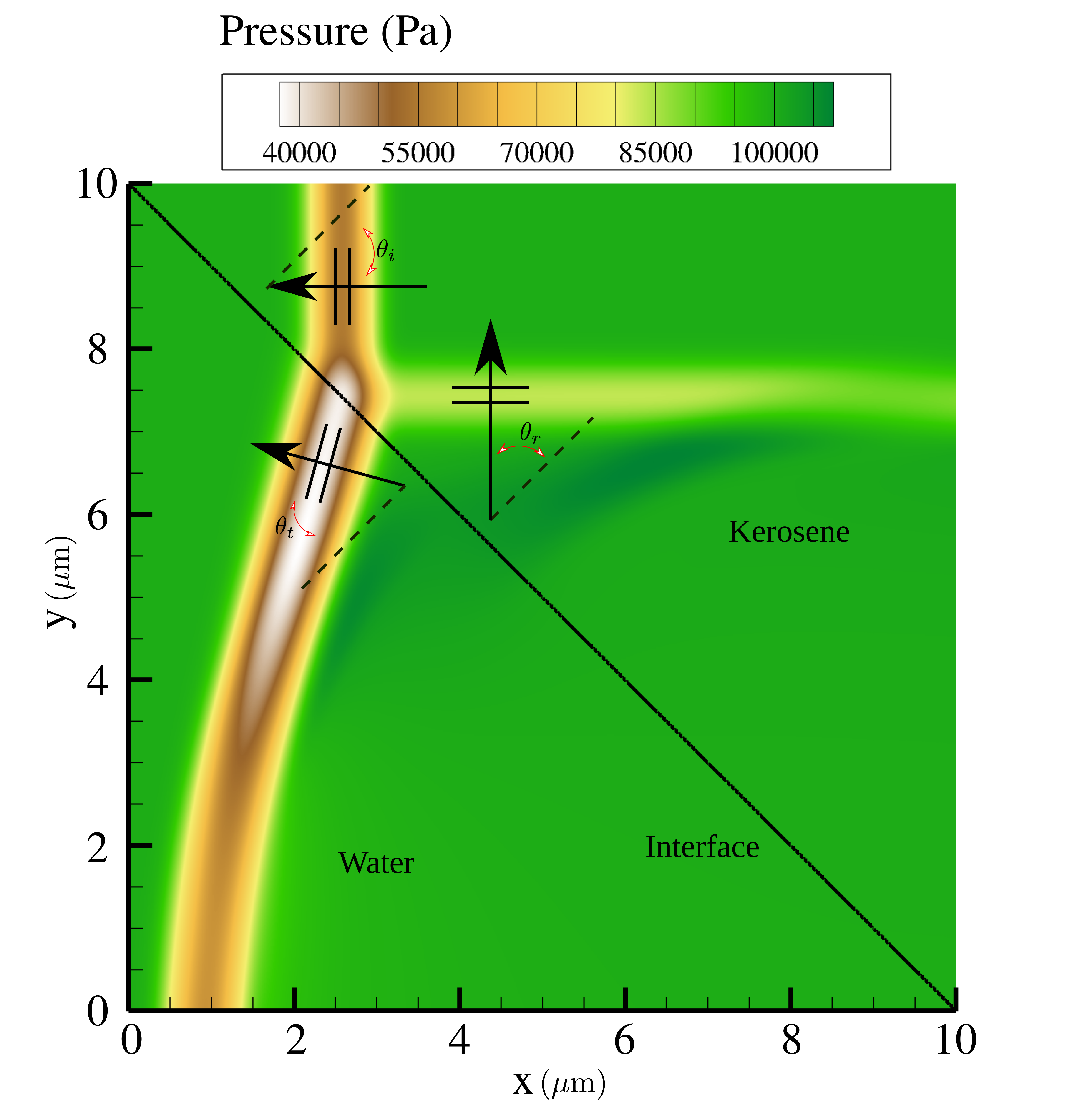}
    \caption{The pressure field (in $Pa$) at time $t=6\ \mathrm{ns}$. The solid line is the interface; the dashed lines represent the interface normal; and the three arrows represent the direction of propagation of the incident, reflected, and transmitted waves, that are at an angle, $\theta_i$, $\theta_r$, and $\theta_t$, respectively, with the interface normal.}
    \label{fig:result_oblique_indicence}
\end{figure}








\subsection{Complex flow: Three-dimensional Rayleigh-Taylor instability\label{sec:complex_test}}

In this section, we present the simulation of late-time growth of a three-dimensional single-mode Rayleigh-Taylor instability and validate the accuracy of the proposed diffuse-interface method. This simulation also helps in evaluating the robustness of the numerical scheme to simulate complex high-Reynolds-number flows. In this test case, we compare the results against a previous numerical study by \citet{liang2016lattice}, where a lattice Boltzmann multiphase model with the multiple-relaxation-time collision operator was used. We also validate our method by comparing against an experimental study by \citet{wilkinson2007experimental}.

It is known that the three-dimensional Rayleigh-Taylor instability at sufficiently high Reynolds number, undergoes four stages of development \citep{sharp1983overview}: (a) the linear growth stage, where the amplitude of the perturbation grows exponentially with time until it reaches $\approx \mathcal{O}(\lambda)$, where $\lambda$ is the wavelength of the initial perturbation; (b) the terminal velocity growth stage, where the perturbation grows non-linearly with the heavy fluid (referred to as spike) and the light fluid (referred to as bubble) penetrating into each other at a constant velocity; (c) the reacceleration stage, where the spike rolls up along the sides to form a mushroom structure due to Kelvin-Helmholtz instability \citep{Glimm2002,ramaprabhu2006limits,wilkinson2007experimental}; and (d) the chaotic development stage, where the spike breaks up resulting into a turbulent and chaotic mixing of the fluids \citep{ramaprabhu2012late}. For the two-dimensional Rayleigh-Taylor instability and the growth stages, see \citet{wei2012late}. See \citet{zhou2017rayleigh} for an extensive review and recent developments on the single and multi-mode Rayleigh-Taylor instability induced flow and turbulence. 

Following \citet{liang2016lattice}, we use a three-dimensional computational domain of size $12\lambda\times\lambda\times\lambda$, where $\lambda=1$ (with dimensions $[-6,6]\times[-0.5,0.5]\times[-0.5,0.5]$). Initially, a square-mode perturbation
\begin{equation}
    h(y,z) = 0.05\lambda\left\{\cos{(ky)+\cos(kz)}\right\}
\end{equation}
is imposed at the midplane ($x=0$), where $k=(2\pi)/\lambda$ is the wavenumber. The $\phi$ field is initialized with the analytical function $1 - 0.5\Big[1 + \tanh\big\{{(h(y,z)-x)}/{(2\epsilon_0)}\big\}\Big]$. The domain is filled with heavy fluid for $x<h(y,z)$ and lighter fluid for $x>h(y,z)$. A no-slip Dirichlet boundary condition is imposed on the walls at $x=0$ and $x=12$, and periodic boundary conditions are imposed for the faces at $y=-0.5$, $y=0.5$, $z=-0.5$, and $z=0.5$.

The properties of the heavy fluid are $\rho_l=1$, $\mu_l=10^{-3}$, $\pi_l = 30$, and $\gamma_l=1.4$; and of the light fluid are $\rho_g=0.74$, $\mu_g=0.74\times10^{-3}$, $\pi_g = 40$, and $\gamma_g=1.4$. The dynamic viscosities are chosen in such a way that the kinematic viscosity is the same for heavy and light fluids $\nu_l=\nu_g=10^{-3}$. The surface tension coefficient for the interface between the fluids is $\sigma=0$ and the imposed gravitational force is $\vec{g}=1\hat{x}$. The relevant non-dimensional numbers in this problem are Atwood number,
\begin{equation}
    A_t=\frac{(\rho_l - \rho_g)}{(\rho_l + \rho_g)}\approx0.15,
\end{equation}
and Reynolds number,
\begin{equation}
    Re=\frac{\lambda U}{\nu}=1000,
\end{equation}
where $U=\sqrt{g \lambda}=1$ is the velocity scale in the problem; and the characteristic time scale can be defined as
\begin{equation}
    \tau = \frac{1}{\sqrt{A_tgk}}\approx1.
\end{equation}

The solution was numerically integrated for a total of $15\ \tau$. A grid of size $1200\times100\times100$ ($\Delta=\lambda/100$) was used in this simulation along with the time-step size of $\Delta t=0.001$. Two other grid sizes $\Delta=\lambda/50$ and $\Delta=\lambda/25$ were also used to study the convergence of the solution. The values of $\epsilon=\Delta x$ and $\Gamma=|u|_{max}$ were used in the simulation. Results from the simulation are shown in Figure \ref{fig:RT_volume} at various times ($t/\tau$). The spike and the bubble penetrate into each other as time evolves. The roll\textendash up of the spike due to the Kelvin-Helmholtz instability and the formation of mushroom-like structure can be seen at $t=5.8\ \tau$. The roll\textendash ups further shrink until $t=9.7\ \tau$ as the spike penetrates, eventually leading to a more chaotic behavior at $t=13.6\ \tau$ and the formation of small drops. Symmetry is maintained in the simulation at all times, which is a sign of a good numerical method. A similar observation was made by \citet{liang2016lattice} and \citet{wei2012late}; however, \citet{ramaprabhu2012late} reported the break of symmetry at late times in their simulation.

\begin{figure}
    \centering
    \includegraphics[height=0.95\textheight]{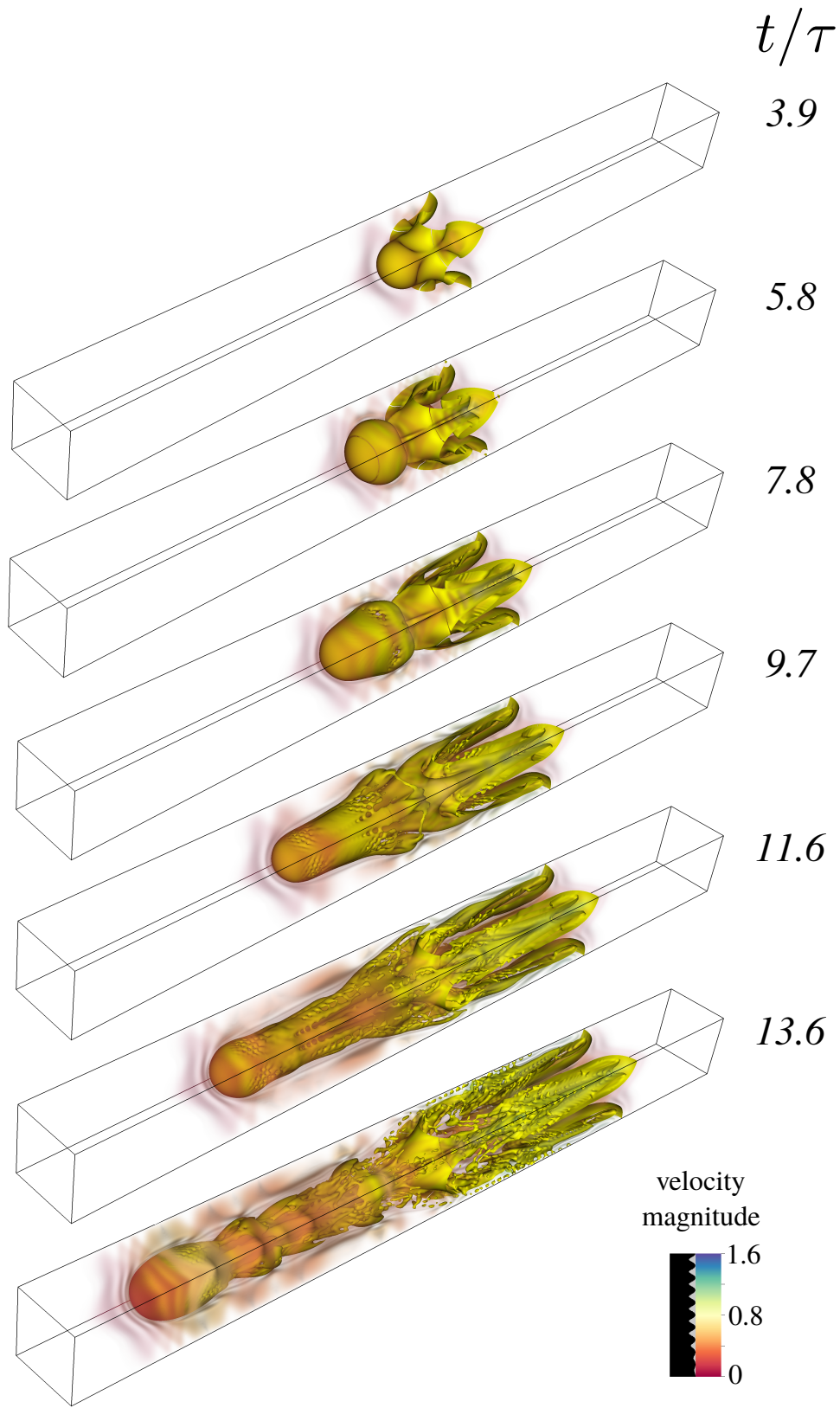}
    \caption{Interface evolution and the volumetric rendering of the velocity field in a Rayleigh-Taylor instability induced flow for $A_t=0.15$ and $Re=1000$ at various times $t/\tau$.}
    \label{fig:RT_volume}
\end{figure}

To further quantify the Rayleigh-Taylor growth at late times, we define the non-dimensional bubble and spike velocities as
\begin{equation}
    Fr_b = \frac{u_b}{\sqrt{\frac{A_t g \lambda}{1 + A_t}}},\ Fr_s = \frac{u_s}{\sqrt{\frac{A_t g \lambda}{1 + A_t}}},
    \label{equ:froude}
\end{equation}
where $Fr_b$ and $Fr_s$ are the bubble and spike Froude numbers, respectively; and $u_b$ and $u_s$ are the velocity of the bubble and spike fronts, respectively. Figure \ref{fig:RT_converge} shows the plot of bubble and spike Froude numbers as a function of time for three grid sizes, $\Delta=\lambda/25$, $\Delta=\lambda/50$, and $\Delta=\lambda/100$, along with the results by \citet{liang2016lattice}. The four distinct growth stages exhibited by the Rayleigh-Taylor instability induced flow can be clearly seen in Figure \ref{fig:RT_converge} as (a) the linear growth stage for $t\le2$, (b) the terminal velocity growth stage for $2\ge t\le6$, (c) the reacceleration stage for $6\ge t \le 10$, and (d) the chaotic mixing stage for $t\ge10$. The results from grid sizes $\Delta=\lambda/50$ and $\Delta=\lambda/100$ are in a good agreement, showing the grid convergence of the results. Our results are also in fair agreement with the results of \citet{liang2016lattice} for the bubble Frounde number, however there is a small disagreement for the spike Froude number at the late-time chaotic mixing stage. Moreover, the results from \citet{liang2016lattice} exhibit an oscillatory behavior throughout all four growth stages in the simulaton, which could be a numerical artefact. 

\begin{figure}
    \centering
    \includegraphics[width=\textwidth]{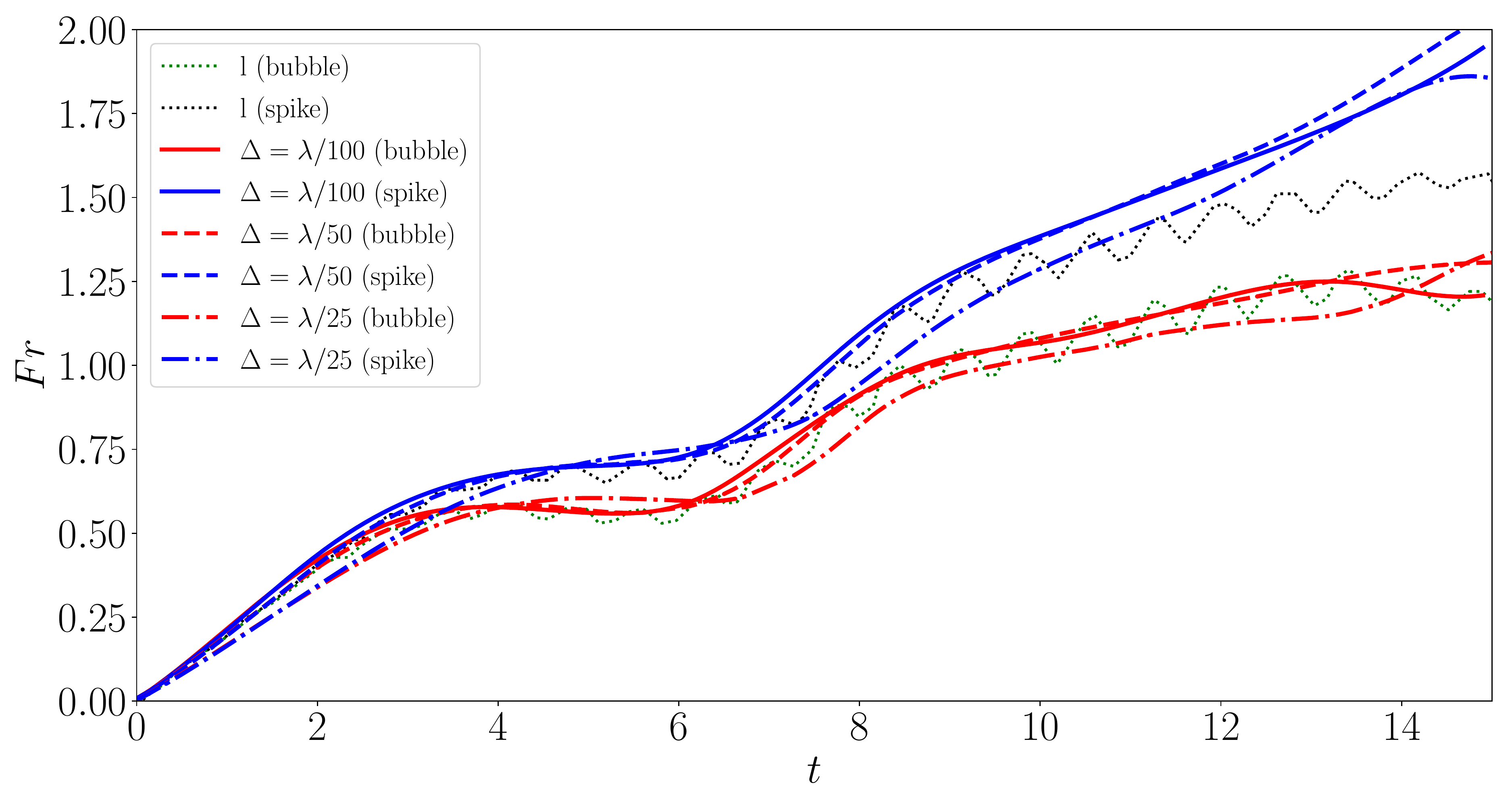}
    \caption{The bubble and spike Froude numbers as a function of time on three different grid sizes, $\Delta=\lambda/25$, $\lambda/50$, and $\lambda/100$. The dotted lines are the results from \citet{liang2016lattice}, and are denoted as $``\mathrm{l}"$ in the legend.}
    \label{fig:RT_converge}
\end{figure}

Figure \ref{fig:RT_velocity} shows the plot of $Fr_b$ and $Fr_s$ as a function of non-dimensional bubble and spike heights ($h_b/\lambda$ and $h_s/\lambda$), along with the experimental results of \citet{wilkinson2007experimental}. Note that the experimental results are limited to the first three stages of the flow and the numerical solution is in agreement with the experiments for all stages, thus validating the diffuse-interface method. The dashed lines in Figure \ref{fig:RT_velocity} show the second stage terminal velocity for bubble and spike predicted by the potential flow model of \citet{goncharov2002analytical}
\begin{equation}
    u_b = \sqrt{\frac{2A_t g}{k(1 + A_t)}},\ u_s= \sqrt{\frac{2 A_t t}{k(1 - A_t)}},
\end{equation}
and expressed in terms of Froude numbers ($Fr_b=0.564$ and $Fr_s=0.656$) using Eq. (\ref{equ:froude}).

\begin{figure}
    \centering
    \includegraphics[width=\textwidth]{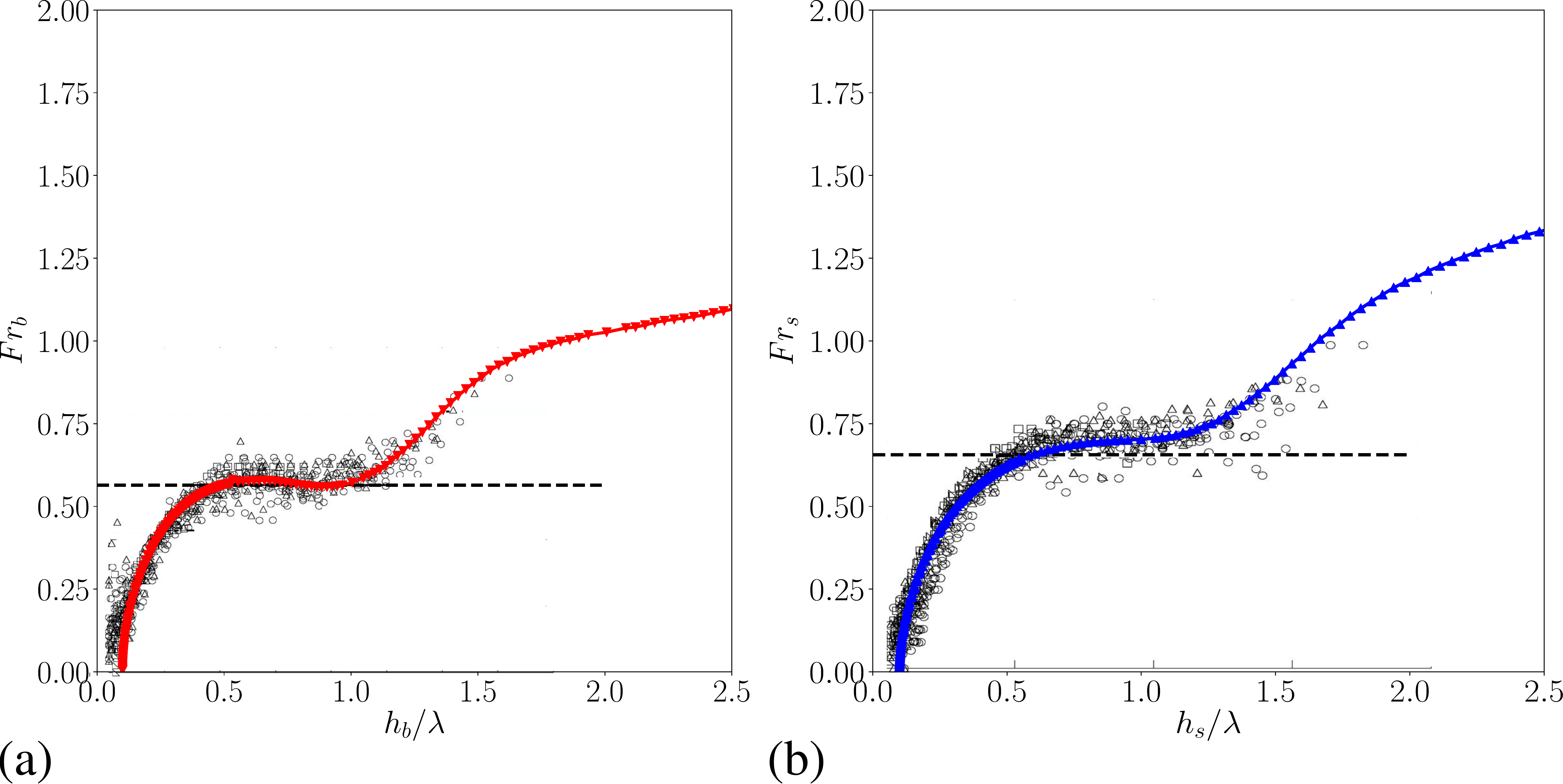}
    \caption{The Froude number as a function of non-dimensional height for (a) the bubble, and (b) the spike. The symbols are the experimental results of \citet{wilkinson2007experimental}. The dashed line represents the analytical solution of the potential flow model by \citet{goncharov2002analytical}.}
    \label{fig:RT_velocity}
\end{figure}

\section{Summary of the results and findings \label{sec:summary}}

The performance and scalability tests in Section \ref{sec:performance} show an ideal weak scaling from $1$ to $10^3$ cores, and a good strong scaling for up to $6.25\ \mathrm{K}$ grid points per core; and imply that the proposed diffuse-interface method, due to its partial-differential-equation-only nature, results in a low cost, highly scalable method. The interface advection tests in Section \ref{sec:interface_advection} show an overall order of convergence between $1$ and $2$ for the interface shape. The order of convergence was found to be dependent on the presence of sharp interfacial features, and their resolution on the grid. The volume error was shown to decrease to machine precision with an increase in the number of grid points, showing excellent volume conservation of the method. We also showed that the choice of interface parameters, $\epsilon$ and $\Gamma$, results in a trade\textendash off between accuracy and cost; and the most optimum choice would be to use $\epsilon=\Delta x$ and $\Gamma=|u|_{max}$ for most situations. The surface tension tests in Section \ref{sec:surface_test} show that the surface tension dynamics are captured accurately in the present method; and that the non-dissipative nature of the numerical scheme results in highly accurate solutions even for very coarse grids. The acoustic test cases in Section \ref{sec:acoustic_test} show that the method is accurate in capturing bubble-acoustic interactions and is stable for long-time numerical integration. The results also show that the numerical simulations are in good agreement with linear acoustic theory, thus verifying the method. The complex flow simulations in Section \ref{sec:complex_test} show that the numerical scheme is robust in simulating high-Reynolds-number flows. Further, the results from the numerical simulations show good agreement with the experimental results, thus validating the method.

\section{Conclusion\label{sec:conclude}}

In the present work, we proposed a novel conservative diffuse-interface method for the simulation of compressible two-phase flows. The proposed method discretely conserves the mass of each phase, momentum and total energy of the system. The advantages of the newly proposed interface-regularization terms compared to the state-of-the-art methods are that they maintain the conservative property of the underlying baseline model; and uses central-difference schemes for all the operators in the model, which leads to a non-dissipative \textit{discrete} implementation that is crucial for the simulation of turbulent flows and acoustics.  

Furthermore, we proved that our model maintains the boundedness property of the volume fraction field, which is a physical realizability requirement for the simulation of two-phase flows. We also proved that our model inherently satisfies the total-variation-diminishing property for the transport of the volume fraction field, without having to add any flux limiters that destroy the non-dissipative nature of the scheme. We showed that the proposed interface-regularization terms in the model do not spuriously contribute to the kinetic energy of the system, and therefore do not affect the non-linear stability of the numerical simulation; and also showed that the modeling terms in the energy equation are consistent with the second law of thermodynamics. 

Finally, we presented a wide variety of numerical simulations and tests using the model to assess, evaluate, verify, and validate the newly developed diffuse-interface method for: (a) the accuracy of evolution of the interface shape; (b) the implementation of surface tension effects; (c) the propagation of acoustic waves and their interaction with material interfaces; (d) the accuracy and robustness of the numerical scheme for the simulation of complex high-Reynolds-number flows; and (e) the performance and parallel scalability. In this work, simple geometries with Cartesian grids were considered. Extensions to include curved solid boundaries with the use of unstructured grids will be explored in the future.

\section*{Appendix A: Expanded form of the model}

Rewriting the highlighted newly-proposed regularization terms in the model presented in Section \ref{sec:finalmodel}, in terms of primitive variables: the volume fraction equation can be written as
\begin{equation}
\frac{\partial \phi_1}{\partial t} + \vec{\nabla}\cdot(\vec{u}\phi_1) = \phi_1(\vec{\nabla}\cdot\vec{u})+\vec{\nabla}\cdot\left[\Gamma\left\{\epsilon\vec{\nabla}\phi_1 - \phi_1(1 - \phi_1)\vec{n}_1\right\}\right];
\end{equation}
the mass balance equations can be written as
\begin{equation}
\frac{\partial \rho_l\phi_l}{\partial t} + \vec{\nabla}\cdot(\rho_l\vec{u}\phi_l) = \vec{\nabla}\cdot\left[\rho_{0l}\Gamma\left\{\epsilon\vec{\nabla}\phi_l - \phi_l(1 - \phi_l)\vec{n}_l\right\}\right], \hspace{0.5cm} l=1,2;
\end{equation}
the momentum balance equation can be written as
\begin{equation}
\begin{aligned}
\frac{\partial \rho\vec{u}}{\partial t} + \vec{\nabla}\cdot(\rho \vec{u} \otimes \vec{u} + p \mathds{1}) = \vec{\nabla}\cdot\doubleunderline\tau + \sigma \kappa \vec{\nabla} \phi_1 + \rho \vec{g}\\
+ \vec{\nabla}\cdot\left[\sum_{l=1}^2\rho_{0l}\Gamma\left\{\epsilon\vec{\nabla}\phi_l - \phi_l(1 - \phi_l)\vec{n}_l\right\}\otimes\vec{u}\right];
\end{aligned}
\end{equation}
and the total energy equation can be written as
\begin{equation}
\begin{aligned}
\frac{\partial E}{\partial t} + \vec{\nabla}\cdot(\vec{u} E) + \vec{\nabla}\cdot(p\vec{u}) = \sigma \kappa \vec{u}\cdot\vec{\nabla} \phi_1 + \vec{\nabla}\cdot(\doubleunderline\tau\cdot\vec{u}) + \rho \vec{g}\cdot\vec{u}\\
+ \vec{\nabla}\cdot\left[\sum_{l=1}^2\rho_{0l}\Gamma\left\{\epsilon\vec{\nabla}\phi_l - \phi_l(1 - \phi_l)\vec{n}_l\right\}k\right] \\
+ \vec{\nabla}\cdot{\left[\sum_{l=1}^2 \rho_l h_l \Gamma\left\{\epsilon\vec{\nabla}\phi_l - \phi_l(1 - \phi_l)\vec{n}_l\right\}\right]}.
\end{aligned}
\end{equation}
Rewriting the general mixture EOS in terms of the parameters of the individual phase stiffened-gas EOS, we get
\begin{equation}
p = \frac{\rho e - \sum_{l=1}^2\frac{\phi_l \gamma_l \pi_l}{\gamma_l - 1}} {\Big( \sum_{l=1}^2\frac{\phi_l}{\gamma_l-1}\Big)}.
\end{equation}

\section*{Appendix B: In-house code, CTR-DIs3D, and its performance optimization}

CTR-DIs3D is an acronym for three-dimensional Center-for-Turbulence-Research-Diffuse-Interface-method solver. It is an in-house parallel code written in C++, that employs non-dissipative numerical methods for simulating compressible and incompressible two-phase flows. The parallelization has been achieved using the Message Passing Interface (MPI) library, with the capability of arbitrary Cartesian-based domain decomposition. 

The solver has been optimized by using contiguous memory allocations for the arrays to minimize cache misses. The number of communication calls has been minimized by the use of custom-defined MPI datatypes, and by aggregating multiple message data into a single contiguous data. This increases the message size, and decreases the number of communication calls, thereby achieving higher bandwidth and better parallel scalability. Additionally, the communication calls are in synchronous non-blocking mode to hide latency and communication overhead, which further increases the parallel scalability.

\section*{Appendix C: Two-dimensional Rayleigh-Plesset equation for a cylindrical bubble}

The Rayleigh-Plesset equation is derived by integrating the mass and momentum conservation equations in the liquid region around the bubble. The liquid is assumed to be incompressible, and the bubble is assumed to oscillate in only the first volumetric mode, which is axisymmetric in nature. Now, balancing the mass in the liquid region between the radius of the bubble, $R(t)$, and a distance $r$ from the center of the bubble, we can write the radial velocity at a radius, $r$, as
\begin{equation}
	u(r,t) = \frac{R(t)}{r}\frac{dR(t)}{dt}.
	\label{eq:ray_vel}
\end{equation}

Starting with the radial component of the incompressible Navier-Stokes equation in polar coordinates
\begin{equation}
	\frac{\partial u}{\partial t} + u\frac{\partial u}{\partial r} = -\frac{1}{\rho}\frac{\partial p}{\partial r} + \nu\left[\frac{1}{r}\left\{\frac{\partial }{\partial r}\left(r \frac{\partial u}{\partial r}\right)  \right\}  - \frac{u}{r^2} \right],
\end{equation}
and substituting for the velocity from Eq. (\ref{eq:ray_vel}), we obtain
\begin{equation}
	\frac{1}{r} \left\{\left(\frac{\mathrm{d} R(t)}{\mathrm{d}t}\right)^2 + R(t)\frac{\mathrm{d}^2 R(t)}{\mathrm{d}t^2} \right\} - \frac{R^2(t)}{r^3} \left(\frac{\mathrm{d} R(t)}{\mathrm{d}t}\right)^2 = -\frac{1}{\rho}\frac{\partial p}{\partial r}.
	\label{equ:rayleigh_plesset_mom1}
\end{equation}
This equation is valid in the liquid region, and hence can be integrated from the surface of the bubble, $R(t)$. If we integrate this to infinity, we encounter a logarithmic singularity unlike in the three-dimensional Rayleigh-Plesset equation. To avoid this, we integrate Eq. (\ref{equ:rayleigh_plesset_mom1}) to a finite distance, $S$, from the center of the bubble and obtain
\begin{equation}
	\frac{P_R(t) - P_S(t)}{\rho} = \ln\left\{\frac{S}{R(t)}\right\} \left\{\left(\frac{\mathrm{d} R(t)}{\mathrm{d}t}\right)^2 + R(t)\frac{\mathrm{d}^2 R(t)}{\mathrm{d}t^2} \right\} + \left(\frac{R^2(t) - S^2}{2S^2} \right)\left(\frac{\mathrm{d} R(t)}{\mathrm{d}t}\right)^2,
	\label{equ:rayleigh_plesset_mom2}
\end{equation}
where $P_R$, and $P_S$, are the liquid pressures, at the surface of the bubble $r=R$, and at $r=S$, respectively. Now, balancing the pressure, viscous, and surface tension forces at the surface of the bubble 
\begin{equation}
	0 = -P_R(t) + 2\mu\left. \frac{\partial u}{\partial r}\right\vert_{(R(t),t)} + P_B(t) - \frac{\sigma}{R(t)},
\end{equation}
where $P_B$ is the uniform pressure inside the bubble. Substituting this in Eq. (\ref{equ:rayleigh_plesset_mom2}), we obtain the two-dimensional equivalent of the Rayleigh-Plesset equation for the finite-size circular domain

\begin{equation}
\frac{P_B(t) - P_S(t)}{\rho} = \ln\left(\frac{S}{R}\right) \left\{\left(\dot{R}\right)^2 + R\ddot{R} \right\} + \left(\frac{R^2 - S^2}{2S^2} \right)\left(\dot{R}\right)^2 + \frac{2\nu \dot{R}}{R} + \frac{\sigma}{\rho R}.
\end{equation}

\section*{Acknowledgments} 

This investigation was supported by the Office of Naval Research, Grants \#N00014-15-1-2726 and \#N00014-15-1-2523. 
S. S. Jain is also funded by a Franklin P. and Caroline M. Johnson Fellowship. The authors acknowledge the use of computational resources from the Certainty cluster awarded by the National Science Foundation to CTR, as well as the Mira supercomputer from the INCITE program. The authors would like to thank Dr. Shahab Mirjalili and Dr. Javier Urzay for the fruitful discussions, and Ronald Chan for helpful comments on the manuscript.

\bibliographystyle{model1-num-names}
\bibliography{diffuse_interface_full}

\begin{thebibliography}{67}
\expandafter\ifx\csname natexlab\endcsname\relax\def\natexlab#1{#1}\fi
\providecommand{\bibinfo}[2]{#2}
\ifx\xfnm\relax \def\xfnm[#1]{\unskip,\space#1}\fi
\bibitem[{Trevorrow et~al.(1994)Trevorrow, Vagle, and
  Farmer}]{trevorrow1994acoustical}
\bibinfo{author}{M.~V. Trevorrow}, \bibinfo{author}{S.~Vagle},
  \bibinfo{author}{D.~M. Farmer},
\newblock \bibinfo{title}{Acoustical measurements of microbubbles within ship
  wakes},
\newblock \bibinfo{journal}{The Journal of the Acoustical Society of America}
  \bibinfo{volume}{95} (\bibinfo{year}{1994}) \bibinfo{pages}{1922--1930}.
\bibitem[{Fu et~al.(2007)Fu, Karion, Fullerton, Rice, and Walker}]{fu2007}
\bibinfo{author}{T.~Fu}, \bibinfo{author}{A.~Karion},
  \bibinfo{author}{A.~Fullerton}, \bibinfo{author}{J.~Rice},
  \bibinfo{author}{D.~Walker}, \bibinfo{title}{Characterization of the bubble
  flow and transom wave of the R/V Athena I}, \bibinfo{type}{Technical Report}
  \bibinfo{number}{NSWCCD-50-TR-2007/049}, Naval Surface Warfare Center,
  \bibinfo{year}{2007}.
\bibitem[{Stanic et~al.(2009)Stanic, Caruthers, Goodman, Kennedy, and
  Brown}]{stanic2009attenuation}
\bibinfo{author}{S.~Stanic}, \bibinfo{author}{J.~W. Caruthers},
  \bibinfo{author}{R.~R. Goodman}, \bibinfo{author}{E.~Kennedy},
  \bibinfo{author}{R.~A. Brown},
\newblock \bibinfo{title}{Attenuation measurements across surface-ship wakes
  and computed bubble distributions and void fractions},
\newblock \bibinfo{journal}{IEEE Journal of Oceanic Engineering}
  \bibinfo{volume}{34} (\bibinfo{year}{2009}) \bibinfo{pages}{83--92}.
\bibitem[{Carrica et~al.(1999)Carrica, Drew, Bonetto, and
  Lahey~Jr}]{carrica1999polydisperse}
\bibinfo{author}{P.~Carrica}, \bibinfo{author}{D.~Drew},
  \bibinfo{author}{F.~Bonetto}, \bibinfo{author}{R.~Lahey~Jr},
\newblock \bibinfo{title}{A polydisperse model for bubbly two-phase flow around
  a surface ship},
\newblock \bibinfo{journal}{International Journal of Multiphase Flow}
  \bibinfo{volume}{25} (\bibinfo{year}{1999}) \bibinfo{pages}{257--305}.
\bibitem[{Culver and Trujillo(2007)}]{culver2007measuring}
\bibinfo{author}{R.~L. Culver}, \bibinfo{author}{M.~F. Trujillo},
\newblock \bibinfo{title}{Measuring and modeling bubbles in ship wakes, and
  their effect on acoustic propagation},
\newblock in: \bibinfo{booktitle}{Proc. 2nd Int. Conf. Underwater Acoust.
  Meas.: Technol. Results}.
\bibitem[{Saurel and Pantano(2018)}]{saurel2018diffuse}
\bibinfo{author}{R.~Saurel}, \bibinfo{author}{C.~Pantano},
\newblock \bibinfo{title}{Diffuse-interface capturing methods for compressible
  two-phase flows},
\newblock \bibinfo{journal}{Annual Review of Fluid Mechanics}
  \bibinfo{volume}{50} (\bibinfo{year}{2018}).
\bibitem[{Poinsot and Veynante(2005)}]{poinsot2005theoretical}
\bibinfo{author}{T.~Poinsot}, \bibinfo{author}{D.~Veynante},
  \bibinfo{title}{Theoretical and numerical combustion}, \bibinfo{publisher}{RT
  Edwards, Inc.}, \bibinfo{year}{2005}.
\bibitem[{Ishii and Mishima(1984)}]{ishii1984two}
\bibinfo{author}{M.~Ishii}, \bibinfo{author}{K.~Mishima},
\newblock \bibinfo{title}{Two-fluid model and hydrodynamic constitutive
  relations},
\newblock \bibinfo{journal}{Nuclear Engineering and design}
  \bibinfo{volume}{82} (\bibinfo{year}{1984}) \bibinfo{pages}{107--126}.
\bibitem[{Kapila et~al.(2001)Kapila, Menikoff, Bdzil, Son, and
  Stewart}]{kapila2001two}
\bibinfo{author}{A.~Kapila}, \bibinfo{author}{R.~Menikoff},
  \bibinfo{author}{J.~Bdzil}, \bibinfo{author}{S.~Son}, \bibinfo{author}{D.~S.
  Stewart},
\newblock \bibinfo{title}{Two-phase modeling of deflagration-to-detonation
  transition in granular materials: Reduced equations},
\newblock \bibinfo{journal}{Physics of Fluids} \bibinfo{volume}{13}
  (\bibinfo{year}{2001}) \bibinfo{pages}{3002--3024}.
\bibitem[{Yeom and Chang(2013)}]{yeom2013modified}
\bibinfo{author}{G.-S. Yeom}, \bibinfo{author}{K.-S. Chang},
\newblock \bibinfo{title}{{A modified HLLC-type Riemann solver for the
  compressible six-equation two-fluid model}},
\newblock \bibinfo{journal}{Computers \& Fluids} \bibinfo{volume}{76}
  (\bibinfo{year}{2013}) \bibinfo{pages}{86--104}.
\bibitem[{Baer and Nunziato(1986)}]{baer1986two}
\bibinfo{author}{M.~Baer}, \bibinfo{author}{J.~Nunziato},
\newblock \bibinfo{title}{{A two-phase mixture theory for the
  deflagration-to-detonation transition (DDT) in reactive granular materials}},
\newblock \bibinfo{journal}{International Journal of Multiphase Flow}
  \bibinfo{volume}{12} (\bibinfo{year}{1986}) \bibinfo{pages}{861--889}.
\bibitem[{Abgrall(1996)}]{abgrall1996prevent}
\bibinfo{author}{R.~Abgrall},
\newblock \bibinfo{title}{How to prevent pressure oscillations in
  multicomponent flow calculations: a quasi conservative approach},
\newblock \bibinfo{journal}{Journal of Computational Physics}
  \bibinfo{volume}{125} (\bibinfo{year}{1996}) \bibinfo{pages}{150--160}.
\bibitem[{Sainsaulieu(1995)}]{sainsaulieu1995finite}
\bibinfo{author}{L.~Sainsaulieu},
\newblock \bibinfo{title}{{Finite volume approximation of two phase-fluid flows
  based on an approximate Roe-type Riemann solver}},
\newblock \bibinfo{journal}{Journal of Computational Physics}
  \bibinfo{volume}{121} (\bibinfo{year}{1995}) \bibinfo{pages}{1--28}.
\bibitem[{Saurel and Abgrall(1999{\natexlab{a}})}]{saurel1999simple}
\bibinfo{author}{R.~Saurel}, \bibinfo{author}{R.~Abgrall},
\newblock \bibinfo{title}{A simple method for compressible multifluid flows},
\newblock \bibinfo{journal}{SIAM Journal on Scientific Computing}
  \bibinfo{volume}{21} (\bibinfo{year}{1999}{\natexlab{a}})
  \bibinfo{pages}{1115--1145}.
\bibitem[{Saurel and Abgrall(1999{\natexlab{b}})}]{saurel1999multiphase}
\bibinfo{author}{R.~Saurel}, \bibinfo{author}{R.~Abgrall},
\newblock \bibinfo{title}{{A multiphase Godunov method for compressible
  multifluid and multiphase flows}},
\newblock \bibinfo{journal}{Journal of Computational Physics}
  \bibinfo{volume}{150} (\bibinfo{year}{1999}{\natexlab{b}})
  \bibinfo{pages}{425--467}.
\bibitem[{Johnsen and Ham(2012)}]{johnsen2012preventing}
\bibinfo{author}{E.~Johnsen}, \bibinfo{author}{F.~Ham},
\newblock \bibinfo{title}{Preventing numerical errors generated by
  interface-capturing schemes in compressible multi-material flows},
\newblock \bibinfo{journal}{Journal of Computational Physics}
  \bibinfo{volume}{231} (\bibinfo{year}{2012}) \bibinfo{pages}{5705--5717}.
\bibitem[{Movahed and Johnsen(2013)}]{movahed2013solution}
\bibinfo{author}{P.~Movahed}, \bibinfo{author}{E.~Johnsen},
\newblock \bibinfo{title}{{A solution-adaptive method for efficient
  compressible multifluid simulations, with application to the
  Richtmyer--Meshkov instability}},
\newblock \bibinfo{journal}{Journal of Computational Physics}
  \bibinfo{volume}{239} (\bibinfo{year}{2013}) \bibinfo{pages}{166--186}.
\bibitem[{Allaire et~al.(2002)Allaire, Clerc, and Kokh}]{allaire2002five}
\bibinfo{author}{G.~Allaire}, \bibinfo{author}{S.~Clerc},
  \bibinfo{author}{S.~Kokh},
\newblock \bibinfo{title}{A five-equation model for the simulation of
  interfaces between compressible fluids},
\newblock \bibinfo{journal}{Journal of Computational Physics}
  \bibinfo{volume}{181} (\bibinfo{year}{2002}) \bibinfo{pages}{577--616}.
\bibitem[{Perigaud and Saurel(2005)}]{perigaud2005compressible}
\bibinfo{author}{G.~Perigaud}, \bibinfo{author}{R.~Saurel},
\newblock \bibinfo{title}{A compressible flow model with capillary effects},
\newblock \bibinfo{journal}{Journal of Computational Physics}
  \bibinfo{volume}{209} (\bibinfo{year}{2005}) \bibinfo{pages}{139--178}.
\bibitem[{Shukla et~al.(2010)Shukla, Pantano, and Freund}]{shukla2010interface}
\bibinfo{author}{R.~K. Shukla}, \bibinfo{author}{C.~Pantano},
  \bibinfo{author}{J.~B. Freund},
\newblock \bibinfo{title}{An interface capturing method for the simulation of
  multi-phase compressible flows},
\newblock \bibinfo{journal}{Journal of Computational Physics}
  \bibinfo{volume}{229} (\bibinfo{year}{2010}) \bibinfo{pages}{7411--7439}.
\bibitem[{Tiwari et~al.(2013)Tiwari, Freund, and Pantano}]{tiwari2013diffuse}
\bibinfo{author}{A.~Tiwari}, \bibinfo{author}{J.~B. Freund},
  \bibinfo{author}{C.~Pantano},
\newblock \bibinfo{title}{A diffuse interface model with immiscibility
  preservation},
\newblock \bibinfo{journal}{Journal of Computational Physics}
  \bibinfo{volume}{252} (\bibinfo{year}{2013}) \bibinfo{pages}{290--309}.
\bibitem[{Chiapolino et~al.(2017)Chiapolino, Saurel, and
  Nkonga}]{chiapolino2017sharpening}
\bibinfo{author}{A.~Chiapolino}, \bibinfo{author}{R.~Saurel},
  \bibinfo{author}{B.~Nkonga},
\newblock \bibinfo{title}{Sharpening diffuse interfaces with compressible
  fluids on unstructured meshes},
\newblock \bibinfo{journal}{Journal of Computational Physics}
  \bibinfo{volume}{340} (\bibinfo{year}{2017}) \bibinfo{pages}{389--417}.
\bibitem[{So et~al.(2012)So, Hu, and Adams}]{so2012anti}
\bibinfo{author}{K.~So}, \bibinfo{author}{X.~Hu}, \bibinfo{author}{N.~Adams},
\newblock \bibinfo{title}{Anti-diffusion interface sharpening technique for
  two-phase compressible flow simulations},
\newblock \bibinfo{journal}{Journal of Computational Physics}
  \bibinfo{volume}{231} (\bibinfo{year}{2012}) \bibinfo{pages}{4304--4323}.
\bibitem[{Ansari and Daramizadeh(2013)}]{ansari2013numerical}
\bibinfo{author}{M.~Ansari}, \bibinfo{author}{A.~Daramizadeh},
\newblock \bibinfo{title}{Numerical simulation of compressible two-phase flow
  using a diffuse interface method},
\newblock \bibinfo{journal}{International Journal of Heat and Fluid Flow}
  \bibinfo{volume}{42} (\bibinfo{year}{2013}) \bibinfo{pages}{209--223}.
\bibitem[{Shukla(2014)}]{shukla2014nonlinear}
\bibinfo{author}{R.~K. Shukla},
\newblock \bibinfo{title}{Nonlinear preconditioning for efficient and accurate
  interface capturing in simulation of multicomponent compressible flows},
\newblock \bibinfo{journal}{Journal of Computational Physics}
  \bibinfo{volume}{276} (\bibinfo{year}{2014}) \bibinfo{pages}{508--540}.
\bibitem[{Coralic and Colonius(2014)}]{coralic2014finite}
\bibinfo{author}{V.~Coralic}, \bibinfo{author}{T.~Colonius},
\newblock \bibinfo{title}{{Finite-volume WENO scheme for viscous compressible
  multicomponent flows}},
\newblock \bibinfo{journal}{Journal of Computational Physics}
  \bibinfo{volume}{274} (\bibinfo{year}{2014}) \bibinfo{pages}{95--121}.
\bibitem[{Wong and Lele(2017)}]{wong2017high}
\bibinfo{author}{M.~L. Wong}, \bibinfo{author}{S.~K. Lele},
\newblock \bibinfo{title}{High-order localized dissipation weighted compact
  nonlinear scheme for shock-and interface-capturing in compressible flows},
\newblock \bibinfo{journal}{Journal of Computational Physics}
  \bibinfo{volume}{339} (\bibinfo{year}{2017}) \bibinfo{pages}{179--209}.
\bibitem[{Garrick et~al.(2017)Garrick, Owkes, and Regele}]{garrick2017finite}
\bibinfo{author}{D.~P. Garrick}, \bibinfo{author}{M.~Owkes},
  \bibinfo{author}{J.~D. Regele},
\newblock \bibinfo{title}{{A finite-volume HLLC-based scheme for compressible
  interfacial flows with surface tension}},
\newblock \bibinfo{journal}{Journal of Computational Physics}
  \bibinfo{volume}{339} (\bibinfo{year}{2017}) \bibinfo{pages}{46--67}.
\bibitem[{Jemison et~al.(2014)Jemison, Sussman, and
  Arienti}]{jemison2014compressible}
\bibinfo{author}{M.~Jemison}, \bibinfo{author}{M.~Sussman},
  \bibinfo{author}{M.~Arienti},
\newblock \bibinfo{title}{Compressible, multiphase semi-implicit method with
  moment of fluid interface representation},
\newblock \bibinfo{journal}{Journal of Computational Physics}
  \bibinfo{volume}{279} (\bibinfo{year}{2014}) \bibinfo{pages}{182--217}.
\bibitem[{Kannan et~al.(2018)Kannan, Kedelty, and Herrmann}]{kannan2018cell}
\bibinfo{author}{K.~Kannan}, \bibinfo{author}{D.~Kedelty},
  \bibinfo{author}{M.~Herrmann},
\newblock \bibinfo{title}{An in-cell reconstruction finite volume method for
  flows of compressible immiscible fluids},
\newblock \bibinfo{journal}{Journal of Computational Physics}
  \bibinfo{volume}{373} (\bibinfo{year}{2018}) \bibinfo{pages}{784--810}.
\bibitem[{Huber et~al.(2015)Huber, Tanguy, B{\'e}ra, and
  Gilles}]{huber2015time}
\bibinfo{author}{G.~Huber}, \bibinfo{author}{S.~Tanguy}, \bibinfo{author}{J.-C.
  B{\'e}ra}, \bibinfo{author}{B.~Gilles},
\newblock \bibinfo{title}{{A time splitting projection scheme for compressible
  two-phase flows. Application to the interaction of bubbles with ultrasound
  waves}},
\newblock \bibinfo{journal}{Journal of Computational Physics}
  \bibinfo{volume}{302} (\bibinfo{year}{2015}) \bibinfo{pages}{439--468}.
\bibitem[{Bai and Deng(2017)}]{bai2017sharp}
\bibinfo{author}{X.~Bai}, \bibinfo{author}{X.~Deng},
\newblock \bibinfo{title}{A sharp interface method for compressible multi-phase
  flows based on the cut cell and ghost fluid methods},
\newblock \bibinfo{journal}{Advances in Applied Mathematics and Mechanics}
  \bibinfo{volume}{9} (\bibinfo{year}{2017}) \bibinfo{pages}{1052--1075}.
\bibitem[{Fu et~al.(2017)Fu, Hu, and Adams}]{fu2017single}
\bibinfo{author}{L.~Fu}, \bibinfo{author}{X.~Y. Hu}, \bibinfo{author}{N.~A.
  Adams},
\newblock \bibinfo{title}{Single-step reinitialization and extending algorithms
  for level-set based multi-phase flow simulations},
\newblock \bibinfo{journal}{Computer Physics Communications}
  \bibinfo{volume}{221} (\bibinfo{year}{2017}) \bibinfo{pages}{63--80}.
\bibitem[{He et~al.(2017)He, Tian, Zhang, and Gao}]{he2017characteristic}
\bibinfo{author}{Z.~He}, \bibinfo{author}{B.~Tian}, \bibinfo{author}{Y.~Zhang},
  \bibinfo{author}{F.~Gao},
\newblock \bibinfo{title}{Characteristic-based and interface-sharpening
  algorithm for high-order simulations of immiscible compressible
  multi-material flows},
\newblock \bibinfo{journal}{Journal of Computational Physics}
  \bibinfo{volume}{333} (\bibinfo{year}{2017}) \bibinfo{pages}{247--268}.
\bibitem[{Mirjalili et~al.(2017)Mirjalili, Jain, and Dodd}]{Mirjalili2017}
\bibinfo{author}{S.~Mirjalili}, \bibinfo{author}{S.~S. Jain},
  \bibinfo{author}{M.~Dodd},
\newblock \bibinfo{title}{Interface-capturing methods for two-phase flows: An
  overview and recent developments},
\newblock \bibinfo{journal}{Center for Turbulence Research Annual Research
  Briefs}  (\bibinfo{year}{2017}) \bibinfo{pages}{117--135}.
\bibitem[{Mani et~al.(2009)Mani, Larsson, and Moin}]{mani2009suitability}
\bibinfo{author}{A.~Mani}, \bibinfo{author}{J.~Larsson},
  \bibinfo{author}{P.~Moin},
\newblock \bibinfo{title}{Suitability of artificial bulk viscosity for
  large-eddy simulation of turbulent flows with shocks},
\newblock \bibinfo{journal}{Journal of Computational Physics}
  \bibinfo{volume}{228} (\bibinfo{year}{2009}) \bibinfo{pages}{7368--7374}.
\bibitem[{Kawai et~al.(2010)Kawai, Shankar, and Lele}]{kawai2010assessment}
\bibinfo{author}{S.~Kawai}, \bibinfo{author}{S.~K. Shankar},
  \bibinfo{author}{S.~K. Lele},
\newblock \bibinfo{title}{Assessment of localized artificial diffusivity scheme
  for large-eddy simulation of compressible turbulent flows},
\newblock \bibinfo{journal}{Journal of Computational Physics}
  \bibinfo{volume}{229} (\bibinfo{year}{2010}) \bibinfo{pages}{1739--1762}.
\bibitem[{Moin and Verzicco(2016)}]{moin2016suitability}
\bibinfo{author}{P.~Moin}, \bibinfo{author}{R.~Verzicco},
\newblock \bibinfo{title}{On the suitability of second-order accurate
  discretizations for turbulent flow simulations},
\newblock \bibinfo{journal}{European Journal of Mechanics-B/Fluids}
  \bibinfo{volume}{55} (\bibinfo{year}{2016}) \bibinfo{pages}{242--245}.
\bibitem[{Olsson and Kreiss(2005)}]{olsson2005conservative}
\bibinfo{author}{E.~Olsson}, \bibinfo{author}{G.~Kreiss},
\newblock \bibinfo{title}{A conservative level set method for two phase flow},
\newblock \bibinfo{journal}{Journal of Computational Physics}
  \bibinfo{volume}{210} (\bibinfo{year}{2005}) \bibinfo{pages}{225--246}.
\bibitem[{Olsson et~al.(2007)Olsson, Kreiss, and
  Zahedi}]{olsson2007conservative}
\bibinfo{author}{E.~Olsson}, \bibinfo{author}{G.~Kreiss},
  \bibinfo{author}{S.~Zahedi},
\newblock \bibinfo{title}{{A conservative level set method for two phase flow
  II}},
\newblock \bibinfo{journal}{Journal of Computational Physics}
  \bibinfo{volume}{225} (\bibinfo{year}{2007}) \bibinfo{pages}{785--807}.
\bibitem[{Chiu and Lin(2011)}]{chiu2011conservative}
\bibinfo{author}{P.-H. Chiu}, \bibinfo{author}{Y.-T. Lin},
\newblock \bibinfo{title}{A conservative phase field method for solving
  incompressible two-phase flows},
\newblock \bibinfo{journal}{Journal of Computational Physics}
  \bibinfo{volume}{230} (\bibinfo{year}{2011}) \bibinfo{pages}{185--204}.
\bibitem[{Mirjalili et~al.(2020)Mirjalili, Ivey, and Mani}]{Mirjalili2020}
\bibinfo{author}{S.~Mirjalili}, \bibinfo{author}{C.~B. Ivey},
  \bibinfo{author}{A.~Mani},
\newblock \bibinfo{title}{A conservative diffuse interface method for two-phase
  flows with provable boundedness properties},
\newblock \bibinfo{journal}{Journal of Computational Physics}
  \bibinfo{volume}{401} (\bibinfo{year}{2020}) \bibinfo{pages}{109006}.
\bibitem[{Patankar(1980)}]{patankar1980numerical}
\bibinfo{author}{S.~Patankar}, \bibinfo{title}{Numerical heat transfer and
  fluid flow}, \bibinfo{publisher}{CRC press}, \bibinfo{year}{1980}.
\bibitem[{Versteeg and Malalasekera(2007)}]{versteeg2007introduction}
\bibinfo{author}{H.~K. Versteeg}, \bibinfo{author}{W.~Malalasekera},
  \bibinfo{title}{An introduction to computational fluid dynamics: the finite
  volume method}, \bibinfo{publisher}{Pearson Education}, \bibinfo{year}{2007}.
\bibitem[{Laney(1998)}]{laney1998computational}
\bibinfo{author}{C.~B. Laney}, \bibinfo{title}{Computational gasdynamics},
  \bibinfo{publisher}{Cambridge University Press}, \bibinfo{year}{1998}.
\bibitem[{Harten(1983)}]{harten1983high}
\bibinfo{author}{A.~Harten},
\newblock \bibinfo{title}{High resolution schemes for hyperbolic conservation
  laws},
\newblock \bibinfo{journal}{Journal of Computational Physics}
  \bibinfo{volume}{49} (\bibinfo{year}{1983}) \bibinfo{pages}{357--393}.
\bibitem[{Mirjalili and Mani(2019)}]{mirjalili2019consistent}
\bibinfo{author}{S.~Mirjalili}, \bibinfo{author}{A.~Mani},
\newblock \bibinfo{title}{Consistent, energy-conserving momentum transport for
  simulations of two-phase flows using the phase field equations},
\newblock \bibinfo{journal}{arXiv preprint arXiv:1912.10096}
  (\bibinfo{year}{2019}).
\bibitem[{Brackbill et~al.(1992)Brackbill, Kothe, and
  Zemach}]{brackbill1992continuum}
\bibinfo{author}{J.~U. Brackbill}, \bibinfo{author}{D.~B. Kothe},
  \bibinfo{author}{C.~Zemach},
\newblock \bibinfo{title}{A continuum method for modeling surface tension},
\newblock \bibinfo{journal}{Journal of Computational Physics}
  \bibinfo{volume}{100} (\bibinfo{year}{1992}) \bibinfo{pages}{335--354}.
\bibitem[{Jain et~al.(2018)Jain, Mani, and Moin}]{Jain2018}
\bibinfo{author}{S.~S. Jain}, \bibinfo{author}{A.~Mani},
  \bibinfo{author}{P.~Moin},
\newblock \bibinfo{title}{A conservative diffuse-interface method for the
  simulation of compressible two-phase flows with turbulence and acoustics},
\newblock \bibinfo{journal}{Center for Turbulence Research Annual Research
  Briefs}  (\bibinfo{year}{2018}) \bibinfo{pages}{47--64}.
\bibitem[{Bell et~al.(1989)Bell, Colella, and Glaz}]{BELL1989257}
\bibinfo{author}{J.~B. Bell}, \bibinfo{author}{P.~Colella},
  \bibinfo{author}{H.~M. Glaz},
\newblock \bibinfo{title}{{A second-order projection method for the
  incompressible Navier-Stokes equations}},
\newblock \bibinfo{journal}{Journal of Computational Physics}
  \bibinfo{volume}{85} (\bibinfo{year}{1989}) \bibinfo{pages}{257 -- 283}.
\bibitem[{Rider and Kothe(1998)}]{rider1998reconstructing}
\bibinfo{author}{W.~J. Rider}, \bibinfo{author}{D.~B. Kothe},
\newblock \bibinfo{title}{Reconstructing volume tracking},
\newblock \bibinfo{journal}{Journal of Computational Physics}
  \bibinfo{volume}{141} (\bibinfo{year}{1998}) \bibinfo{pages}{112--152}.
\bibitem[{Tryggvason et~al.(2011)Tryggvason, Scardovelli, and
  Zaleski}]{tryggvason2011direct}
\bibinfo{author}{G.~Tryggvason}, \bibinfo{author}{R.~Scardovelli},
  \bibinfo{author}{S.~Zaleski}, \bibinfo{title}{Direct numerical simulations of
  gas--liquid multiphase flows}, \bibinfo{publisher}{Cambridge University
  Press}, \bibinfo{year}{2011}.
\bibitem[{Chiodi and Desjardins(2017)}]{chiodi2017reformulation}
\bibinfo{author}{R.~Chiodi}, \bibinfo{author}{O.~Desjardins},
\newblock \bibinfo{title}{A reformulation of the conservative level set
  reinitialization equation for accurate and robust simulation of complex
  multiphase flows},
\newblock \bibinfo{journal}{Journal of Computational Physics}
  \bibinfo{volume}{343} (\bibinfo{year}{2017}) \bibinfo{pages}{186--200}.
\bibitem[{Ii et~al.(2012)Ii, Sugiyama, Takeuchi, Takagi, Matsumoto, and
  Xiao}]{ii2012interface}
\bibinfo{author}{S.~Ii}, \bibinfo{author}{K.~Sugiyama},
  \bibinfo{author}{S.~Takeuchi}, \bibinfo{author}{S.~Takagi},
  \bibinfo{author}{Y.~Matsumoto}, \bibinfo{author}{F.~Xiao},
\newblock \bibinfo{title}{{An interface capturing method with a continuous
  function: The THINC method with multi-dimensional reconstruction}},
\newblock \bibinfo{journal}{Journal of Computational Physics}
  \bibinfo{volume}{231} (\bibinfo{year}{2012}) \bibinfo{pages}{2328--2358}.
\bibitem[{Brennen(2013)}]{brennen2013cavitation}
\bibinfo{author}{C.~E. Brennen}, \bibinfo{title}{Cavitation and bubble
  dynamics}, \bibinfo{publisher}{Cambridge University Press},
  \bibinfo{year}{2013}.
\bibitem[{Minnaert(1933)}]{minnaert1933xvi}
\bibinfo{author}{M.~Minnaert},
\newblock \bibinfo{title}{{XVI. On musical air-bubbles and the sounds of
  running water}},
\newblock \bibinfo{journal}{The London, Edinburgh, and Dublin Philosophical
  Magazine and Journal of Science} \bibinfo{volume}{16} (\bibinfo{year}{1933})
  \bibinfo{pages}{235--248}.
\bibitem[{Blackstock(2000)}]{blackstock2000fundamentals}
\bibinfo{author}{D.~T. Blackstock}, \bibinfo{title}{Fundamentals of physical
  acoustics}, \bibinfo{publisher}{John Wiley \& Sons}, \bibinfo{year}{2000}.
\bibitem[{Pierce and Beyer(1990)}]{pierce1990acoustics}
\bibinfo{author}{A.~D. Pierce}, \bibinfo{author}{R.~T. Beyer},
  \bibinfo{title}{Acoustics: An introduction to its physical principles and
  applications}, \bibinfo{publisher}{Springer International Publishing},
  \bibinfo{year}{1990}.
\bibitem[{Liang et~al.(2016)Liang, Li, Shi, and Chai}]{liang2016lattice}
\bibinfo{author}{H.~Liang}, \bibinfo{author}{Q.~Li}, \bibinfo{author}{B.~Shi},
  \bibinfo{author}{Z.~Chai},
\newblock \bibinfo{title}{{Lattice Boltzmann simulation of three-dimensional
  Rayleigh-Taylor instability}},
\newblock \bibinfo{journal}{Physical Review E} \bibinfo{volume}{93}
  (\bibinfo{year}{2016}) \bibinfo{pages}{033113}.
\bibitem[{Wilkinson and Jacobs(2007)}]{wilkinson2007experimental}
\bibinfo{author}{J.~Wilkinson}, \bibinfo{author}{J.~W. Jacobs},
\newblock \bibinfo{title}{{Experimental study of the single-mode
  three-dimensional Rayleigh-Taylor instability}},
\newblock \bibinfo{journal}{Physics of Fluids} \bibinfo{volume}{19}
  (\bibinfo{year}{2007}) \bibinfo{pages}{124102}.
\bibitem[{Sharp(1983)}]{sharp1983overview}
\bibinfo{author}{D.~H. Sharp}, \bibinfo{title}{{Overview of Rayleigh-taylor
  instability}}, \bibinfo{type}{Technical Report}, Los Alamos National Lab., NM
  (USA), \bibinfo{year}{1983}.
\bibitem[{Glimm et~al.(2002)Glimm, Li, and Lin}]{Glimm2002}
\bibinfo{author}{J.~Glimm}, \bibinfo{author}{X.-l. Li}, \bibinfo{author}{A.-D.
  Lin},
\newblock \bibinfo{title}{{Nonuniform Approach to Terminal Velocity for Single
  Mode Rayleigh-Taylor Instability}},
\newblock \bibinfo{journal}{Acta Mathematicae Applicatae Sinica}
  \bibinfo{volume}{18} (\bibinfo{year}{2002}) \bibinfo{pages}{1--8}.
\bibitem[{Ramaprabhu et~al.(2006)Ramaprabhu, Dimonte, Young, Calder, and
  Fryxell}]{ramaprabhu2006limits}
\bibinfo{author}{P.~Ramaprabhu}, \bibinfo{author}{G.~Dimonte},
  \bibinfo{author}{Y.-N. Young}, \bibinfo{author}{A.~Calder},
  \bibinfo{author}{B.~Fryxell},
\newblock \bibinfo{title}{{Limits of the potential flow approach to the
  single-mode Rayleigh-Taylor problem}},
\newblock \bibinfo{journal}{Physical Review E} \bibinfo{volume}{74}
  (\bibinfo{year}{2006}) \bibinfo{pages}{066308}.
\bibitem[{Ramaprabhu et~al.(2012)Ramaprabhu, Dimonte, Woodward, Fryer,
  Rockefeller, Muthuraman, Lin, and Jayaraj}]{ramaprabhu2012late}
\bibinfo{author}{P.~Ramaprabhu}, \bibinfo{author}{G.~Dimonte},
  \bibinfo{author}{P.~Woodward}, \bibinfo{author}{C.~Fryer},
  \bibinfo{author}{G.~Rockefeller}, \bibinfo{author}{K.~Muthuraman},
  \bibinfo{author}{P.-H. Lin}, \bibinfo{author}{J.~Jayaraj},
\newblock \bibinfo{title}{{The late-time dynamics of the single-mode
  Rayleigh-Taylor instability}},
\newblock \bibinfo{journal}{Physics of Fluids} \bibinfo{volume}{24}
  (\bibinfo{year}{2012}) \bibinfo{pages}{074107}.
\bibitem[{Wei and Livescu(2012)}]{wei2012late}
\bibinfo{author}{T.~Wei}, \bibinfo{author}{D.~Livescu},
\newblock \bibinfo{title}{{Late-time quadratic growth in single-mode
  Rayleigh-Taylor instability}},
\newblock \bibinfo{journal}{Physical Review E} \bibinfo{volume}{86}
  (\bibinfo{year}{2012}) \bibinfo{pages}{046405}.
\bibitem[{Zhou(2017)}]{zhou2017rayleigh}
\bibinfo{author}{Y.~Zhou},
\newblock \bibinfo{title}{{Rayleigh--Taylor and Richtmyer--Meshkov instability
  induced flow, turbulence, and mixing. I}},
\newblock \bibinfo{journal}{Physics Reports} \bibinfo{volume}{720}
  (\bibinfo{year}{2017}) \bibinfo{pages}{1--136}.
\bibitem[{Goncharov(2002)}]{goncharov2002analytical}
\bibinfo{author}{V.~Goncharov},
\newblock \bibinfo{title}{{Analytical model of nonlinear, single-mode,
  classical Rayleigh-Taylor instability at arbitrary Atwood numbers}},
\newblock \bibinfo{journal}{Physical Review Letters} \bibinfo{volume}{88}
  (\bibinfo{year}{2002}) \bibinfo{pages}{134502}.

\end{thebibliography}







\end{document}